\documentclass[11pt,reqno]{amsart}
\usepackage{amssymb,mathrsfs,graphicx,enumerate}
\usepackage{amsmath,amsfonts,amssymb,amscd,amsthm,bbm}
\usepackage[retainorgcmds]{IEEEtrantools}
\usepackage{colortbl}
\allowdisplaybreaks
\topmargin-0.1in \textwidth6.in \textheight8.5in \oddsidemargin0in
\evensidemargin0in
\title[Quadratically separable states to the Lohe tensor model]{Existence and emergent dynamics of quadratically separable states to the Lohe tensor model}

\author[Ha]{Seung-Yeal Ha}
\address[Seung-Yeal Ha]{\newline Department of Mathematical Sciences and Research Institute of Mathematics, \newline Seoul National University, Seoul 08826 and \newline
Korea Institute for Advanced Study, Hoegiro 85, Seoul, 02455, Republic of Korea} 
\email{syha@snu.ac.kr}
 
\author[Kim]{Dohyun Kim}
\address[Dohyun Kim]{\newline School of Mathematics, Statistics and Data Science,  \newline Sungshin Women's University, Seoul 02844, Republic of Korea}
\email{dohyunkim@sungshin.ac.kr}

\author[Park]{Hansol Park}
\address[Hansol Park]{\newline Department of Mathematical Sciences, \newline Seoul National University, Seoul 08826, Republic of Korea}
\email{hansol960612@snu.ac.kr}

\newtheorem{theorem}{Theorem}[section]
\newtheorem{lemma}{Lemma}[section]
\newtheorem{corollary}{Corollary}[section]
\newtheorem{proposition}{Proposition}[section]
\newtheorem{remark}{Remark}[section]

\newtheorem{definition}{Definition}[section]

\newcommand{\bbr}{\mathbb R}

\newcommand{\bbs}{\mathbb S}
\newcommand{\bbz}{\mathbb Z}

\newcommand{\bbc}{\mathbb C}

\newcommand{\dg}{  {\dagger}}
\renewcommand{\d}{ {\textup{d}}}
\newcommand{\dt}{   {\textup{d}t}}
\newcommand{\tF}{  {\textup{F}}}
\newcommand{\kp}{ {\kappa}}
\newcommand{\Un} {\mathbf{U}(n)}
\newcommand{\Um} {\mathbf{U}(m)}
\newcommand{\SOn} {\mathbf{SO}(n)}
\newcommand{\SOm} {\mathbf{SO}(m)}

\newcommand{\tr}{\textup{tr}}
\renewcommand{\Re}{\textup{Re}}
\newcommand{\mi}{\mathrm{i}}

\begin{document}

\date{\today}

\subjclass[2010]{82C10, 82C22, 35B37} 

\keywords {Aggregation, double sphere model, gradient flow, Kuramoto model, Lohe matrix model,  Lohe tensor model, synchronization}

\thanks{\textbf{Acknowledgment.} The work of S.-Y. Ha is supported by National Research Foundation of Korea (NRF-2020R1A2C3A01003881).}

\begin{abstract}
A tensor is a multi-dimensional array of complex numbers, and the Lohe tensor model is an aggregation model on the space of tensors with the same rank and size. It incorporates previously well-studied aggregation models on the space of low-rank tensors such as the Kuramoto model, Lohe sphere and matrix models as special cases. Due to its structural complexities in cubic interactions for the Lohe tensor model, explicit construction of solutions with specific structures looks daunting. Recently, we obtained completely separable states by associating rank-1 tensors. In this paper, we further investigate  another type of solutions, namely ``{\it quadratically separable states}" consisting of tensor products of matrices and their component rank-2 tensors are solutions to the double matrix model whose emergent dynamics can be studied using the same methodology of the Lohe matrix model.  
\end{abstract}

\maketitle \centerline{\date}


\section{Introduction} \label{sec:1}
\setcounter{equation}{0}
Collective behaviors often appear in   large population systems for weakly coupled oscillators or interacting units \cite{B-B, C-S, D-M-T-H} in diverse scientific disciplines including biology, social sciences, engineering with various space and time scales, for instance, colonies of bacteria \cite{P-P-L-P}, school of fish \cite{C-L-N-S, T-K-I-H}, flock of starlings \cite{Y-S-C-G},  pedestrian dynamics \cite{H-B-J-W}, opinion dynamics \cite{H-K}, power grid networks \cite{M-M-A-N}, etc. For a brief introduction to collective dynamics, we refer the reader to survey articles and book \cite{ A-B,C-H-L,H-K-P-Z, P-R-K}. Mathematical approach toward the understanding of collective motions has been established in literature  by Winfree \cite{Wi} and  Kuramoto \cite{Ku} in the 1970s, and by Vicsek \cite{Vi} in the 1990s. After their remarkable works, the aforementioned models have been extended in several directions, particularly in high-dimensional extension to Riemannian manifolds including the hypersurfaces \cite{A-M-D, H-K-R2} and the matrix Lie group \cite{H-K-R} which have attracted lots of interest thanks to its powerful application, for instance, nonconvex optimization. In this work, among high-dimensional models, we are concerned with the Lohe tensor model in \cite{H-P}. 

Next, we briefly discuss tensors and an aggregation model on the space of tensors, namely ``{\it the Lohe tensor model}".  
A rank-$m$ complex valued tensor can be represented as a  multi-dimensional array of complex numbers with multi-indices. The rank of a tensor is the number of indices, say a rank-$m$ tensor with size $d_1 \times \cdots \times d_m$ is an element of ${\mathbb C}^{d_1 \times \cdots \times d_m}$. For example, scalars, vectors and matrices correspond to rank-0, 1 and 2 tensors, respectively.  Let $T$ be a rank-$m$ tensor with a size $d_1 \times \cdots \times d_m$.  Then, we denote $(\alpha_1, \cdots, \alpha_m)$-th component of the tensor $T$ by $[T]_{\alpha_1 \cdots \alpha_m}$, and we set $\overline{T}$ by the rank-$m$ tensor whose components are the complex conjugate of the elements in $T$:
\[ [\overline{T}]_{\alpha_1 \cdots \alpha_m} :=\overline{[T]_{\alpha_1 \cdots \alpha_m}}. \]
Let ${\mathcal T}_m(\bbc; d_1 \times\cdots\times d_m)$ be the collection of all rank-$m$ tensors with size $d_1 \times\cdots\times d_m$. Then, it is a complex vector space.  Several well-known first-order aggregation models, for instance, the Kuramoto model \cite{Ku}, the swarm sphere model \cite{O1} and matrix models \cite{D, Lo09} can be regarded as aggregation models on ${\mathcal T}_0(\bbc; 1), {\mathcal T}_1(\bbr; d)$ and ${\mathcal T}_2(\bbc; d \times d)$, respectively. Let  $A_j$ be the skew-hermitian rank-$2m$ tensor with size $(d_1 \times\cdots\times d_m) \times (d_1 \times \cdots\times d_m)$. For simplicity, we introduce handy notation as follows: for $T \in {\mathcal T}_m(\bbc; d_1 \times \cdots\times d_m)$ and $A \in  {\mathcal T}_{2m}(\bbc; d_1 \times\cdots\times  d_m \times d_1 \times \cdots\times d_m)$, we set
\begin{align*}
\begin{aligned}
& [T]_{\alpha_{*}}:=[T]_{\alpha_{1}\alpha_{2}\cdots\alpha_{m}}, \quad [T]_{\alpha_{*0}}:=[T]_{\alpha_{10}\alpha_{20}\cdots\alpha_{m0}},  \quad  [T]_{\alpha_{*1}}:=[T]_{\alpha_{11}\alpha_{21}\cdots\alpha_{m1}}, \\
&  [T]_{\alpha_{*i_*}}:=[T]_{\alpha_{1i_1}\alpha_{2i_2}\cdots\alpha_{mi_m}}, \quad [T]_{\alpha_{*(1-i_*)}}:=[T]_{\alpha_{1(1-i_1)}\alpha_{2(1-i_2)}\cdots\alpha_{m(1-i_m)}}, \\
&  [A]_{\alpha_*\beta_*}:=[A]_{\alpha_{1}\alpha_{2}\cdots\alpha_{m}\beta_1\beta_2\cdots\beta_{m}}.
\end{aligned}
\end{align*}
Then, the Lohe tensor model in a component form reads as follows:
\begin{equation}
\begin{cases} \label{LT}
\displaystyle \dot{[T_j]}_{\alpha_{*0}} = [A_j]_{\alpha_{*0}\alpha_{*1}}[T_j]_{\alpha_{*1}} \\
\displaystyle \hspace{1.5cm} + \sum_{i_* \in \{0, 1\}^m}\kappa_{i_*} \Big([T_c]_{\alpha_{*i_*}}\bar{[T_j]}_{\alpha_{*1}}[T_j]_{\alpha_{*(1-i_*)}}-[T_j]_{\alpha_{*i_*}}\bar{[T_c]}_{\alpha_{*1}}[T_j]_{\alpha_{*(1-i_*)}} \Big), \\
\displaystyle  \bar{[A_j]}_{\alpha_{*0}\alpha_{*1}}=-[A_j]_{\alpha_{*1}\alpha_{*0}},
\end{cases}
\end{equation}
where $\kappa_{i_*}$'s are nonnegative coupling strengths. \newline

Before we discuss our main issues, we introduce a concept of a ``\textit{quadratically separable state"} for the Lohe tensor model \eqref{LT}. 
\begin{definition} \label{D1.1}
Let $\{T_i \}$ be a {\it quadratically separable state}   to \eqref{LT}, if it is decomposed as a tensor product of rank-2 tensors (or matrices):
\begin{equation*}
T_i = U_i^1 \otimes U_i^2 \otimes \cdots \otimes U_i^m, \quad U_i^k \in \bbc^{d_1^k \times d_2^k},\quad \|U_i^k\|_\tF = 1, \quad 1\leq i\leq N,\quad 1\leq k \leq m,
\end{equation*} 
where $\|\cdot\|_\tF$ is the Frobenius norm induced by Frobenius inner product: for matrices $A$ and $B$, 
\[ \langle A,B\rangle_\tF := \textup{tr}(A^\dg B), \quad \| A \|_\tF := \sqrt{\textup{tr}(A^\dg A)}. \] 
\end{definition}

\noindent In this paper, we are interested in the following simple questions:
\begin{itemize}
\item
(Q1):~Are there  quadratically separable states  for the Lohe tensor model?
\vspace{0.2cm}
\item
(Q2):~If so, do they exhibit   collective behaviors under which circumstances?
\end{itemize}
Our main results deal with the raised two questions (Q1) and (Q2). More precisely, our main results of this paper can be summarized as follows. 

First, we introduce the {\it double matrix model} induced from the Lohe tensor model whose elements have rank-4 with a specific condition on natural frequencies $B_j$ and $C_j$ (see \eqref{C-11}): 
\begin{align} \label{A-3}
\begin{cases}
 \dot{U}_j=B_jU_j+\displaystyle\frac{\kappa_{1}}{N}\sum_{k=1}^N\left(
\langle V_j, V_k\rangle_\tF~U_kU_j^\dagger U_j
-\langle V_k, V_j\rangle_\tF~U_jU_k^\dagger U_j\right)\\
 \hspace{2cm}+\displaystyle\frac{\kappa_{2}}{N}\sum_{k=1}^N\left(
\langle V_j, V_k\rangle_\tF~U_jU_j^\dagger U_k 
-\langle V_k, V_j\rangle_\tF~U_jU_k^\dagger U_j\right),\\
 \dot{V}_j=C_jV_j+\displaystyle\frac{\kappa_{1}}{N}\sum_{k=1}^N\left(
\langle U_j, U_k\rangle_\tF~V_kV_j^\dagger V_j
-\langle U_k, U_j\rangle_\tF~V_jV_k^\dagger V_j\right)\\
 \hspace{2cm}+\displaystyle\frac{\kappa_{2}}{N}\sum_{k=1}^N\left(
\langle U_j, U_k\rangle_\tF~V_jV_j^\dagger V_k
-\langle U_k, U_j\rangle_\tF~V_jV_k^\dagger V_j\right),\\
\end{cases}
\end{align}
where $B_j\in\bbc^{d_1\times d_2\times d_1\times d_2}$ and $C_j\in \bbc^{d_3\times d_4\times d_3\times d_4}$  are skew-symmetric rank-4 tensors. For a solution $\{(U_i,V_i)\}$ to the double matrix model \eqref{A-3}, a special solution $T_i$ to \eqref{LT} can be represented as follows:
\begin{equation*}
T_i(t)\equiv U_i(t)\otimes V_i(t),\quad t>0.
\end{equation*}
Precisely, if $T_i$ is initially decomposed into the tensor product of two matrices $U_i$ and $V_i$, then its separability is propagated along the flow for all time. For details, we refer the reader to Section \ref{sec:3}. 

Second, we study emergent dynamics of the double matrix model by investigating several aggregation quantities:
\begin{align*}
&\mathcal D(\mathcal U(t)) := \max_{1\leq i,j\leq N} \|U_i(t) - U_j(t)\|_\tF,\quad  \mathcal  S(\mathcal U(t)):= \max_{1\leq i,j\leq N} |n- \langle U_i,U_j\rangle_\tF (t)|, \\
&\mathcal D(\mathcal V(t)) := \max_{1\leq i,j\leq N} \|V_i(t) - V_j(t)\|_\tF, \quad \mathcal S(\mathcal V(t)):= \max_{1\leq i,j\leq N} |m- \langle V_i,V_j\rangle_\tF (t)|. 
\end{align*}
For a homogeneous ensemble (i.e. $B_i = B_j$ and $C_i = C_j$ for all $i$ and $j$.) we show that system \eqref{A-3} exhibits complete aggregation in which all  relative distances for $\{U_i\}$ and $\{V_i\}$ tend to zero respectively (see Theorem \ref{T4.1}). On the other hand, for a heterogeneous ensemble ($B_i \neq B_j$ and $C_i \neq C_j$ in general)  complete aggregation (one-point collapse) would not be expected. Instead, our concern is dedicated to emergence of locked states in which relative distances converge to positive definite values (see Theorem \ref{T4.2}). For our analytical results, we need to assume that the size of unitary matrices satisfy $\min(n, m)>4\sqrt{\max(n, m)}$ that requires restriction on $n,m$.  In fact, this technical assumption on the sizes is mainly due to the fact that elements are complex-valued. Thus, when the unitary groups $\Un$ and  $\Um$ are replaced by the special orthogonal groups $\textbf{SO}(n)$ and $\textbf{SO}(m)$, such restriction on $n,m$ would be removed (see Theorem \ref{TC.1} and Theorem \ref{TC.2}). \newline

The rest of the paper is organized as follows. In Section \ref{sec:2}, we begin with previous results on the relation between the Lohe tensor model and the swarm double sphere model presented in \cite{H-K-P}. As a natural extension, we construct the double matrix model in Section \ref{sec:3} and study   existence and uniqueness of quadratically separable states. In Section \ref{sec:4}, we study emergent dynamics of the double matrix model for both homogeneous and heterogeneous ensembles. Next, the double matrix model is further generalized to the multi matrix model in Section \ref{sec:5}.  Finally, Section \ref{sec:6} is devoted to a brief summary of the paper and future work. In Appendix \ref{sec:app.A} and Appendix \ref{sec:app.B}, we provide proofs of Lemma \ref{L4.1} and Lemma \ref{L4.2}, respectively. In Appendix \ref{sec:app.C}, emergent dynamics of the double matrix model on $\textbf{SO}(n) \times \textbf{SO}(m)$ is provided.

\vspace{0.5cm}

 For simplicity of presentation, we use the following abbreviated jargons: 
\begin{itemize}
\item LT model: Lohe tensor model, $\quad$ LM model:~Lohe matrix model, \\
\item SDS model: swarm double sphere model,$\quad$ SMS model: swarm multi-sphere model, \\
\item DM model: double matrix model, $\quad$ DUM model: double unitary matrix model, \\
\item DSOM model:  double special orthogonal matrix model, \quad QSS: quadratically separable state, \\
\item MM model: multiple matrix model, \quad MUM model: multiple unitary matrix model.
\end{itemize}
\section{Preliminaries} \label{sec:2}
\setcounter{equation}{0}
In this section, we review how the SDS model \cite{H-K-P, Lo20} can be related to the LT model via completely separable states, and discuss   extension of the SDS model to the SMS model leading to the DM model.
\subsection{From the LT model to the  SDS model} \label{sec:2.1} 
In this subsection, we   briefly recall the relation between the LT model and the SDS model which was first observed in \cite{H-K-P}. In \cite{Lo20}, Lohe introduced a first-order aggregation model on the product of two unit spheres $(u_i,v_i) \in \bbs^{d_1-1} \times \bbs^{d_2-1}$:
\begin{align}\label{B-0} 
\begin{cases}
\dot{u}_i=\Omega_iu_i+\displaystyle\frac{\kappa}{N}\sum_{j=1}^N\langle v_i, v_j\rangle (u_j-\langle u_i, u_j\rangle u_i) ,\quad t>0, \\
\dot{v}_i=\Lambda_i v_i+\displaystyle\frac{\kappa}{N}\sum_{j=1}^N \langle u_i, u_j\rangle(v_j-\langle v_i, v_j\rangle v_i),\\
(u_i, v_i)(0)=(u_i^0, v_i^0) \in\bbs^{d_1-1} \times \bbs^{d_2-1},\quad1\leq i\leq N, 
\end{cases}
\end{align}
where $\Omega_i$ and $\Lambda_i$ are skew-symmetric matrices of sizes $d_1 \times d_1$ and $d_2 \times d_2$, respectively: 
\begin{equation*} \label{B-1}
\Omega_i^\top = -\Omega_i, \quad \Lambda_i^\top = -\Lambda_i, \quad 1 \leq i \leq N,
\end{equation*}
and $\kp$ denotes the (uniform) coupling strength. \newline

On the other hand, if we choose the following parameters:
\begin{equation*}
m=2,\quad \kp_{00} = \kp_{11} =0,\quad \kp_{01} = \kp_{10} = \kp,
\end{equation*}
system \eqref{LT} reduces to the generalized Lohe matrix model in \cite{H-P3}:
\begin{equation}\label{B-2} 
\begin{cases}
\displaystyle {\dot T}_i =A_i T_i +\kappa(T_cT_i^\dagger T_i -T_i T_c^\dagger T_i)+\kappa(T_iT_i^\dagger T_c-T_i T_c^\dagger T_i), \quad t > 0, \\
\displaystyle T_i(0)=T_i^0,\quad \|T_i^0\|_\tF =1,\quad  T_c : = \frac{1}{N} \sum_{k=1}^{N} T_k, \quad i = 1, \cdots, N.
\end{cases}
\end{equation}
Next, we present how models \eqref{B-1} and \eqref{B-2} can be viewed as equivalent systems under  {\it well-prepared} natural frequency tensors and initial data in the following proposition.

\begin{proposition} \cite{H-K-P}  \label{P2.1} 
Systems \eqref{B-1} and \eqref{B-2} are equivalent in the following sense.
 \begin{enumerate}
 \item
 Suppose $\{(u_i,v_i) \}$ is a solution to \eqref{B-0}. Then, rank-2 real tensors $T_i$ defined by $T_i  :=u_i \otimes v_i$ is  a solution   to \eqref{B-2} with initial data $T_i^0 = u_i^0 \otimes v_i^0$ and well-prepared free flow tensors $A_i$:
 \begin{equation} \label{B-3}
 A_i T_i:= \Omega_i T_i + T_i \Lambda_i^\top.
 \end{equation}
 \item
 Suppose $T_i$ is a solution to \eqref{B-2}-\eqref{B-3} with completely factorized initial data:
 \begin{equation*}  
 T_i^0 =: u_i^0 \otimes v_i^0, \quad 1 \leq i, j \leq N,
 \end{equation*}
 for rank-1 real tensors $u_i^0 \in \bbs^{d_11}$ and $v_i^0 \in \bbs^{d_2-1}$. Then, there exists a pair of unit vectors $(u_i(t), v_i(t))$ such that 
 \[  T_i(t) = u_i(t) \otimes v_i(t), \quad t>0, \]
 where $(u_i,v_i)$ is a solution to \eqref{B-0} with initial data $(u_i,v_i)(0) = (u_i^0,v_i^0)$.
 \end{enumerate}
\end{proposition}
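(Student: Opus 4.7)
The plan is to verify both directions by direct substitution, exploiting the rank-1 matrix representation $T_i = u_i v_i^\top$ of a separable tensor (so that $T_i^\dagger = v_i u_i^\top$, since the vectors are real). Before tackling either implication, I would reduce the three cubic building blocks appearing on the right-hand side of \eqref{B-2},
\[ T_c T_i^\dagger T_i, \quad T_i T_c^\dagger T_i, \quad T_i T_i^\dagger T_c, \]
each of which collapses to a weighted sum of rank-1 matrices because inner factors of the form $v_i^\top v_k$ and $u_i^\top u_k$ become scalars, and the unit-norm conditions $u_i^\top u_i = v_i^\top v_i = 1$ eliminate the remaining middle contractions. The outcomes are (up to constants) $u_k \otimes v_i$, $\langle u_i,u_k\rangle u_i \otimes v_i$, and $u_i \otimes v_k$ weighted by appropriate inner products of the other factors.

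For part (1), I would then differentiate $T_i = u_i \otimes v_i$ via the product rule, getting $\dot T_i = \dot u_i \otimes v_i + u_i \otimes \dot v_i$, and substitute \eqref{B-0}. The key algebraic observation is that the first coupling combination $\kappa(T_c T_i^\dagger T_i - T_i T_c^\dagger T_i)$ factors as $\bigl[\tfrac{\kappa}{N}\sum_k \langle v_i,v_k\rangle(u_k - \langle u_i,u_k\rangle u_i)\bigr]\otimes v_i$, reproducing exactly the coupling in $\dot u_i$, while $\kappa(T_i T_i^\dagger T_c - T_i T_c^\dagger T_i)$ factors as $u_i \otimes \bigl[\tfrac{\kappa}{N}\sum_k \langle u_i,u_k\rangle (v_k - \langle v_i,v_k\rangle v_i)\bigr]$, reproducing the coupling in $\dot v_i$. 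The free-flow part is immediate: the definition \eqref{B-3} gives $A_i T_i = (\Omega_i u_i)\otimes v_i + u_i \otimes (\Lambda_i v_i)$, matching the drift terms $\Omega_i u_i$ and $\Lambda_i v_i$ of \eqref{B-0}.

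For part (2), I would invoke ODE uniqueness. Given a solution $T_i$ of \eqref{B-2}--\eqref{B-3} with factorized initial datum $T_i^0 = u_i^0 \otimes v_i^0$, let $(u_i(t),v_i(t))$ be the unique solution of \eqref{B-0} on $\bbs^{d_1-1}\times\bbs^{d_2-1}$ with the same initial data (global existence is standard since the SDS vector field is smooth and tangent to the product of spheres). By part (1), the tensor $\tilde T_i(t) := u_i(t)\otimes v_i(t)$ solves the same initial value problem \eqref{B-2}--\eqref{B-3} as $T_i$. Since the right-hand side of \eqref{B-2} is polynomial, hence locally Lipschitz, and since $\|T_i\|_\tF$ is conserved by skew-hermiticity of $A_i$, solutions are globally unique, forcing $T_i(t)\equiv \tilde T_i(t)$ for all $t\geq 0$.

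The argument is essentially mechanical and no serious obstacle arises. The only real bookkeeping issue lies in the cubic reduction for part (1): one must carefully track which inner product comes from the $u$-slot and which from the $v$-slot in order to see the coupling kernel of the matrix model split into a ``$u$-projection'' piece $I-u_iu_i^\top$ and a ``$v$-projection'' piece $I-v_iv_i^\top$ acting on the two factors separately. Once this splitting is made explicit, the equivalence is transparent, and the proposition confirms that completely separable states $u_i\otimes v_i$ form an invariant submanifold of \eqref{B-2} under the prescribed free-flow tensor \eqref{B-3}.
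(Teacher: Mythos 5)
Your proof is correct and takes essentially the same route as the source: this paper itself does not prove Proposition \ref{P2.1} (it is quoted from \cite{H-K-P}), but the argument it carries out for the rank-4 analogue (Proposition \ref{P3.1} in Section \ref{sec:3}) is exactly yours --- factor the cubic terms of \eqref{B-2} under the separable ansatz $T_i = u_i v_i^\top$, using $u_i^\top u_i = v_i^\top v_i = 1$, to recover the two coupled sphere equations \eqref{B-0}, then invoke local Lipschitz continuity and uniqueness of solutions for the converse direction. One minor remark: conservation of $\|T_i\|_\tF$ along \eqref{B-2} uses the antisymmetric structure of the cubic coupling terms in addition to the skew-symmetry of $A_i$, not skew-symmetry alone, but this is standard and does not affect your uniqueness argument.
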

By applying the completely separability stated in Proposition \ref{P2.1}, emergent behaviors for \eqref{B-0} and those for \eqref{B-2} are exactly the same. Thus, it suffices to investigate the SDS model \eqref{B-0}. 
\begin{proposition}
\emph{\cite{H-K-P}}
Suppose the initial data $\{ (u_i^0, v_i^0) \}$ satisfy the following conditions:
\begin{equation*}
 \min_{1\leq i, j\leq N }\langle u_i^0, u_j^0\rangle>0,\quad  \min_{1\leq i, j\leq N}\langle v_i^0, v_j^0\rangle>0,
\end{equation*}
and let $\{(U,V)\}$ be a solution to system \eqref{B-0}.  Then, we have
\[
\lim_{t\to\infty} \max_{1\leq i,j\leq N}  |u_i(t)-u_j(t) |=0 \quad \textup{and} \quad \lim_{t\to\infty}  \max_{1\leq i,j\leq N}  |v_i(t)-v_j(t) |=0.
\]
\end{proposition}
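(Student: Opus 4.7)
The plan is to follow the standard Lyapunov framework for sphere-valued synchronization, adapted to the cross-coupled double-sphere setting. First, by absorbing the drifts via the orthogonal frame change $\tilde{u}_i(t) := e^{-\Omega t}u_i(t)$ and $\tilde{v}_i(t) := e^{-\Lambda t}v_i(t)$, available under the implicit homogeneity $\Omega_i \equiv \Omega$, $\Lambda_i \equiv \Lambda$ that is needed for complete one-point collapse, the free-flow terms disappear while all pairwise inner products $\langle u_i,u_j\rangle$ and $\langle v_i,v_j\rangle$ are preserved. This reduces the analysis to the purely dissipative coupled system.

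Introduce the pair functionals $\phi_{ij}^u(t) := 1-\langle u_i(t), u_j(t)\rangle$ and $\phi_{ij}^v(t) := 1-\langle v_i(t), v_j(t)\rangle$, which are nonnegative on the product of spheres and vanish precisely upon coincidence; note $|u_i - u_j|^2 = 2\phi_{ij}^u$. Differentiating $\phi_{ij}^u$ along \eqref{B-0} and using $|u_i|\equiv 1$, I expect an identity whose leading part is
\begin{equation*}
\dot{\phi}_{ij}^u = -\frac{\kappa}{N}\sum_{k=1}^N \bigl[\langle v_i, v_k\rangle + \langle v_j, v_k\rangle\bigr]\phi_{ij}^u + (\text{quadratic corrections in } \phi^u,\phi^v),
\end{equation*}
which, after setting $\Phi^u := \max_{i,j}\phi_{ij}^u$ and $\Phi^v := \max_{i,j}\phi_{ij}^v$, yields a pointwise estimate $\dot{\phi}_{ij}^u \leq -2\kappa(1-\Phi^v)\phi_{ij}^u + C\Phi^u(\Phi^u+\Phi^v)$ for some constant $C$, together with the symmetric bound for $\phi_{ij}^v$ under $u\leftrightarrow v$.

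Taking upper Dini derivatives of the maxima (a Danskin-type argument handles the non-smoothness coming from the index attaining the max), the next step is to establish the coupled system of differential inequalities
\begin{equation*}
D^+\Phi^u \leq -2\kappa(1-\Phi^v)\Phi^u + C\Phi^u(\Phi^u+\Phi^v),\qquad D^+\Phi^v \leq -2\kappa(1-\Phi^u)\Phi^v + C\Phi^v(\Phi^u+\Phi^v).
\end{equation*}
Under the hypothesis $\Phi^u(0), \Phi^v(0) < 1$, a bootstrap argument shows the set of times on which both $\Phi^u$ and $\Phi^v$ stay below $\tfrac{1}{2}(1+\max(\Phi^u(0),\Phi^v(0)))$ is open and closed, hence all of $[0,\infty)$. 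On this invariant set the dissipative coefficients $1-\Phi^v$ and $1-\Phi^u$ are uniformly positive, so a standard Gronwall comparison delivers exponential decay $\Phi^u(t),\Phi^v(t) \to 0$, which is equivalent to the claimed convergence of pairwise distances.

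The main obstacle is the two-way feedback: the decay of $\Phi^u$ is driven by the positivity of $1-\Phi^v$ and vice versa, so one has to rule out any scenario in which one functional escapes toward $1$ before the other has had time to contract. This is precisely why the bootstrap must jointly control the pair $(\Phi^u,\Phi^v)$ rather than apply a one-shot Gronwall estimate to each, and it explains the need for the initial positivity of \emph{both} $\langle u_i^0, u_j^0\rangle$ and $\langle v_i^0, v_j^0\rangle$.
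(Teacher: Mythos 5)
Your frame-change reduction to the homogeneous case is fine (complete aggregation cannot hold for genuinely heterogeneous $\Omega_i,\Lambda_i$, and the paper's gradient-flow remark after the proposition confirms the identical-frequency reading), and your derivative identity for $\phi^u_{ij}$ is essentially correct. The genuine gap is in the bootstrap. Your coupled inequalities
\begin{equation*}
D^+\Phi^u \leq -2\kappa(1-\Phi^v)\Phi^u + C\,\Phi^u(\Phi^u+\Phi^v),\qquad
D^+\Phi^v \leq -2\kappa(1-\Phi^u)\Phi^v + C\,\Phi^v(\Phi^u+\Phi^v)
\end{equation*}
do hold (with $C$ of order $\kappa$), but they are too lossy to close the argument under the stated hypothesis. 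After factoring out $\Phi^u>0$, the right-hand side of the first inequality has the sign of $-2\kappa(1-\Phi^v)+C(\Phi^u+\Phi^v)$, which is negative only when $\Phi^u+\Phi^v$ is small compared with $1-\Phi^v$; it is \emph{not} negative when $\Phi^u,\Phi^v$ are both close to $1$. The proposition allows $\min_{i,j}\langle u_i^0,u_j^0\rangle$ and $\min_{i,j}\langle v_i^0,v_j^0\rangle$ to be arbitrarily small positive numbers, i.e.\ $\Phi^u(0),\Phi^v(0)$ arbitrarily close to $1$, so your barrier $\tfrac12\bigl(1+\max(\Phi^u(0),\Phi^v(0))\bigr)$ sits exactly in the region where the dissipation can be overwhelmed by the error term, and nothing forces $D^+\Phi^u<0$ there. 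The ``open and closed'' step therefore fails, and as written your argument proves the result only for sufficiently small initial diameters, not under the stated assumption.

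The missing idea is to keep the sign structure at the extremal pair instead of estimating the corrections in absolute value. Set $a_{ij}=\langle u_i,u_j\rangle$, $b_{ij}=\langle v_i,v_j\rangle$, $A=\min_{i,j}a_{ij}=1-\Phi^u$, $B=\min_{i,j}b_{ij}=1-\Phi^v$. The exact relation is
\begin{equation*}
\dot a_{ij}=\frac{\kappa}{N}\sum_{k=1}^N\bigl[\,b_{ik}(a_{kj}-a_{ik}a_{ij})+b_{jk}(a_{ik}-a_{jk}a_{ij})\,\bigr],
\end{equation*}
and at a pair $(i,j)$ where $a_{ij}=A$ one has, for every $k$,
\begin{equation*}
a_{kj}-a_{ik}a_{ij}\geq 0,\qquad a_{ik}-a_{jk}a_{ij}\geq 0,\qquad
(a_{kj}-a_{ik}a_{ij})+(a_{ik}-a_{jk}a_{ij})=(a_{kj}+a_{ik})(1-a_{ij}),
\end{equation*}
where the two sign statements use precisely $a_{ij}=A\geq 0$, i.e.\ the positivity hypothesis. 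Pulling the factors $b_{ik},b_{jk}\geq B\geq 0$ out of the (termwise non-negative) sum gives, with \emph{no} additive remainder,
\begin{equation*}
\frac{\d}{\dt}A \geq 2\kappa AB(1-A),\qquad \frac{\d}{\dt}B \geq 2\kappa AB(1-B)
\end{equation*}
in the Dini sense; equivalently $D^+\Phi^u\leq -2\kappa(1-\Phi^u)(1-\Phi^v)\Phi^u$. In particular $A$ and $B$ are non-decreasing, hence bounded below by $A(0)>0$ and $B(0)>0$ for all time, and Gr\"onwall applied to $1-A$ and $1-B$ yields exponential convergence $A,B\to 1$ for \emph{every} admissible initial datum. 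This monotone, remainder-free estimate is what replaces your bootstrap, and it is also the route behind the cited source: the paper itself gives no proof here but refers to \cite{H-K-P}, whose argument is organized around exactly this minimal-inner-product functional (compare ${\mathcal A}(U^{k,0})=\min_{i,j}\langle u_i^{k,0},u_j^{k,0}\rangle$ in Proposition \ref{P2.3}), supplemented by the gradient-flow formulation \eqref{B-4}--\eqref{B-5} for convergence to equilibrium.
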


\vspace{0.2cm}

Now, it is worthwhile mentioning that system \eqref{B-0} can be represented as a coupled gradient flow:
\begin{equation} \label{B-4}
\begin{cases}
\displaystyle {\dot u}_i =-\frac{N\kappa}{2} {\mathbb P}_{T_{u_i}\bbs^{d_1-1}} \Big( \nabla_{u_i}\mathcal{E}(U, V) \Big), \vspace{0.2cm}\\
\displaystyle {\dot v}_i =-\frac{N\kappa}{2}  {\mathbb P}_{T_{v_i}\bbs^{d_2-1}} \Big( \nabla_{v_i}\mathcal{E}(U, V)\Big),
\end{cases}
\end{equation}
where the projection operators $ {\mathbb P}_{T_{u_i}\bbs^{d_1-1}}$ and $ {\mathbb P}_{T_{v_i}\bbs^{d_2-1}}$ onto the tangent spaces of $\bbs^{d_1-1}$ and $\bbs^{d_2-1}$ at $u_i$ and $v_i$ are defined by the  formulae, respectively: for $w_1 \in \bbr^{d_1}$ and $w_2 \in \bbr^{d_2}$, 
\[
\begin{cases}
\displaystyle {\mathbb P}_{T_{u_i}\bbs^{d_1-1}} (w_1) := w_1 - \langle w_1, u_i \rangle u_i, \\
\displaystyle {\mathbb P}_{T_{v_i}\bbs^{d_2-1}} (w_2) := w_2 - \langle w_2, v_i \rangle v_i,
\end{cases}
\]
and the potential function $\mathcal E(U,V)$ is defined as 
\begin{equation} \label{B-5}
\mathcal E(U,V) := 1- \frac{1}{N^2} \sum_{i,j=1}^N \langle u_i,u_j\rangle \langle v_i,v_j\rangle.
\end{equation}
Thanks to the gradient flow formulation \eqref{B-4},  any solution to system \eqref{B-0} converges to an equilibrium as $t \to \infty$.


\subsection{From the SDS model to the SMS model} \label{sec:2.2} 
In this subsection, we  extend the SDS model \eqref{B-0} on the product of two unit spheres to an aggregation model on the product of multiple unit spheres, namely,  the SMS model. Note that the SDS model can be represented as a gradient flow with a potential function as can be seen in \eqref{B-4}--\eqref{B-5}. Thus, we first generalize the potential function \eqref{B-5} as follows: for $u_i^k \in \bbs^{d_k-1}, \quad i=1,\cdots,N, \quad k=1,\cdots,m,$ we set 
\begin{equation} \label{B-6}
\mathcal{E}(U^1, U^2, \cdots, U^m) :=1-\frac{1}{N^2}\sum_{i,j=1}^N\prod_{k=1}^m \langle u_i^k, u_j^k\rangle,\quad U^k :=\{u_1^k, u_2^k, \cdots, u_N^k\}.
\end{equation}
Using the same spirit for a gradient flow with the potential function \eqref{B-6}, we propose the SMS model as  follows.
\begin{align}\label{B-14}
\begin{cases}
\displaystyle {\dot u}_i^k =\frac{\kappa}{N}\sum_{j=1}^N\left(\prod_{l\neq k}\langle u_i^l, u_j^l\rangle\right) \Big(u_j^k- \langle u_i^k,u_j^k\rangle u_i^k\Big),\quad t>0, \\
u_i^k(0)=u_i^{k,0}\in\bbs^{d_k-1}  \qquad   i\in\{1, 2, \cdots, N\},\quad k\in\{1, 2, \cdots, m\}.
\end{cases}
\end{align}
As in Section \ref{sec:2.1}, we set rank-$m$ real tensor $T_i$:
\begin{equation*}
T_i := u_i^1 \otimes \cdots \otimes u_i^m,\quad i=1,\cdots,N. 
\end{equation*}
Then, it is easy to check that $T_i$ satisfies 
\begin{equation}\label{B-15}
[\dot{T}_p]_{\alpha_{*0}}=\frac{\kappa}{N}\sum_{k=1}^m \sum_{\ell=1}^N\left( [T_\ell]_{\alpha_{*i_*^k}}[\bar T_p]_{\alpha_{*1}}[T_p]_{\alpha_{*(1-i_*^k)}}-[T_p]_{\alpha_{*i_*^k}}[\bar T_\ell]_{\alpha_{*1}}[T_p]_{\alpha_{*(1-i_*^k)}}\right).
\end{equation}
It should be noted that \eqref{B-15} can be derivable from the Lohe tensor model \eqref{LT} with the following conditions:
\begin{align*}
\kappa_{i_*}=\begin{cases}
\kappa\quad&\text{when } i_*=i_*^k,\quad  1\leq k\leq m,\\
0&\text{otherwise},
\end{cases}
\quad\text{where}\quad i_*^k:=\underbrace{(0, \cdots, 0, 1,0, \cdots, 0)}_{\text{only $k^{th}$ index is $1$}}.
\end{align*}
Hence, systems \eqref{B-14} and \eqref{B-15} can be related in view of a completely separable state, and since the emergent dynamics of the LT model has been discussed in literature \cite{H-P, H-P2, H-P3}, we conclude that system \eqref{B-14} exhibits complete aggregation under suitable circumstances. 
\begin{proposition} \label{P2.3}
\emph{\cite{H-K-P}}
Suppose that initial data $T^0 = \{T_i^0\}$ are completely factorized as a tensor product of rank-1 real tensors:
\begin{align*}
\begin{aligned} 
& T_i^0=u_i^{1, 0}\otimes u_i^{2, 0}\otimes \cdots \otimes u_i^{m, 0}, \quad i = 1, \cdots, N, \\
& {\mathcal A}(U^{k,0}) := \min_{1 \leq i, j \leq N} \langle u^{k,0}_i, u^{k,0}_j \rangle  >0,\quad k=1,\cdots,m,
\end{aligned}
\end{align*}
and let $T=\{T_i \}$ be a solution to system \eqref{B-15} and $\{U^1,\cdots, U^d\}$ be a solution to system \eqref{B-14}. Then, the following assertions hold.
\begin{enumerate}
\item
$T_i=T_i(t)$ is completely separable in the sense that 
\[
T_i(t)=u_i^1(t)\otimes u_i^2(t)\otimes \cdots \otimes u_i^m (t),\quad t>0, \quad i = 1, \cdots, N.
\]
\item
The   solution  exhibits the complete aggregation:
\[ \lim_{t \to \infty} \max_{1\leq i, j \leq N} \|T_i(t) - T_j(t) \|_\tF  = 0. \]
\end{enumerate}
\end{proposition}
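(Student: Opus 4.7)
The plan is to handle the two assertions in turn, using the same ``factorization + uniqueness'' template that proved Proposition \ref{P2.1}, and then reducing the aggregation claim (2) to decay estimates for pairwise inner products on each sphere.

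For assertion (1), I would define the candidate separable tensor
\[
\tilde T_i(t) := u_i^1(t)\otimes u_i^2(t)\otimes\cdots\otimes u_i^m(t),
\]
where $\{u_i^k\}$ solves the SMS model \eqref{B-14} with initial data $\{u_i^{k,0}\}$, and then verify by direct substitution that $\tilde T_i$ satisfies the LT-form equation \eqref{B-15}. The Leibniz rule gives $\dot{\tilde T}_i=\sum_{k=1}^m u_i^1\otimes\cdots\otimes \dot u_i^k\otimes\cdots\otimes u_i^m$, and the crucial identity $\langle \tilde T_i,\tilde T_j\rangle_{\tF}=\prod_{l}\langle u_i^l,u_j^l\rangle$ lets us match, component by component, the $k$-th summand of $\dot{\tilde T}_i$ (coming from \eqref{B-14}) with the term indexed by $i_*^k$ on the RHS of \eqref{B-15}. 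Since \eqref{B-15} is a locally Lipschitz ODE on a compact manifold (the unit Frobenius sphere in the tensor space), its solutions are unique, so the given $T_i$ with $T_i(0)=\tilde T_i(0)$ must coincide with $\tilde T_i$ for all $t>0$. This is precisely the $m$-fold analogue of the equivalence proved in Proposition \ref{P2.1}.

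For assertion (2), by item (1) we have $\|T_i-T_j\|_{\tF}^2 = 2-2\prod_{k=1}^m \langle u_i^k,u_j^k\rangle$, so it suffices to prove
\[
\lim_{t\to\infty}\min_{1\le i,j\le N}\langle u_i^k(t),u_j^k(t)\rangle = 1,\qquad k=1,\dots,m.
\]
I would work with the order parameters $\mathcal A(U^k(t)):=\min_{i,j}\langle u_i^k(t),u_j^k(t)\rangle$ and $1-\mathcal A(U^k(t))$. Differentiating $\langle u_i^k,u_j^k\rangle$ along \eqref{B-14} and using $|\langle u_i^l,u_\ell^l\rangle|\le 1$ on each sphere, a straightforward but careful estimate produces a Gronwall-type differential inequality of the form
\[
\frac{d}{dt}\bigl(1-\mathcal A(U^k)\bigr) \le -\kappa\,\Big(\prod_{l\neq k}\mathcal A(U^l)\Big)\bigl(1-\mathcal A(U^k)\bigr)+(\text{cross terms bounded by }1-\mathcal A(U^l)),
\]
in the Dini sense. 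Combined with the gradient flow structure \eqref{B-4}--\eqref{B-6} (generalized to $m$ spheres via the potential \eqref{B-6}), which ensures $\mathcal E$ is non-increasing and the trajectory stays in a compact set, one obtains positivity propagation $\mathcal A(U^k(t))\ge c>0$ for all $t\ge 0$ and all $k$, and then exponential decay of each $1-\mathcal A(U^k(t))$ to zero.

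The main obstacle is the nonlinear cross-coupling between the spheres: unlike the single-sphere Kuramoto-type argument, the effective coupling strength for sphere $k$ is modulated by $\prod_{l\neq k}\langle u_i^l,u_j^l\rangle$, which could a priori vanish and stall the aggregation on one factor. The positivity hypothesis $\mathcal A(U^{k,0})>0$ for every $k$ is exactly what is needed to prevent this. The argument must therefore be closed simultaneously across all $k$ by a bootstrap/continuity argument on the set $\{t\ge 0 : \mathcal A(U^l(t))>0 \text{ for all } l\}$, showing it is both open and (by the derived inequalities) closed, hence equal to $[0,\infty)$. Once this is secured, the coupled exponential decay of $\{1-\mathcal A(U^k)\}_{k=1}^m$ follows and implies complete aggregation of the $T_i$.
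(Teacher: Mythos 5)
First, a caveat on the comparison: the paper does not actually prove Proposition \ref{P2.3} --- its ``proof'' is a one-line citation to Theorem 6.2 and Proposition 7.1 of \cite{H-K-P} --- so your reconstruction must be judged on its own merits. Your overall strategy (separability via ODE uniqueness, then aggregation via monotonicity of minimal inner products) is the right one, and your part (2) in fact closes more cleanly than you anticipate: at a pair $(i,j)$ minimizing $\langle u_i^k,u_j^k\rangle$, each bracket $\langle u_q^k,u_j^k\rangle-\langle u_i^k,u_q^k\rangle\langle u_i^k,u_j^k\rangle$ is already nonnegative as soon as $\mathcal A(U^k)\ge 0$, so bounding every coupling weight below by $\prod_{l\neq k}\mathcal A(U^l)$ yields, in the Dini sense,
\[
\frac{\d}{\dt}\,\mathcal A(U^k)\;\ge\;2\kappa\Big(\prod_{l\neq k}\mathcal A(U^l)\Big)\,\mathcal A(U^k)\,\bigl(1-\mathcal A(U^k)\bigr),
\]
with no cross terms at all. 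Hence each $\mathcal A(U^k)$ is nondecreasing while all of them are nonnegative, positivity propagates trivially (your open/closed bootstrap is unnecessary), and each $1-\mathcal A(U^k)$ decays exponentially with rate at least $2\kappa\prod_{l}\mathcal A(U^{l,0})$; the gradient-flow structure \eqref{B-4}--\eqref{B-6} is not needed either.

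The genuine problem is in part (1), in the ``component-by-component'' matching. For factorized tensors, the coupling term in \eqref{B-15} between particles $p$ and $q$ indexed by a multi-index $i_*$ evaluates, after contracting the repeated indices, to
\[
\Big(\prod_{l:\,i_l=1}\langle u_q^l,u_p^l\rangle\Big)\,w^1\otimes\cdots\otimes w^m\;-\;\Big(\prod_{l=1}^m\langle u_q^l,u_p^l\rangle\Big)\,T_p,
\qquad
w^l=\begin{cases}u_p^l, & i_l=1,\\ u_q^l, & i_l=0,\end{cases}
\]
that is, the slots carrying $i_l=0$ are the ones receiving the attracting factor $u_q^l$. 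Consequently the $k$-th Leibniz summand $u_p^1\otimes\cdots\otimes\dot u_p^k\otimes\cdots\otimes u_p^m$ of \eqref{B-14} matches the term indexed by $1-i_*^k$ (unique $0$ in slot $k$), \emph{not} by $i_*^k$ as you claim; the term indexed by $i_*^k$ instead produces $\langle u_q^k,u_p^k\rangle\,u_q^1\otimes\cdots\otimes u_p^k\otimes\cdots\otimes u_q^m$, which is not a single-slot perturbation of $T_p$. The two index families $\{i_*^k\}$ and $\{1-i_*^k\}$ coincide as sets only when $m=2$, which is why the SDS case hides the discrepancy; for $m\ge 3$ they differ, and with the weights $\kappa_{i_*^k}=\kappa$ the factorization ansatz need not propagate at all (take $u_q^1=u_p^1$ and $u_q^l\perp u_p^l$ for $l\ge2$ to see the mismatch). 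To be fair, this slip is inherited from the paper's own statement of \eqref{B-15} and of $i_*^k$, which is literally correct only for $m=2$; but as written your verification step would fail for $m\ge 3$. Redo the matching with the complementary indices (equivalently, define $i_*^k$ as having its unique $0$ in slot $k$), and your part (1) --- and with it the whole proof --- goes through by the uniqueness argument exactly as you describe.
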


\begin{proof}
For a proof, we refer the reader to Theorem 6.2 and Proposition 7.1 in \cite{H-K-P}. 
\end{proof}

\section{Existence of rank-4 quadratically separable states} \label{sec:3} 
\setcounter{equation}{0}
In this section, we present  existence of the QSS for the LT model with rank-4 tensors and introduce the  DM model than can be induced from the LT model. 
\subsection{The DM model} \label{sec:3.1} 
In this subsection, we propose the DM model consisting of two generalized Lohe matrix model on the rectangular matrices with possibly different sizes: 
\[ U_j \in \bbc^{d_1\times d_2} \quad \mbox{and} \quad V_j \in \bbc^{d_3 \times d_4}, \quad j = 1, \cdots, N. \]
Below, we sketch our strategy how to derive the DM model from the LT model: 
\begin{itemize}
\item
Step A~(A homogeneous ensemble): we present  a DM model for a homogeneous ensemble (Section \ref{sec:3.1.1}). 
\vspace{0.2cm}
\item
Step B~(A heterogeneous ensemble):  by adding natural frequency tensors with suitable structure conditions, we derive the DM model  from the LT model (Section \ref{sec:3.1.2}). 
\end{itemize}
\subsubsection{A homogeneous ensemble}  \label{sec:3.1.1}
Let $T_j\in \bbc^{d_1\times d_2 \times d_3 \times d_4}$  be a rank-4 tensor that is a  solution to   \eqref{LT} with zero natural frequency tensors $A_j \equiv O$: 
\begin{equation} \label{C-1}
[\dot{T}_j]_{\alpha_{*0}}=\displaystyle\sum_{i_*\in\{0, 1\}^4}\left[ \frac{\kappa_{i_*}}{N}\sum_{k=1}^N\left([T_k]_{\alpha_{*i_*}}[\bar{T}_j]_{\alpha_{*1}}[T_j]_{\alpha_{*(1-i_*)}}-[T_j]_{\alpha_{*i_*}}[\bar{T}_k]_{\alpha_{*1}}[T_j]_{\alpha_{*(1-i_*)}}\right)\right ].
\end{equation}
For a given solution $T_j$ to \eqref{C-1}, we assume that there exist two matrices $U_j \in \bbc^{d_1 \times d_2}$  and  $V_j \in \bbc^{d_3 \times d_4}$ such that
\[
T_j = U_j  \otimes V_j,\quad [T_j(t)]_{\alpha\beta\gamma\delta}=[U_j(t)]_{\alpha\beta}[V_j(t)]_{\gamma\delta} \quad \mbox{in a component form}.
\]
Next, we rewrite cubic interaction terms in \eqref{C-1} in terms of $U_j$ and $V_j$. For this, we decompose the index vectors $i_*$ and $\alpha_{*i_*}$ as 
\[
i_* :=(i_1,i_2, i_3, i_4),\quad \alpha_{*i_*} := (\beta_{*j_*},\gamma_{*k_*}),\quad j_* :=(i_1, i_2),\quad k_* :=(i_3, i_4),
\]
where $j_*$ and $k_*$ correspond to the index vectors for $U_j$ and $V_j$, respectively. We now observe 
\begin{align*}
\begin{aligned}
&[T_i]_{\alpha_{*i_*}}[\bar{T}_j]_{\alpha_{*1}}[T_k]_{\alpha_{*(1-i_*)}} \\
& \hspace{0.5cm} =[T_i]_{(\beta_{*j_*},\gamma_{*k_*}}[\bar{T}_j]_{(\beta_{*1},\gamma_{*1})}[T_k]_{(\beta_{*(1-j_*)},\gamma_{*(1-k_*)})}\\
& \hspace{0.5cm} =[U_i]_{\beta_{*j_*}}[V_i]_{\gamma_{*k_*}}[\bar{U}_j]_{\beta_{*1}}[\bar{V}_j]_{\gamma_{*1}}[U_k]_{\beta_{*(1-j_*)}}[V_k]_{\gamma_{*(1-k_*)}}\\
& \hspace{0.5cm} =\left([U_i]_{\beta_{*j_*}}[\bar{U}_j]_{\beta_{*1}}[U_k]_{\beta_{*(1-j_*)}}\right)\left([V_i]_{\gamma_{*k_*}}[\bar{V}_j]_{\gamma_{*1}}[V_k]_{\gamma_{*(1-k_*)}}\right).
\end{aligned}
\end{align*}
By interchanging the roles of $j\leftrightarrow k$, the term inside of the summation in the right-hand side of \eqref{C-1} becomes
\begin{align*}
 &[T_k]_{\alpha_{*i_*}}[\bar{T}_j]_{\alpha_{*1}}[T_j]_{\alpha_{*(1-i_*)}}-[T_j]_{\alpha_{*i_*}}[\bar{T}_k]_{\alpha_{*1}}[T_j]_{\alpha_{*(1-i_*)}}\\
 &\hspace{1cm}=\left([U_k]_{\beta_{*j_*}}[\bar{U}_j]_{\beta_{*1}}[U_j]_{\beta_{*(1-j_*)}}\right)\left([V_k]_{\gamma_{*k_*}}[\bar{V}_j]_{\gamma_{*1}}[V_j]_{\gamma_{*(1-k_*)}}\right)\\
&\hspace{1.5cm}-\left([U_j]_{\beta_{*j_*}}[\bar{U}_k]_{\beta_{*1}}[U_j]_{\beta_{*(1-j_*)}}\right)\left([V_j]_{\gamma_{*k_*}}[\bar{V}_k]_{\gamma_{*1}}[V_j]_{\gamma_{*(1-k_*)}}\right).
\end{align*}
Since the left-hand side of \eqref{C-1} has the form of 
\begin{equation} \label{C-2} 
\dot T_j = \dot U_j \otimes V_j + U_j \otimes \dot V_j,
\end{equation}
one should impose either $\kp_* = (1,1)$ or $j_* = (1,1)$
to derive the restriction on $\kp_{i_*}$:
\begin{equation*} \label{C-5}
\kappa_{i_*}=0\quad \textup{for all $i_* \in \{0,1\}^4$ with $(i_1,i_2) \neq (1,1)$ and $(i_3,i_4) \neq (1,1)$.}
\end{equation*}
Then, the right-hand side of \eqref{C-1} further reduces to
\begin{align*}
\begin{aligned}\label{C-7}
&\displaystyle\sum_{i_*\in\{0, 1\}^4}\left[ \frac{\kappa_{i_*}}{N}\sum_{k=1}^N\left([T_k]_{\alpha_{*i_*}}[\bar{T}_j]_{\alpha_{*1}}[T_j]_{\alpha_{*(1-i_*)}}-[T_j]_{\alpha_{*i_*}}[\bar{T}_k]_{\alpha_{*1}}[T_j]_{\alpha_{*(1-i_*)}}\right)\right ] \\
&\hspace{1cm}= \displaystyle\sum_{j_*\in\{0, 1\}^2}\biggl[ \frac{\kappa_{(j_*, 1,1)}}{N}\sum_{k=1}^N\Big([T_k]_{\beta_{*j_*}\gamma_{*1}}[\bar{T}_j]_{\beta_{*1}\gamma_{*1}}[T_j]_{\beta_{*(1-j_*)}\gamma_{*0}} \\
&\hspace{5cm} -[T_j]_{\beta_{*j_*}\gamma_{*1}}[\bar{T}_k]_{\beta_{*1}\gamma_{*1}}[T_j]_{\beta_{*(1-j_*)}\gamma_{*0}}\Big)\biggr] \\
&\hspace{1.5cm}+\displaystyle\sum_{k_*\in\{0, 1\}^2}\biggl [ \frac{\kappa_{( 1,1, k_*)}}{N}\sum_{k=1}^N\Big([T_k]_{\beta_{*1}\gamma_{*k_*}}[\bar{T}_j]_{\beta_{*1}\gamma_{*1}}[T_j]_{\beta_{*0}\gamma_{*(1-k_*)}} \\
&\hspace{5cm}-[T_j]_{\beta_{*1}\gamma_{*k_*}}[\bar{T}_k]_{\beta_{*1}\gamma_{*1}}[T_j]_{\beta_{*0}\gamma_{*(1-k_*)}}\Big)\biggr ] \\
&\hspace{1cm}=[V_j]_{\gamma_{*0}}\displaystyle\sum_{j_*\in\{0, 1\}^2}\biggl[ \frac{\kappa_{(j_*, 1,1)}}{N}\sum_{k=1}^N\Big(\langle V_j, V_k\rangle_\tF [U_k]_{\beta_{*j_*}}[\bar{U}_j]_{\beta_{*1}}[U_j]_{\beta_{*(1-j_*)}}  \\
&\hspace{5cm}-\langle V_k, V_j\rangle_\tF[U_j]_{\beta_{*j_*}}[\bar{U}_k]_{\beta_{*1}}[U_j]_{\beta_{*(1-j_*)}}\Big)\biggr] \\
&\hspace{1.5cm}+[U_j]_{\beta_{*0}}\displaystyle\sum_{k_*\in\{0, 1\}^2}\biggl[ \frac{\kappa_{(1,1, k_*)}}{N}\sum_{k=1}^N\Big(
\langle U_j, U_k\rangle_\tF [V_k]_{\gamma_{*k_*}}[\bar{V}_j]_{\gamma_{*1}}[V_j]_{\gamma_{*(1-k_*)}} \\
&\hspace{5cm}-\langle U_k, U_j\rangle_\tF [V_j]_{\gamma_{*k_*}}[\bar{V}_k]_{\gamma_{*1}}[V_j]_{\gamma_{*(1-k_*)}}\Big)\biggr ].
\end{aligned}
\end{align*}
By comparing $\cdot \otimes V_j$ and $U_j \otimes \cdot$ in \eqref{C-2},  one has 
\begin{equation*}\label{C-8}
\begin{cases}
[\dot{U}_j]_{\beta_{*0}}=\displaystyle\sum_{j_*\in\{0, 1\}^2}\biggl(\frac{\kappa_{(j_*, 1,1)}}{N}\sum_{k=1}^N\Big(
\langle V_j, V_k\rangle_\tF [U_k]_{\beta_{*j_*}}[\bar{U}_j]_{\beta_{*1}}[U_j]_{\beta_{*(1-j_*)}} \\
\displaystyle \hspace{6cm} -\langle V_k, V_j\rangle_\tF[U_j]_{\beta_{*j_*}}[\bar{U}_k]_{\beta_{*1}}[U_j]_{\beta_{*(1-j_*)}}\Big)\biggr),\\
[\dot{V}_j]_{\gamma_{*0}}=\displaystyle\sum_{k_*\in\{0, 1\}^2}\biggl(\frac{\kappa_{(1,1, k_*)}}{N}\sum_{k=1}^N\Big(
\langle U_j, U_k\rangle_\tF [V_k]_{\gamma_{*k_*}}[\bar{V}_j]_{\gamma_{*1}}[V_j]_{\gamma_{*(1-k_*)}} \\
\displaystyle \hspace{6cm}-\langle U_k, U_j\rangle_\tF [V_j]_{\gamma_{*k_*}}[\bar{V}_k]_{\gamma_{*1}}[V_j]_{\gamma_{*(1-k_*)}}\Big)\biggr).\\
\end{cases}
\end{equation*}
If we choose the coupling strengths as for remaining $i_*$:
\begin{align}\label{C-9-1}
\kappa_{(0,1,1,1)}=\kappa_{(1, 1, 0, 1)}=\kappa_1,\quad \kappa_{(1, 0, 1, 1)}=\kappa_{(1, 1, 1, 0)}=\kappa_2,\quad \kappa_{i_*}=0,
\end{align}
we obtain the desired system for $(U_j, V_j)$: 
\begin{align}\label{C-10}
\begin{cases}
\dot{U}_j=\displaystyle\frac{\kappa_{1}}{N}\sum_{k=1}^N\left(
\langle V_j, V_k\rangle_\tF~U_kU_j^\dagger U_j
-\langle V_k, V_j\rangle_\tF~U_jU_k^\dagger U_j\right)
\\
\hspace{1cm}+\displaystyle\frac{\kappa_{2}}{N}\sum_{k=1}^N\left(
\langle V_j, V_k\rangle_\tF~U_jU_j^\dagger U_k 
-\langle V_k, V_j\rangle_\tF~U_jU_k^\dagger U_j\right),\\
\dot{V}_j=\displaystyle\frac{\kappa_{1}}{N}\sum_{k=1}^N\left(
\langle U_j, U_k\rangle_\tF~V_kV_j^\dagger V_j
-\langle U_k, U_j\rangle_\tF~V_jV_k^\dagger V_j\right) \\
\hspace{1cm}+\displaystyle\frac{\kappa_{2}}{N}\sum_{k=1}^N\left(
\langle U_j, U_k\rangle_\tF~V_jV_j^\dagger V_k
-\langle U_k, U_j\rangle_\tF~V_jV_k^\dagger V_j\right).\\
\end{cases}
\end{align}
\subsubsection{A heterogeneous ensemble} \label{sec:3.1.2}
Similar to several aggregation models such as the Lohe matrix model \cite{Lo09} and the Lohe tensor model \cite{H-P}, a natural candidate for heterogeneous (or non-identical) extension of  \eqref{C-10} would be the model \eqref{C-10} together with natural frequency tensors $B_j$ and $C_j$ whose ranks are 4.  Thus, the DM model for a heterogeneous ensemble reads as 
\begin{align}\label{C-10-1}
\begin{cases}
 \dot{U}_j=B_jU_j+\displaystyle\frac{\kappa_{1}}{N}\sum_{k=1}^N\left(
\langle V_j, V_k\rangle_\tF~U_kU_j^\dagger U_j
-\langle V_k, V_j\rangle_\tF~U_jU_k^\dagger U_j\right)\\
 \hspace{1cm}+\displaystyle\frac{\kappa_{2}}{N}\sum_{k=1}^N\left(
\langle V_j, V_k\rangle_\tF~U_jU_j^\dagger U_k 
-\langle V_k, V_j\rangle_\tF~U_jU_k^\dagger U_j\right),\\
 \dot{V}_j=C_jV_j+\displaystyle\frac{\kappa_{1}}{N}\sum_{k=1}^N\left(
\langle U_j, U_k\rangle_\tF~V_kV_j^\dagger V_j
-\langle U_k, U_j\rangle_\tF~V_jV_k^\dagger V_j\right)\\
 \hspace{1cm}+\displaystyle\frac{\kappa_{2}}{N}\sum_{k=1}^N\left(
\langle U_j, U_k\rangle_\tF~V_jV_j^\dagger V_k
-\langle U_k, U_j\rangle_\tF~V_jV_k^\dagger V_j\right),\\
\end{cases}
\end{align}
where $B_j\in\bbc^{d_1\times d_2\times d_1\times d_2}$ and $C_j\in \bbc^{d_3\times d_4\times d_3\times d_4}$  are rank-4 tensors satisfying skew-symmetric properties: for suitable indices, 
\begin{equation} \label{C-10-1-1}
[B_j]_{\alpha_1\beta_1\alpha_2\beta_2}=-[\bar{B}_j]_{\alpha_2\beta_2\alpha_1\beta_1},\quad [C_j]_{\gamma_1\delta_1\gamma_2\delta_2}=-[\bar{C}_j]_{\gamma_2\delta_2\gamma_1\delta_1}.
\end{equation}
Moreover, $U_jB_j$ and $C_jV_j$ can be understood as tensor contractions between rank-4 and rank-2 tensors:
 \[[B_jU_j]_{\alpha\beta}=[B_j]_{\alpha\beta\gamma\delta}[U_j]_{\gamma\delta}, \quad  [C_jV_j]_{\alpha\beta}=[C_j]_{\alpha\beta\gamma\delta}[V_j]_{\gamma\delta}.
 \]
 Next, we find a condition for $A_j$ in \eqref{LT} in terms of $B_j$ and $C_j$ in \eqref{C-10-1} so that models \eqref{LT} and \eqref{C-10-1} are equivalent. For this, it suffices to focus on the free flows by setting $\kp_1 = \kp_2=0$, i.e., 
 \[ \dot{U}_j=B_jU_j, \quad    \dot{V}_j=C_jV_j, \quad j = 1, \cdots, N. \]
 If we use the relations  above and  
 \[ (U_j \otimes V_j)' = (B_jU_j)\otimes V_j + U_j\otimes(C_j V_j), \]
then we can find 
\begin{equation} \label{C-10-2} 
 A_j(U_j\otimes V_j)=(B_j U_j)\otimes V_j+U_j\otimes(C_j V_j).
 \end{equation}
In addition, if we rewrite \eqref{C-10-2} as a component form, then it becomes
\begin{align} \label{C-10-3}
\begin{aligned}
0&=[A_j]_{\alpha_1\beta_1\gamma_1\delta_1\alpha_2\beta_2\gamma_2\delta_2}[U_j]_{\alpha_2\beta_2}[V_j]_{\gamma_2\delta_2} -[B_j]_{\alpha_1\beta_1\alpha_2\beta_2}[U_j]_{\alpha_2\beta_2}[V_j]_{\gamma_1\delta_1} \\
&\hspace{0.5cm}-[U_j]_{\alpha_1\beta_1}[C_j]_{\gamma_1\delta_1\gamma_2\delta_2}[V_j]_{\gamma_2\delta_2}\\
&= \Big([A_j]_{\alpha_1\beta_1\gamma_1\delta_1\alpha_2\beta_2\gamma_2\delta_2}-[B_j]_{\alpha_1\beta_1\alpha_2\beta_2}\delta_{\gamma_1\gamma_2}\delta_{\delta_1\delta_2}-[C_j]_{\gamma_1\delta_1\gamma_2\delta_2}\delta_{\alpha_1\alpha_2}\delta_{\beta_1\beta_2} \Big) \\
&\hspace{0.5cm}\times [U_j]_{\alpha_2\beta_2}[V_j]_{\gamma_2\delta_2}.
\end{aligned}
\end{align}
Since \eqref{C-10-3} holds for arbitrary $U_j$ and $V_j$, we can find an explicit relation between $A_j$ and $B_j$, $C_j$:
\begin{align}\label{C-11}
[A_j]_{\alpha_1\beta_1\gamma_1\delta_1\alpha_2\beta_2\gamma_2\delta_2}=[B_j]_{\alpha_1\beta_1\alpha_2\beta_2}\delta_{\gamma_1\gamma_2}\delta_{\delta_1\delta_2}+[C_j]_{\gamma_1\delta_1\gamma_2\delta_2}\delta_{\alpha_1\alpha_2}\delta_{\beta_1\beta_2}.
\end{align}
So far, we have derived the DM model \eqref{C-10-1} with \eqref{C-11} from the LT  model \eqref{LT}. However, if we use the same argument reversely and recall that system \eqref{C-10-1} admits a unique solution, then one can identify the LT model from the DM model. Thus, we can say that system \eqref{C-10-1}--\eqref{C-11} and  system \eqref{LT} are equivalent in some sense. Below, we summarize the argument above in the following proposition. 

 \begin{proposition} \label{P3.1} 
 \begin{enumerate}
 \item
 Suppose $\{(U_i,V_i) \}$ is a solution to \eqref{C-10-1}. Then, a rank-4 tensor  $T_i$ defined by $T_i  :=U_i \otimes V_i$ is a QSS to \eqref{LT} with a well-prepared free flow tensor $A_i$ satisfying \eqref{C-11}. 
 
 \vspace{0.2cm}
 
 \item
 Suppose a rank-4 tensor $T_i$ is a solution to \eqref{LT} with \eqref{C-11} and quadratically separable initial data:
 \begin{equation*} \label{C-3}
 T_i^0 =: U_i^0 \otimes V_i^0, \quad 1 \leq i \leq N,
 \end{equation*}
 for  rank-2 tensors $U_i^0 \in \bbc^{d_1\times d_2}$ and $V_i^0 \in \bbc^{d_3 \times d_4}$ with unit norms. Then, there exist two matrices with unit norms $U_i=U_i(t)$ and $V=V_i(t)$ such that 
 \[  T_i(t) = U_i(t) \otimes V_i(t), \quad t>0, \]
 where $(U_i,V_i)$ is a solution to \eqref{C-10-1} with $(U_i,V_i)(0) = (U_i^0,V_i^0)$.
 \end{enumerate}
  \end{proposition}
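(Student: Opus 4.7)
The plan is to formalize the component-level computation carried out in the derivation preceding Proposition \ref{P3.1}, and then invoke uniqueness of the ODE Cauchy problem to get the converse direction. Both assertions should essentially read off from that derivation once it is run in the intended direction.

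For assertion (1), I would take a solution $\{(U_j, V_j)\}$ of the DM model \eqref{C-10-1} and differentiate $T_j := U_j \otimes V_j$ by the Leibniz rule to get $\dot T_j = \dot U_j \otimes V_j + U_j \otimes \dot V_j$. Substituting the DM equations and expanding in components produces two groups of terms. The free-flow contribution is exactly $(B_j U_j) \otimes V_j + U_j \otimes (C_j V_j)$, which by the tensor identity \eqref{C-11} (engineered precisely to enforce this equality, as shown in \eqref{C-10-3}) coincides with $A_j T_j$. For the cubic interactions, I would reverse the calculations leading to the long display between \eqref{C-2} and \eqref{C-9-1}: each factored product $[U_\cdot]_{\beta_*}[V_\cdot]_{\gamma_*}$ reassembles into $[T_\cdot]_{\alpha_*}$ with $\alpha_* = (\beta_*, \gamma_*)$, and the prescription \eqref{C-9-1} on the $\kappa_{i_*}$ ensures that only those indices for which either the ``$U$-block'' or the ``$V$-block'' equals $(1,1)$ contribute, so no mixed terms appear. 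It remains to check $\|T_j(t)\|_\tF \equiv 1$, which follows from $\|T_j\|_\tF = \|U_j\|_\tF\|V_j\|_\tF$ together with conservation of $\|U_j\|_\tF$ and $\|V_j\|_\tF$ along \eqref{C-10-1}; the latter is a short calculation using the skew-symmetry \eqref{C-10-1-1} of $B_j$, $C_j$ and the cancellation between the two cubic sums after taking the Frobenius inner product with $U_j$ or $V_j$.

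For assertion (2), I would invoke the Cauchy--Lipschitz theorem. Given quadratically separable initial data $T_i^0 = U_i^0 \otimes V_i^0$, solve the DM Cauchy problem \eqref{C-10-1} with initial data $(U_i^0, V_i^0)$; since the right-hand side is a polynomial map on a finite-dimensional matrix space and the Frobenius norms $\|U_j\|_\tF, \|V_j\|_\tF$ are conserved (as above), the solution exists globally. Define $\tilde T_i(t) := U_i(t) \otimes V_i(t)$. By assertion (1), $\tilde T_i$ solves the LT model \eqref{LT} with $A_j$ given by \eqref{C-11}, and $\tilde T_i(0) = T_i^0$. Because the LT model is itself a polynomial ODE system in finite dimensions, hence locally Lipschitz, its initial value problem has a unique solution, so $T_i(t) = \tilde T_i(t) = U_i(t) \otimes V_i(t)$ for all $t > 0$, and $(U_i, V_i)$ solves the DM model by construction.

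The main obstacle is bookkeeping rather than conceptual: one must carefully track the decomposition $\alpha_{*i_*} = (\beta_{*j_*}, \gamma_{*k_*})$ with $j_* = (i_1, i_2)$ and $k_* = (i_3, i_4)$, so that the cubic sums over $i_* \in \{0,1\}^4$ in \eqref{LT} split cleanly into a pure $U$-sum multiplied by $[V_j]_{\gamma_{*0}}$ and a pure $V$-sum multiplied by $[U_j]_{\beta_{*0}}$, with the Frobenius inner products $\langle V_j, V_k\rangle_\tF$ and $\langle U_j, U_k\rangle_\tF$ emerging from contraction over the ``spectator'' index block. The identity \eqref{C-11} and the coupling prescription \eqref{C-9-1} are chosen precisely so that no mixed terms survive to spoil the tensor-product ansatz; once this index computation is written out, the two assertions follow with essentially no additional work beyond a standard uniqueness argument.
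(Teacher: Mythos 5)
Your proposal is correct and takes essentially the same route as the paper: the paper's justification of Proposition \ref{P3.1} is exactly the component-level splitting of the cubic terms under the coupling choice \eqref{C-9-1} together with the free-flow compatibility condition \eqref{C-11} (Sections \ref{sec:3.1.1}--\ref{sec:3.1.2}), with the converse obtained by ``using the same argument reversely'' and uniqueness for the ODE Cauchy problem. Your organization---proving assertion (1) directly via the Leibniz rule and index reassembly, then deducing assertion (2) from local Lipschitz uniqueness of the LT flow applied to $\tilde T_i := U_i \otimes V_i$---is just the rigorous reading of that same argument, including the norm-conservation check the paper records separately.
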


%
%
%

\subsection{Gradient flow formulation of the DM model}
In this subsection, we show that system \eqref{C-10-1} can be formulated as a gradient flow with a suitable analytical potential induced from the Lohe tensor model. Recall from \cite{H-P2} that the following functional can be associated with the LT model:
\begin{equation} \label{C-15}
\mathcal{V}(T)=1-\frac{1}{N^2}\sum_{i, j=1}^N\langle T_i, T_j\rangle_\tF.
\end{equation}
If we use the decomposition $T_i = U_i \otimes V_i$, then we find the corresponding functional for \eqref{C-10-1}:
\begin{equation} \label{C-15-1}
\mathcal{E}(U, V) :=1-\frac{1}{N^2}\sum_{i, j=1}^N\langle U_i\otimes V_i, U_j\otimes V_j\rangle_\tF=1-\frac{1}{N^2}\sum_{i, j=1}^N \langle U_i, U_j\rangle_\tF  \langle V_i, V_j\rangle_\tF.
\end{equation}
In the following lemma, we show that $\mathcal E(U,V)$ is non-increasing along the flow \eqref{C-10-1}.

\begin{lemma} \label{L3.1}
Let $\{(U_i,V_i)\}$ be a solution to \eqref{C-10-1} with $(B_j,C_j) = (O,O)$. Then, the functional $\mathcal E(U,V)$ is non-increasing in time:
\begin{align*}
 &\frac{\d}{\dt}\mathcal{E}(U,V) = -\frac{\kappa_1}{N}\sum_{j=1}^N  \left\| \frac{1}{N}\sum_{i=1}^N \Big( \langle V_j, V_i\rangle_\tF U_iU_j^\dagger-\langle V_i, V_j\rangle_\tF  U_jU_i^\dagger \Big) \right\|_\tF^2 \\
&\hspace{0.7cm}-\frac{\kappa_1}{N}\sum_{j=1}^N\left\| \frac{1}{N}\sum_{i=1}^N\left(\langle U_j, U_i\rangle_\tF V_iV_j^\dagger-\langle U_i, U_j\rangle_\tF  V_jV_i^\dagger\right)\right\|_\tF^2 \\
&\hspace{0.7cm}-\frac{\kappa_2}{N}\sum_{j=1}^N \left\| \frac{1}{N}\sum_{i=1}^N \Big( \langle V_j, V_i\rangle_\tF U_j^\dagger U_i-\langle V_i, V_j\rangle_\tF U_i^\dagger U_j \Big) \right\|_\tF^2 \\
&\hspace{0.7cm}-\frac{\kappa_2}{N}\sum_{j=1}^N\left\|\frac{1}{N}\sum_{i=1}^N\left(\langle U_j, U_i\rangle_\tF V_j^\dagger V_i-\langle U_i, U_j\rangle_\tF V_i^\dagger V_j\right)\right\|_\tF^2 .
\end{align*}
\end{lemma}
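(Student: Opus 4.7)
The plan is to differentiate $\mathcal{E}(U,V)$ directly along the flow \eqref{C-10-1} with $B_j=C_j=O$ and algebraically regroup the resulting triple sum into four perfect Frobenius squares.  By the product rule applied to \eqref{C-15-1}, I would first write
\begin{align*}
\frac{\d}{\dt}\mathcal{E}(U,V) &= -\frac{1}{N^2}\sum_{i,j=1}^N\bigl(\langle \dot U_i, U_j\rangle_\tF + \langle U_i, \dot U_j\rangle_\tF\bigr)\langle V_i,V_j\rangle_\tF \\
&\quad -\frac{1}{N^2}\sum_{i,j=1}^N\langle U_i,U_j\rangle_\tF\bigl(\langle \dot V_i,V_j\rangle_\tF + \langle V_i,\dot V_j\rangle_\tF\bigr),
\end{align*}
and then relabel $i\leftrightarrow j$ and invoke $\langle A,B\rangle_\tF=\overline{\langle B,A\rangle_\tF}$ to collapse each pair of summands into twice the real part of a single summand, so that only one copy needs to be carried along.

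Next I would substitute $\dot U_j$ from \eqref{C-10-1} and split the outcome into four contributions corresponding to the two terms in the $\kappa_1$ sum and the two in the $\kappa_2$ sum.  Using the cyclic property of the trace, the summand over indices $i,j,k$ can be recast so that one index contracts with $U_j$ to produce a factor of the form $U_kU_j^\dagger$ or $U_j^\dagger U_k$, weighted by $\langle V_j,V_k\rangle_\tF$ or its conjugate.  Pairing the two $\kappa_1$ pieces and exploiting $(U_kU_j^\dagger)^\dagger=U_jU_k^\dagger$ (and likewise $(U_j^\dagger U_k)^\dagger=U_k^\dagger U_j$ for the $\kappa_2$ pieces), the cross terms that arise in the expansion of $\|\cdot\|_\tF^2$ assemble exactly into $-\tfrac{\kappa_1}{N}\sum_j\|X_j\|_\tF^2-\tfrac{\kappa_2}{N}\sum_j\|Y_j\|_\tF^2$, where
\begin{equation*}
X_j := \frac{1}{N}\sum_{i=1}^N\bigl(\langle V_j,V_i\rangle_\tF\, U_iU_j^\dagger - \langle V_i,V_j\rangle_\tF\, U_jU_i^\dagger\bigr),
\end{equation*}
and $Y_j$ is the analogous expression built from $U_j^\dagger U_i$ and $U_i^\dagger U_j$.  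The identical manipulation performed on the $V$-equation then supplies the remaining two squared terms with the roles of $U$ and $V$ interchanged, which together reproduce precisely the formula of the lemma.

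The main obstacle I anticipate is purely bookkeeping: one must faithfully track the four independent index summations, the complex conjugation introduced by the $i\leftrightarrow j$ swap, and the cyclic permutations of trace factors, in order to confirm that all cross terms combine into the stated Hermitian squares with no residual contributions.  Once this algebraic identity is verified, the non-increase of $\mathcal{E}$ along the flow is automatic since $\kappa_1,\kappa_2\geq 0$, so no additional estimate is required.
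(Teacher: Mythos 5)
Your proposal is correct, but it takes a genuinely different route from the paper. The paper never differentiates $\mathcal{E}(U,V)$ directly: it invokes Proposition 4.1 of \cite{H-P2}, which already provides the dissipation identity for the full Lohe tensor model,
\begin{equation*}
\frac{\d}{\dt}\mathcal{V}(T)=-\frac{1}{N}\sum_{j=1}^N\sum_{i_*\in\{0,1\}^4}\kappa_{i_*}\big\|M^{i_*}(T_c)M^{i_*}(T_j)^\dagger-M^{i_*}(T_j)M^{i_*}(T_c)^\dagger\big\|_\tF^2,
\end{equation*}
and then specializes it to quadratically separable tensors $T_i=U_i\otimes V_i$ under the coupling choice \eqref{C-13}, where only four index vectors $i_*$ survive; for separable tensors the reshaped products factor, e.g. $M^{(0,1,1,1)}(T_i)M^{(0,1,1,1)}(T_j)^\dagger=\langle V_j,V_i\rangle_\tF\, U_j^\dagger U_i$, and this factorization converts the general formula into exactly the four squares of the lemma. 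Your computation is instead self-contained, and its key steps do close up: the symmetrization (relabel $i\leftrightarrow j$ and use $\langle A,B\rangle_\tF=\overline{\langle B,A\rangle_\tF}$ to reduce to twice a real part), the substitution of \eqref{C-10-1} with $B_j=C_j=O$, and the regrouping via trace cyclicity match term by term with the expansion of $\tr(X_jX_j^\dagger)$ (note each $X_j$ and $Y_j$ is skew-Hermitian, so after an $i\leftrightarrow k$ relabeling inside the square every one of the four products in the expansion is accounted for), and the prefactors agree: both sides reduce to $-\tfrac{2\kappa_1}{N^3}\,\Re\sum_{i,j,k}(\cdots)$ and its three analogues. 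What your approach buys is independence from the external reference and from the reshaping operators $M^{i_*}$, at the cost of heavier index bookkeeping; what the paper's approach buys is brevity and the conceptual observation that this lemma is nothing but the restriction of the general tensor dissipation law to the separable sector. Both arguments are valid proofs of the stated identity, and in either case the non-increase of $\mathcal{E}$ follows immediately from $\kappa_1,\kappa_2\geq 0$.
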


\begin{proof}
It follows from Proposition 4.1 in \cite{H-P2} that $\mathcal V(T)$ in \eqref{C-15} for the LT model \eqref{LT} satisfies 
\begin{equation}\label{C-12}
\frac{\d}{\dt}\mathcal{V}(T) =-\frac{1}{N}\sum_{j=1}^N\sum_{i_*\in\{0, 1\}^4}\kappa_{i_*}\|M^{i_*}(T_c)M^{i_*}(T_j)^\dagger-M^{i_*}(T_j)M^{i_*}(T_c)^\dagger\|_\tF^2.
\end{equation}
Here, we recall from \cite[Definition 3.1]{H-P2} that for a rank-4 tensor $T$,  $\mathcal M^{i_*}(T)$ is defined as a rank-2 tensor reshaped from $T$ or it can be understood as a bijective linear map which conserves the Frobenius norm. Since we only consider four types among $i_*\in \{ 0,1 \}^4$ as in \eqref{C-9-1}, we set  
\begin{equation} \label{C-13}
\kappa_{(0,1,1,1)}=\kappa_{(1, 1, 0, 1)}=\kappa_1,\quad \kappa_{(1, 0, 1, 1)}=\kappa_{(1, 1, 1, 0)}=\kappa_2,\quad \kappa_{i_*}=0\quad\text{for other }i_*.
\end{equation}
On the other hand, in \eqref{C-12}, we are concerned with the term $M^{i_*}(T_i)M^{i_*}(T_j)^\dagger$ for $T_i = U_i \otimes V_i$ and $i_*$ in \eqref{C-13}, for instance, if $i_*=(0, 1, 1, 1)$, then we have 
\begin{align*}
[M^{i_*}(T_i)M^{i_*}(T_j)^\dagger]_{\alpha_{20}\alpha_{21}}&=[T_i]_{\alpha_{11}\alpha_{20}\alpha_{31}\alpha_{41}}[\bar{T}_j]_{\alpha_{11}\alpha_{21}\alpha_{31}\alpha_{41}}\\
&=[U_i]_{\alpha_{11}\alpha_{20}}[V_i]_{\alpha_{31}\alpha_{41}}[\bar{U}_j]_{\alpha_{11}\alpha_{21}}[\bar{V}_j]_{\alpha_{31}\alpha_{41}}=\langle V_j, V_i\rangle_\tF  [U_j^\dagger U_i]_{\alpha_{21}\alpha_{20}}.
\end{align*}
If we repeat similar calculations as above, we obtain the desired dissipative estimate. 
\end{proof}

\subsection{The  DM model} \label{sec:3.3}
In this subsection, we further reduce the DM model \eqref{C-10-1} to the model defined on the product of two unitary matrices $\Un \times \Um$ which we call as the DUM model. By the construction of model \eqref{C-10-1}, we know that the Frobenius norms of $U_j$ and $V_j$ are conserved. In addition to the conservation of the Frobenius norms, we show that unitarity is also preserved, when rectangular matrices are replaced by square matrices.

\begin{lemma}\label{L3.2}
Suppose that the system parameters and initial data satisfy 
\[
d_1=d_2=n,\quad d_3=d_4=m,\quad U_j^0\in\mathbf{U}(n),\quad V_j^0\in\mathbf{U}(m),
\]
and let $\{(U_j, V_j)\}$ be a solution to \eqref{C-10-1}. Then, one has 
\begin{align*}\label{C-14}
U_j(t)\in\mathbf{U}(n),\quad V_j(t)\in\mathbf{U}(m),\quad t>0,\quad j=1,\cdots,N.
\end{align*}
\end{lemma}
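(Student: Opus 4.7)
The plan is to invoke uniqueness of ODE solutions. Since \eqref{C-10-1} is a finite-dimensional system with polynomial (hence locally Lipschitz) right-hand side, it suffices to verify that the vector field is tangent to the submanifold $\mathbf{U}(n)^N\times\mathbf{U}(m)^N$ at every point of it. Equivalently, I must show that
\[
\frac{\textup{d}}{\textup{d}t}\bigl(U_j U_j^\dagger\bigr) = 0\quad\text{and}\quad \frac{\textup{d}}{\textup{d}t}\bigl(V_j V_j^\dagger\bigr) = 0, \qquad 1 \leq j \leq N,
\]
whenever all $U_k,V_k$ are unitary at the given instant. By the symmetric role of $(U,V)$ in \eqref{C-10-1}, it suffices to treat the $U$-equation; the $V$-equation then follows by relabeling.

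First I would differentiate $U_jU_j^\dagger$ and substitute \eqref{C-10-1}, splitting the result into the linear free-flow part coming from $B_jU_j$ and the two cubic contributions indexed by $\kappa_1$ and $\kappa_2$. The four $\kappa_2$-terms cancel purely algebraically, without any use of unitarity: the two summands arising from $\dot U_jU_j^\dagger$ and the two from $U_j\dot U_j^\dagger$ pair up as exact opposites once one applies $\overline{\langle V_j,V_k\rangle_\tF}=\langle V_k,V_j\rangle_\tF$. For the $\kappa_1$-terms, invoking $U_j^\dagger U_j = U_jU_j^\dagger = I_n$ (equivalent for square matrices) one at a time converts $U_kU_j^\dagger U_jU_j^\dagger$, $U_jU_k^\dagger U_jU_j^\dagger$, $U_jU_j^\dagger U_j U_k^\dagger$, and $U_jU_j^\dagger U_k U_j^\dagger$ into $U_kU_j^\dagger$, $U_jU_k^\dagger$, $U_jU_k^\dagger$, and $U_kU_j^\dagger$ respectively, after which the four summands collapse to
\[
\langle V_j,V_k\rangle_\tF U_kU_j^\dagger - \langle V_k,V_j\rangle_\tF U_jU_k^\dagger + \langle V_k,V_j\rangle_\tF U_jU_k^\dagger - \langle V_j,V_k\rangle_\tF U_kU_j^\dagger = 0
\]
for each $k$. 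Hence the entire cubic part of $\dot U_jU_j^\dagger + U_j\dot U_j^\dagger$ vanishes under the unitarity ansatz.

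The remaining linear contribution reads $(B_jU_j)U_j^\dagger + U_j(B_jU_j)^\dagger$, and I would show it vanishes by expanding the tensor contraction $[B_jU_j]_{\alpha\beta} = [B_j]_{\alpha\beta\gamma\delta}[U_j]_{\gamma\delta}$ in components and combining the skew-Hermitian symmetry \eqref{C-10-1-1} of $B_j$ with the unitary relation $\sum_\nu[U_j]_{\mu\nu}\overline{[U_j]_{\mu'\nu}}=\delta_{\mu\mu'}$. An identical argument with $C_j,V_j$ in place of $B_j,U_j$ handles the $V$-equation, and the conclusion then follows from the uniqueness of solutions to \eqref{C-10-1} with initial data in the unitary submanifold. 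The main obstacle is this linear cancellation: the cubic bookkeeping is a routine pairing exercise once unitarity is imposed, whereas translating the skew-Hermitian symmetry \eqref{C-10-1-1} of the rank-4 tensor $B_j$ through the double contraction with $U_j$ and $U_j^\dagger$ demands careful index tracking against the unitary relations.
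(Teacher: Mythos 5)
Your handling of the cubic terms is correct and is essentially the computation in the paper: there the $\kappa_1$- and $\kappa_2$-sums are grouped as $(P_j-P_j^\dg)U_j+U_j(Q_j-Q_j^\dg)$ with
\[
P_j:=\frac{\kp_1}{N}\sum_{k=1}^N\langle V_j,V_k\rangle_\tF\,U_kU_j^\dg,\qquad
Q_j:=\frac{\kp_2}{N}\sum_{k=1}^N\langle V_j,V_k\rangle_\tF\,U_j^\dg U_k,
\]
so that in $\frac{\d}{\dt}(U_jU_j^\dg)$ the $\kappa_2$-part cancels identically (your ``purely algebraic'' cancellation) and the $\kappa_1$-part is the commutator $(P_j-P_j^\dg)U_jU_j^\dg-U_jU_j^\dg(P_j-P_j^\dg)$, which vanishes at a unitary configuration (your pairing). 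Whether one then concludes by tangency plus ODE uniqueness, as you do, or by the paper's conservation law $\frac{\d}{\dt}\|I_n-U_jU_j^\dg\|_\tF=0$ in \eqref{C-16}, is only a difference of packaging.

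The genuine gap is exactly the step you defer to ``careful index tracking'': the identity $(B_jU_j)U_j^\dg+U_j(B_jU_j)^\dg=O$ for unitary $U_j$ cannot be derived from \eqref{C-10-1-1}, because it is false in that generality. Condition \eqref{C-10-1-1} says only that the contraction map $U\mapsto B_jU$ is skew-Hermitian on $(\bbc^{n\times n},\langle\cdot,\cdot\rangle_\tF)$; this conserves the Frobenius norm of $U_j$ but does not preserve the submanifold $\Un\subset\bbc^{n\times n}$. Concretely, take $n=2$ and let $B$ be the real rank-$4$ tensor whose only nonzero entries are $[B]_{1211}=1$ and $[B]_{1112}=-1$; it satisfies \eqref{C-10-1-1}, yet at $U=I_2$ one has $BU=E_{12}$ (a matrix unit), so
\[
(BU)U^\dg+U(BU)^\dg=E_{12}+E_{21}\neq O,
\]
and indeed the free flow $\dot U=BU$, $U(0)=I_2$, gives $U(t)=\begin{pmatrix}\cos t&\sin t\\ 0&1\end{pmatrix}$, which leaves $\mathbf{U}(2)$ instantly. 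The paper's proof does not run into this because it tacitly treats $B_j$ as \emph{left multiplication by a skew-Hermitian matrix}, writing the free-flow contribution as $B_jU_jU_j^\dg-U_jU_j^\dg B_j$; that step is legitimate precisely when $B_j$ has the multiplicative structure made explicit just after the lemma, namely $[B_j]_{\alpha_1\beta_1\alpha_2\beta_2}=[-\mi H_j]_{\alpha_1\alpha_2}\delta_{\beta_1\beta_2}$ (or, more generally, $B_jU=\Omega U+U\Sigma$ with $\Omega,\Sigma$ skew-Hermitian), in which case the free flow also contributes a commutator that vanishes at $U_jU_j^\dg=I_n$. So to close your argument you must add this structural hypothesis on $B_j$ (and likewise on $C_j$); no index bookkeeping will extract the cancellation from \eqref{C-10-1-1} alone, and with only \eqref{C-10-1-1} the asserted invariance itself fails.
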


\begin{proof}
Since $U_j$ and $V_j$ have the same structure, we focus only on $U_j$. We rewrite the dynamics of $U_j$ as a simpler form:
\begin{equation*}
\dot U_j  = B_j U_j + (P_j- P_j^\dg) U_j + U_j (Q_j-Q_j^\dg),
\end{equation*}
where $P_j$ and $Q_j$ are defined as
\begin{align*}
P_j:= \frac{\kp_1}{N} \sum_{k=1}^N \langle V_j,V_k\rangle_\tF U _k U_j^\dg,\quad Q_j := \frac{\kp_2}{N} \sum_{k=1}^N \langle V_j,V_k\rangle_\tF U_j^\dg U_k.
\end{align*}
Then, one has
\begin{align*}
\frac{\d}{\dt} (U_j U_j^\dg ) & = B_jU_j U_j^\dg - U_jU_j^\dg B_j +  (P_j- P_j^\dg) U_j U_j^\dg + U_j (Q_j- Q_j^\dg) U_j^\dg \\
&\hspace{0.5cm}- U_jU_j^\dg (P_j-P_j^\dg) - U_j ( Q_j- Q_j^\dg) U_j^\dg  \\
& = (B_j + P_j-P_j^\dg) U_j U_j^\dg - U_jU_j^\dg(B_j +P_j- P_j^\dg).
\end{align*}
By straightforward calculation, we find
\begin{equation}  \label{C-16}
 \frac{\d}{\dt} \|I_n-U_j U_j^\dg\|_\tF = 0. 
\end{equation}
Since we assume $U_j^0 \in \Un$, i.e., $I_n - U_jU_j^\dg=O$, the relation \eqref{C-16} yields the desired result. 
\end{proof}

Due to the unitarity, system \eqref{C-10-1} can further be reduced to the model on $\Un \times \Um$ by using the relations:
\begin{equation*}
U_j U_j^\dg = I_n = U_j^\dg U_j,\quad V_j V_j^\dg = I_m = V_j ^\dg V_j,\quad j=1,\cdots,N.
\end{equation*}
Thus, the DM model reads as
\begin{align*} 
\begin{cases}
\dot{U}_j=B_j U_j + \displaystyle\frac{\kappa}{N}\sum_{k=1}^N\left(
\langle V_j, V_k\rangle_\tF~U_k
-\langle V_k, V_j\rangle_\tF~U_jU_k^\dagger U_j\right),\\
\dot{V}_j=C_j V_j + \displaystyle\frac{\kappa}{N}\sum_{k=1}^N\left(
\langle U_j, U_k\rangle_\tF~V_k
-\langle U_k, U_j\rangle_\tF~V_jV_k^\dagger V_j\right),
\end{cases}
\end{align*}
where $\kappa=\kappa_1+\kappa_2$. \newline

Note that natural frequency tensors  $B_j$ and $C_j$ have rank-4 satisfying skew-symmetric properties as in \eqref{C-10-1-1}. In order to give a meaning of Hamiltonian, we associate two Hermitian matrices, namely, $H_j \in \bbc^{n\times n}$ and $G_j \in \bbc^{m \times m}$:
\[
[B_j]_{\alpha_1\beta_1\alpha_2\beta_2}=:[-\mathrm{i} H_j]_{\alpha_1\alpha_2}\delta_{\beta_1\beta_2},\quad [C_j]_{\gamma_1\delta_1\gamma_2\delta_2}=:[-\mathrm{i}G_j]_{\gamma_1\gamma_2}\delta_{\delta_1\delta_2}.
\]
Then, system \eqref{C-15} reduces to the model on $\Un\times \Um$:
\begin{align}  \label{C-17}
\begin{cases}
\dot{U}_j=-\mathrm{i} H_j U_j+\displaystyle\frac{\kappa}{N}\sum_{k=1}^N\left(
\langle V_j, V_k\rangle_\tF~U_k
-\langle V_k, V_j\rangle_\tF~U_jU_k^\dagger U_j\right),\\
\dot{V}_j=-\mathrm{i}G_j V_j+\displaystyle\frac{\kappa_{1}}{N}\sum_{k=1}^N\left(
\langle U_j, U_k\rangle_\tF~V_k
-\langle U_k, U_j\rangle_\tF~V_jV_k^\dagger V_j\right),
\end{cases}
\end{align}
where $H_j U_j$ and $G_j V_j$ are now usual matrix products.  Then as in Lemma \ref{L3.1}, system \eqref{C-17} also satisfies the dissipative energy estimate. Since the proof can be directly obtained from Lemma \ref{L3.1}, we omit it.

\begin{corollary}
Let $\{(U_i, V_i)\}$ be a solution to \eqref{C-17} with $H_j = G_j \equiv O$. Then the Lyapunov functional \eqref{C-15-1} satisfies 
\[
\frac{\d}{\dt}\mathcal{E}(U,V)=-\frac{\kappa}{N}\sum_{j=1}^N\left(\|\dot{U}_j\|_\tF^2+\|\dot{V}_j \|_\tF^2\right),\quad t>0. 
\]
\end{corollary}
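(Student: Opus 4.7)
The plan is to deduce the identity as a direct specialization of Lemma \ref{L3.1} to the unitary setting. Lemma \ref{L3.2} guarantees $U_j(t)\in\Un$ and $V_j(t)\in\Um$ for all $t>0$, so that $U_jU_j^\dg=U_j^\dg U_j=I_n$ and $V_jV_j^\dg=V_j^\dg V_j=I_m$. Using these relations, the two cubic blocks weighted by $\kappa_1$ and $\kappa_2$ in \eqref{C-10-1} collapse onto one another; summing them yields exactly \eqref{C-17} with $\kappa:=\kappa_1+\kappa_2$. Moreover, the correspondence \eqref{C-11} between $A_j$ and $(B_j,C_j)$ shows that $H_j\equiv G_j\equiv O$ is equivalent to $B_j\equiv C_j\equiv O$, which is precisely the hypothesis of Lemma \ref{L3.1}, so that lemma is directly applicable.

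Next I would introduce the matrix shorthands
\[
X_j:=\frac{1}{N}\sum_{i=1}^N\left(\langle V_j,V_i\rangle_\tF U_iU_j^\dg-\langle V_i,V_j\rangle_\tF U_jU_i^\dg\right),\qquad Y_j:=\frac{1}{N}\sum_{i=1}^N\left(\langle V_j,V_i\rangle_\tF U_j^\dg U_i-\langle V_i,V_j\rangle_\tF U_i^\dg U_j\right),
\]
which are exactly the matrices whose squared Frobenius norms appear in the first ($\kappa_1$) and third ($\kappa_2$) terms on the right-hand side of Lemma \ref{L3.1}; analogous shorthands $\widetilde X_j,\widetilde Y_j$ with the roles of $U,V$ interchanged govern the $V$-side. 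A direct manipulation using $U_j^\dg U_j=I_n$ and $U_jU_j^\dg=I_n$ shows that $X_jU_j$ and $U_jY_j$ both coincide with the cubic part of the right-hand side of \eqref{C-17}, so each equals $\dot U_j/\kappa$. Unitary invariance of the Frobenius norm under left or right multiplication then gives
\[
\|X_j\|_\tF=\|X_jU_j\|_\tF=\frac{\|\dot U_j\|_\tF}{\kappa},\qquad \|Y_j\|_\tF=\|U_jY_j\|_\tF=\frac{\|\dot U_j\|_\tF}{\kappa},
\]
and symmetrically $\|\widetilde X_j\|_\tF=\|\widetilde Y_j\|_\tF=\|\dot V_j\|_\tF/\kappa$.

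Finally, I would substitute these four norm identities into the four summands of Lemma \ref{L3.1} and regroup the $\kappa_1$- and $\kappa_2$-contributions, which now share a common scalar factor per index $j$, into the combination $\kappa_1+\kappa_2=\kappa$. Elementary algebra then produces the stated energy identity. The proof is almost entirely bookkeeping; the only step requiring mild care is tracking that the $\kappa_1$-term in Lemma \ref{L3.1} absorbs a right-multiplication by the unitary $U_j$ whereas the $\kappa_2$-term absorbs a left-multiplication, each of which preserves $\|\cdot\|_\tF$ for a different reason but yields the same scalar output thanks to unitarity. I do not foresee any substantial obstacle beyond this accounting.
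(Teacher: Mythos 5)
Your route is the paper's own: the paper omits the proof with the remark that it can be directly obtained from Lemma \ref{L3.1}, and your proposal supplies exactly that specialization --- propagation of unitarity via Lemma \ref{L3.2}, the collapse of the $\kappa_1$- and $\kappa_2$-blocks under $U_jU_j^\dg=U_j^\dg U_j=I_n$, the identities $\dot U_j=\kappa X_jU_j=\kappa U_jY_j$, and invariance of $\|\cdot\|_\tF$ under left or right multiplication by unitaries. All of those steps are correct.

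The problem is your final sentence. Your own norm identities read $\|X_j\|_\tF=\|Y_j\|_\tF=\|\dot U_j\|_\tF/\kappa$ and $\|\tilde X_j\|_\tF=\|\tilde Y_j\|_\tF=\|\dot V_j\|_\tF/\kappa$, and substituting them into the four terms of Lemma \ref{L3.1} gives
\begin{align*}
\frac{\d}{\dt}\mathcal{E}(U,V)
&=-\frac{\kappa_1}{N}\sum_{j=1}^N\Big(\|X_j\|_\tF^2+\|\tilde X_j\|_\tF^2\Big)
-\frac{\kappa_2}{N}\sum_{j=1}^N\Big(\|Y_j\|_\tF^2+\|\tilde Y_j\|_\tF^2\Big)\\
&=-\frac{\kappa_1+\kappa_2}{N\kappa^{2}}\sum_{j=1}^N\Big(\|\dot U_j\|_\tF^2+\|\dot V_j\|_\tF^2\Big)
=-\frac{1}{N\kappa}\sum_{j=1}^N\Big(\|\dot U_j\|_\tF^2+\|\dot V_j\|_\tF^2\Big),
\end{align*}
so the prefactor coming out of the computation is $1/(N\kappa)$, not $\kappa/N$. ``Elementary algebra'' therefore does not produce the stated identity; it produces one differing from it by the factor $\kappa^{2}$, and the two coincide only when $\kappa=1$. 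A sanity check in the commutative case $n=m=1$, where \eqref{C-17} reduces to a Kuramoto-type phase system, confirms the $1/(N\kappa)$ scaling, as does the general principle that for a gradient-type flow $\dot x=-c\,\nabla\mathcal{E}$ one has $\frac{\d}{\dt}\mathcal{E}=-c^{-1}\|\dot x\|^{2}$: dissipation expressed through velocities must carry $\kappa^{-1}$. What your computation actually uncovers is that the corollary as printed carries a typo --- it becomes correct if $\kappa/N$ is replaced by $1/(N\kappa)$, or equivalently if $\dot U_j,\dot V_j$ are replaced by the normalized quantities $\frac{1}{\kappa}\dot U_jU_j^\dg$ and $\frac{1}{\kappa}\dot V_jV_j^\dg$. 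A blind proof should have ended by deriving the $1/(N\kappa)$ identity and flagging the mismatch; as written, your last step asserts an equality that your own intermediate identities contradict. One further small slip: the equivalence ``$H_j\equiv G_j\equiv O$ iff $B_j\equiv C_j\equiv O$'' follows from the unnumbered definition $[B_j]_{\alpha_1\beta_1\alpha_2\beta_2}=[-\mathrm{i}H_j]_{\alpha_1\alpha_2}\delta_{\beta_1\beta_2}$ preceding \eqref{C-17}, not from \eqref{C-11}, which relates $A_j$ to $(B_j,C_j)$.
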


\section{Emergent dynamics of rank-4 quadratically separable states} \label{sec:4}
\setcounter{equation}{0}
In this section, we study the emergent behavior of the QSS for the Lohe tensor model by investigating the dynamics of the DUM model which reads as follows: 
 \begin{equation}  \label{Z-0} 
\begin{cases}
\dot{U}_j=\displaystyle -\mi H_j U_j + \frac{\kappa}{N}\sum_{k=1}^N\left( \langle V_j, V_k\rangle_\tF~U_k
-\langle V_k, V_j\rangle_\tF~U_jU_k^\dagger U_j\right), \quad t >0, \\
\dot{V}_j=\displaystyle -\mi G_j   V_j + \frac{\kappa}{N}\sum_{k=1}^N\left(
\langle U_j, U_k\rangle_\tF~V_k
-\langle U_k, U_j\rangle_\tF~V_jV_k^\dagger V_j\right),\\
(U_j,V_j)(0) =(U_j^0, V_j^0) \in\mathbf{U}(n) \times  \mathbf{U}(m),  \quad 1\leq j \leq N, 
\end{cases}
\end{equation}
where $H_j \in  \bbc^{n\times n} $ and $G_j \in \bbc^{m \times m}$ are Hermitian matrices which play roles of natural frequencies for each oscillator. \newline

First, we recall several definitions of emergent behaviors for \eqref{Z-0}.
\begin{definition}  \label{D4.1}
\emph{\cite{H-R}}
Let $(\mathcal U,\mathcal V):=\{U_j,V_j\}_{j=1}^N$ be a solution to \eqref{Z-0}.  
\begin{enumerate}
\item
System \eqref{Z-0} exhibits complete aggregation if the following estimate holds:
\begin{equation} \label{Z-0-3} 
\lim_{t\to\infty} \max_{1\leq i,j\leq N} \Big( \|U_i(t) - U_j(t) \|_\tF + \|V_i(t) - V_j(t)\|_\tF \Big) =0.
\end{equation} 
\item
The state  $(\mathcal U,\mathcal V)$ tends to a phase-locked state if the following relation holds:
\begin{equation*}
\exists~ \lim_{t\to\infty} U_i(t) U_j(t)^\dg \quad \textup{and}\quad \exists~ \lim_{t\to\infty} V_i(t) V_j(t)^\dg.
\end{equation*}
\end{enumerate}
\end{definition}

\vspace{0.5cm}

In order to investigate emergent behaviors for \eqref{Z-0}, we denote the following quantities for notational simplicity:  
\begin{align} \label{Z-0-5}
\begin{aligned}
&X_{ij}:= U_i U_j^\dg, \quad S_{ij} := I_n - U_i U_j^\dg, \quad d_{ij} : = \langle U_i,U_j\rangle_\tF, \\
&Y_{ij} := V_i V_j^\dg,\quad T_{ij} := I_m - V_iV_j^\dg \quad c_{ij} := \langle V_i,V_j\rangle_\tF. 
\end{aligned}
\end{align}
It follows from simple observations that 
\begin{align*}
&\|U_j\|_\tF=n,\quad \|V_j\|_\tF = m,\quad |d_{ij}|\leq n,\quad |c_{ij}|\leq m, \\
&\|S_{ij}\|^2_\tF = \|U_i - U_j\|_\tF^2 = 2\textup{Re}( n-d_{ij})  ,\quad \|T_{ij}\|_\tF^2 = \|V_i - V_j\|_\tF^2 = 2\textup{Re}(m-c_{ij}). 
\end{align*}
Then, it is easy to see that 
\begin{equation*}
\|U_i - U_j\|_\tF \to 0 \Longleftrightarrow |n-d_{ij}| \to 0,\quad \|V_i - V_j\|_\tF \to 0 \Longleftrightarrow |m-c_{ij}| \to 0.
\end{equation*}
Thus, the complete aggregation in \eqref{Z-0-3} can be represented in terms of the quantities in \eqref{Z-0-5}:
\begin{align*}
&\lim_{t\to\infty} \max_{1\leq i,j\leq N } \Big( \|S_{ij}(t)\|_\tF + \|T_{ij}(t)\|_\tF \Big) = 0 \quad \textup{or} \quad \lim_{t\to\infty} \max_{1\leq i,j\leq N }  \Big(  |n-d_{ij}(t) | + |m-c_{ij}(t)| \Big) =0.
\end{align*}
In this regard, we define aggregation quantities: for $t>0$, 
\begin{align*}
&\mathcal D(\mathcal U(t)) := \max_{1\leq i,j\leq N} \|U_i(t) - U_j(t)\|_\tF,\quad  \mathcal  S(\mathcal U(t)):= \max_{1\leq i,j\leq N} |n- d_{ij}(t)|, \\
&\mathcal D(\mathcal V(t)) := \max_{1\leq i,j\leq N} \|V_i(t) - V_j(t)\|_\tF, \quad \mathcal S(\mathcal V(t)):= \max_{1\leq i,j\leq N} |m- c_{ij}(t)|. 
\end{align*}
Note that $\mathcal S(\mathcal U)$ has a second-order with respect to the state $\mathcal U$, whereas $\mathcal D(\mathcal U)$ has a first-order. \newline

In the following  two subsections, we present emergent dynamics for homogeneous and heterogeneous ensembles respectively. 
\subsection{Homogeneous ensemble} \label{sec:4.1} 
In this subsection,  we deal with the emergent dynamics of \eqref{Z-0} with  a homogeneous ensemble: 
\begin{equation*}
H_j \equiv H, \quad G_j \equiv G,\quad j=1,\cdots,N,
\end{equation*}
and by the solution splitting property, we may assume 
\begin{equation*}
H=O,\quad G=O. 
\end{equation*}
In this setting, system \eqref{C-10-1} becomes
\begin{align}\label{Z-1}
\begin{cases}
\dot{U}_j=\displaystyle\frac{\kappa}{N}\sum_{k=1}^N\left(
\langle V_j, V_k\rangle_\tF~U_k
-\langle V_k, V_j\rangle_\tF~U_jU_k^\dagger U_j\right), \quad t > 0, \\
\dot{V}_j=\displaystyle\frac{\kappa}{N}\sum_{k=1}^N\left(
\langle U_j, U_k\rangle_\tF~V_k
-\langle U_k, U_j\rangle_\tF~V_jV_k^\dagger V_j\right),\\
(U_j,V_j)(0) =(U_j^0, V_j^0) \in\mathbf{U}(n) \times  \mathbf{U}(m). 
\end{cases}
\end{align}
Without loss of generality, we may assume 
\begin{equation} \label{Z-8}
n\geq m.
\end{equation}
Our goal of this subsection is to find a sufficient condition under which 
\begin{equation*}
\lim_{t\to\infty}\mathcal L(t)=0,\quad  \mathcal L(t):=  \mathcal D(\mathcal U(t))  +\mathcal  D(\mathcal V(t)) + \mathcal S(\mathcal U(t) ) +\mathcal  S(\mathcal V(t)) ,
\end{equation*}
where $\mathcal L=\mathcal  L(t)$ is called as a total aggregation functional. \newline

In \cite{H-R}, it suffices to study the temporal evolutions of $\mathcal D(\mathcal U)$. However in our case, time evolutions of $\mathcal S(\mathcal U)$ as well as $\mathcal S(\mathcal V)$ are needed to achieve complete aggregation estimates. Below, we derive a differential inequality for $\mathcal L$.  
\begin{lemma} \label{L4.1} 
Let $\{(U_j,V_j)\}$ be a solution to \eqref{Z-1} with \eqref{Z-8}. Then, the total  aggregation functional $\mathcal L$ satisfies
\begin{align} \label{Z-9}
\begin{aligned}
\dot  {\mathcal L} \leq   -2\kp (m-4\sqrt n)\mathcal L +  \kp (4n+9)\mathcal L^2 +\kp \left( 2n + \frac 83\right)\mathcal L^3, \quad t>0.
\end{aligned}
\end{align}
\end{lemma}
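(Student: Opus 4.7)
The plan is to differentiate each of the four aggregation functionals composing $\mathcal{L}(t)$ at a time-dependent pair of indices realizing the corresponding maximum, and then to sum the four one-sided bounds. Since $\mathcal{L}$ is a pointwise maximum of finitely many smooth scalar functions of $t$, its upper Dini derivative is well defined and coincides with the time derivative of the active component evaluated at a maximizing pair; this is the standard maximum-index device used, for instance, in the proof of complete aggregation for the Lohe matrix model in \cite{H-R}.

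For the first-order quantity $\mathcal{D}(\mathcal{U})$ I would work with $\|U_i-U_j\|_\tF^2$ and apply $\frac{\d}{\dt}\|U_i-U_j\|_\tF^2 = 2\,\textup{Re}\,\langle \dot U_i - \dot U_j,\, U_i-U_j\rangle_\tF$, substitute \eqref{Z-1}, and split every cubic term by inserting $\pm\langle V_k,V_j\rangle_\tF U_k$ and $\pm U_jU_k^\dagger U_j$ so that a factor $\langle V_k,V_j\rangle_\tF$ close to $m$ multiplies a Frobenius-small object such as $(U_i-U_j)U_k^\dagger U_i + U_jU_k^\dagger(U_i-U_j)$. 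Unitarity bounds the operator norms of the remaining factors by $1$, leaving a leading dissipative contribution of order $-2\kappa m\,\|U_i-U_j\|_\tF^2$ plus cubic and quartic remainders expressible in the four aggregation quantities. For the second-order quantity $\mathcal{S}(\mathcal{U})$ I would instead differentiate $n-\langle U_i,U_j\rangle_\tF = n - \tr(U_i^\dagger U_j)$ directly, feed in the dynamics \eqref{Z-1}, and perform the same additive splitting; the elementary bound $|n-d_{ij}| = |\langle U_i, U_i - U_j\rangle_\tF| \leq \sqrt{n}\,\|U_i-U_j\|_\tF$, which follows from Cauchy--Schwarz and $\|U_i\|_\tF = \sqrt{n}$, serves to convert second-order quantities back to first-order ones wherever this is needed to close the estimate. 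The corresponding bounds for $\mathcal{D}(\mathcal{V})$ and $\mathcal{S}(\mathcal{V})$ then follow by the exact symmetry of \eqref{Z-1} in the $U$- and $V$-variables.

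Summing the four Dini-derivative bounds and reducing every occurring aggregation quantity to $\mathcal{L}$ via the monotone inequalities $|n-d_{ij}|\leq \sqrt{n}\,\mathcal{L}$, $|m-c_{ij}|\leq \sqrt{m}\,\mathcal{L}$ and $\|U_i-U_j\|_\tF, \|V_i-V_j\|_\tF \leq \mathcal{L}$, combined with the normalization $n\geq m$ from \eqref{Z-8}, one gathers all linear, quadratic, and cubic powers of $\mathcal{L}$ on the right-hand side of \eqref{Z-9}. The leading coefficient $-2\kappa(m-4\sqrt{n})$ emerges as the direct dissipation $-2\kappa m$ together with the worst-case imaginary-part corrections, each of size $\sqrt{n}\,\mathcal{L}$, distributed across the four components; the constants $4n+9$ and $2n+8/3$ in the quadratic and cubic terms then arise from a careful tally of the remaining cross terms.

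The principal obstacle is the bookkeeping, together with the following structural mismatch that must be navigated: the first-order differences $\|U_i-U_j\|_\tF$ are controlled by $\mathcal{L}$ directly, whereas the inner-product defects $|n-d_{ij}|$ and $|m-c_{ij}|$ are only controlled at rate $\sqrt{n}$ (resp.\ $\sqrt{m}$) times a first-order quantity, because their complex imaginary parts are genuinely first-order in $\|U_i-U_j\|_\tF$. This is precisely what forces the technical restriction $\min(n,m)>4\sqrt{\max(n,m)}$ in the complex-unitary case. In the real orthogonal setting of Appendix \ref{sec:app.C} the inner product $d_{ij}$ is real, so $n-d_{ij} = \tfrac{1}{2}\|U_i-U_j\|_\tF^2$ is genuinely second-order, the $4\sqrt{n}$ correction disappears, and only $m>0$ is needed, which is the reason the appendix can dispense with any size assumption on $n,m$.
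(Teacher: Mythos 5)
Your proposal follows the same architecture as the paper's proof in Appendix \ref{sec:app.A}: one-sided differential inequalities for each of $\mathcal D(\mathcal U)$, $\mathcal D(\mathcal V)$, $\mathcal S(\mathcal U)$, $\mathcal S(\mathcal V)$ at maximizing index pairs (Lemma \ref{LA.1} and Lemma \ref{LA.2}), then summation using the normalization $n\geq m$ from \eqref{Z-8}. Differentiating $\|U_i-U_j\|_\tF^2$ directly is an equivalent bookkeeping to the paper's route through the matrix variable $S_{ij}=I_n-U_iU_j^\dagger$, since $\|S_{ij}\|_\tF=\|U_i-U_j\|_\tF$; and your closing structural diagnosis (the imaginary parts of the defects $n-d_{ij}$, $m-c_{ij}$ are genuinely first order, this is what generates the $\sqrt n$-corrections and the restriction $m>4\sqrt n$, and it disappears on $\SOn\times\SOm$) is exactly the content of Remark \ref{RA.1} and Appendix \ref{sec:app.C}.

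The gap is in your reduction step, and it is fatal to the stated constants. You propose to absorb the defects into $\mathcal L$ via $|n-d_{ij}|\leq\sqrt n\,\mathcal L$ and $|m-c_{ij}|\leq\sqrt m\,\mathcal L$. These inequalities are true but are the wrong ones: the defects must be absorbed at cost one, through $|n-d_{ij}|\leq\mathcal S(\mathcal U)$, $|m-c_{ij}|\leq\mathcal S(\mathcal V)$ and the grouping $\mathcal S(\mathcal U)+\mathcal S(\mathcal V)\leq\mathcal L$; this is the entire reason $\mathcal S(\mathcal U)$ and $\mathcal S(\mathcal V)$ are carried as separate summands of $\mathcal L$ (see the discussion preceding Lemma \ref{L4.1}). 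Concretely, the dangerous linear terms in the four inequalities \eqref{AA-24} total $8\kp\sqrt n\,\mathcal S(\mathcal V)+8\kp\sqrt m\,\mathcal S(\mathcal U)$. The paper bounds this by $8\kp\sqrt n\,(\mathcal S(\mathcal U)+\mathcal S(\mathcal V))\leq 8\kp\sqrt n\,\mathcal L$, which against the dissipation $-2\kp m\,\mathcal L$ yields exactly $-2\kp(m-4\sqrt n)\mathcal L$. Your reductions yield instead $8\kp\sqrt n\cdot\sqrt m\,\mathcal L+8\kp\sqrt m\cdot\sqrt n\,\mathcal L=16\kp\sqrt{nm}\,\mathcal L$, and since $n\geq m$ forces $\sqrt{nm}\geq m$, the resulting linear coefficient $-2\kp m+16\kp\sqrt{nm}\geq 14\kp m$ is positive: all dissipation is lost and \eqref{Z-9} cannot follow. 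Even with cost-one absorption, your accounting (``corrections of size $\sqrt n\,\mathcal L$ distributed across the four components'') gives $16\kp\sqrt n\,\mathcal L$, hence a coefficient $-2\kp(m-8\sqrt n)$ rather than the stated one; only the grouped bound recovers $4\sqrt n$. Finally, be aware that the Cauchy--Schwarz/trace bound $|\textup{tr}(S_{ij}^\dagger)|\leq\sqrt n\,\|S_{ij}\|_\tF$ is legitimately available only in the $\mathcal D$-equations, where differentiating $\max_{i,j}\|S_{ij}\|_\tF$ forces a division by $\|S_{ij}\|_\tF$ (this is the estimate of $\mathcal I_{15}$ in \eqref{AA-9}); in the $\mathcal S$-equations the purely imaginary defect term enters multiplied by the trace of $I_n$, which is $n$, and controlling that term (the estimate of $\mathcal I_{25}$ in \eqref{AA-15}--\eqref{AA-17}) so as to land on the coefficients of \eqref{Z-9} is the single most delicate point of the entire lemma; your plan as written never engages with it.
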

\begin{proof}
Since a proof is lengthy, we provide it in Appendix \ref{sec:app.A}.
\end{proof}

We are now ready to provide a sufficient condition leading to the complete aggregation for \eqref{Z-1}. 

 \begin{theorem} \label{T4.1} 
Suppose that the system parameters and initial data satisfy 
\begin{align} \label{Z-10}
\begin{aligned}
&\textup{(i)}~~n\geq m >4\sqrt n. \\
&\textup{(ii)}~~ \mathcal L^0 <\alpha_{n,m} := \frac{-(12n+27) + \sqrt{ (12n+27)^2 + 48(m-4\sqrt n)(3n+4)  } }{4(3n+4)},
\end{aligned}
\end{align}
 and let $\{(U_j,V_j)\}$ be a solution to \eqref{Z-1}. Then, we have 
 \begin{equation*}
 \lim_{t\to\infty} \mathcal L(t) = 0.
 \end{equation*}
 Moreover, the convergence rate is at least exponential. In other words, system \eqref{Z-1} exhibits complete aggregation with an exponential convergence. 
 \end{theorem}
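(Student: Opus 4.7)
The plan is to combine the differential inequality of Lemma~\ref{L4.1} with a continuity bootstrap and Gr\"onwall's inequality. First, I would factor \eqref{Z-9} as
\[
\dot{\mathcal L} \;\leq\; \kappa\, \mathcal L\, f(\mathcal L),\qquad f(x) := \Bigl(2n+\tfrac{8}{3}\Bigr)x^{2} + (4n+9)\,x - 2(m-4\sqrt{n}).
\]
Under assumption (i) in \eqref{Z-10}, the constant term $-2(m-4\sqrt{n})$ is strictly negative, so $f$ is a convex quadratic with a unique positive root. A direct quadratic-formula computation (multiplying numerator and denominator of the standard root by $3$ to absorb the fraction $8/3$ and using $72(2n+8/3)=48(3n+4)$) identifies that root as precisely $\alpha_{n,m}$. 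Consequently, $f$ is strictly increasing on $[0,\alpha_{n,m}]$ with $f(x)<0$ for $x\in[0,\alpha_{n,m})$ and $f(\alpha_{n,m})=0$.

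Next, I would run a standard continuity argument to trap the trajectory in the sub-threshold region. Define
\[
T^{*} := \sup\{\, t \geq 0 \,:\, \mathcal L(s) < \alpha_{n,m} \text{ for all } s \in [0,t]\,\}.
\]
By continuity of $\mathcal L$ and the strict initial bound $\mathcal L^{0} < \alpha_{n,m}$ from assumption (ii), we have $T^{*} > 0$. Suppose for contradiction that $T^{*}<\infty$. Then on $[0,T^{*})$ one has $f(\mathcal L(t))<0$, hence $\dot{\mathcal L}\leq 0$, so $\mathcal L$ is non-increasing on this interval. Continuity then forces $\mathcal L(T^{*})\leq \mathcal L^{0} < \alpha_{n,m}$, contradicting the definition of $T^{*}$. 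Therefore $T^{*}=\infty$ and $\mathcal L(t)<\alpha_{n,m}$ for all $t\geq 0$.

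Finally, on $[0,\infty)$ the strict monotonicity of $f$ together with $\mathcal L(t)\leq \mathcal L^{0}$ yields $f(\mathcal L(t))\leq f(\mathcal L^{0})=:-\delta$ with $\delta>0$. Plugging back,
\[
\dot{\mathcal L}(t) \;\leq\; -\kappa\,\delta\,\mathcal L(t),\qquad t\geq 0,
\]
and Gr\"onwall's inequality delivers $\mathcal L(t)\leq \mathcal L^{0}\, e^{-\kappa\delta t}$, i.e.\ exponential decay to zero. The substantive analytic effort is already absorbed into Lemma~\ref{L4.1}; the only genuine step remaining for Theorem~\ref{T4.1} is the algebraic identification of $\alpha_{n,m}$ as the positive root of $f$, which is the main (and essentially routine) bookkeeping obstacle, followed by the standard invariant-region/Gr\"onwall closure.
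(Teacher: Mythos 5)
Your proposal is correct and follows essentially the same route as the paper: rewrite the bound of Lemma~\ref{L4.1} as $\dot{\mathcal L} \leq \kappa\,\mathcal L\, f(\mathcal L)$, identify $\alpha_{n,m}$ as the unique positive root of the convex quadratic $f$, and conclude decay from the sub-threshold initial condition. The only difference is that where the paper invokes ``dynamical systems theory,'' you spell out the invariant-region bootstrap and the Gr\"onwall estimate $\mathcal L(t)\leq \mathcal L^{0}e^{-\kappa\delta t}$ explicitly, which is exactly the intended argument.
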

 \begin{proof}
Consider an auxiliary quadratic polynomial:
\begin{equation*}
f(s):= \left( 2n + \frac83\right) s^2 + (4n+9)s - 2(m-4\sqrt n).
\end{equation*}
Since $m>4\sqrt n$, the algebraic relation $f=0$ admits a unique positive root $\alpha_{n,m}$ defined in \eqref{Z-10}(ii).  Then, the relation \eqref{Z-9} is rewritten in terms of $f$:
\begin{equation*}
\frac{\d}{\dt} \mathcal L \leq \kp \mathcal L f(\mathcal L),\quad t>0.
\end{equation*}
Since initial data satisfy \eqref{Z-10}(ii), the desired result follows from dynamical systems theory.  
 \end{proof}
 \begin{remark}
\textup{(i)}~~ In \eqref{Z-10}(i), we have assumed that $n\geq m >4\sqrt n$ which imposes a restriction on the size of $U_i$ such as 
 \begin{equation*}
 n>16. 
 \end{equation*}
 It  also should be mentioned that such restriction arises, for instance,  when we estimate the terms $\mathcal I_{15}$ in \eqref{AA-9} and $\mathcal I_{25}$ in \eqref{AA-17}.  We indeed show in Appendix \ref{sec:app.C} that this technical assumption would be removed.  \newline
 
 \noindent \textup{(ii)}~~Since $m\leq n$, we have
 \begin{equation*}
 \limsup_{n\to\infty} \alpha_{n,m}  \leq \lim_{n\to \infty}  \frac{-(12n+27) + \sqrt{ (12n+27)^2 + 48(n-4\sqrt n)(3n+4)  } }{4(3n+4)} = \frac12.
 \end{equation*}
 \end{remark} 
 \subsection{Heterogeneous ensemble} \label{sec:4.2} 
In this subsection, we study the case of heterogeneous Hamiltonians in which $H_j$ and $G_j$ in \eqref{Z-0} are given to be different in general. 
In order to establish the emergence of the phase-locked state, we will follow a strategy developed in \cite{H-R}. For any two solutions $\{U_j,V_j\}$ and $\{\tilde U_j, \tilde V_j\}$ to \eqref{Z-0}, we define the diameters measuring the dissimilarity of two configurations: 
\begin{align} \label{Z-50-2}
\begin{aligned}
&d(U,\tilde U)(t) := \max_{1\leq i,j\leq N} \|U_i(t) U_j^\dg(t) - \tilde U_i(t) \tilde U_j^\dg(t)\|_\tF,  \\
&d(V,\tilde V)(t) := \max_{1\leq i,j\leq N} \|V_i(t) V_j^\dg(t) - \tilde V_i(t) \tilde V_j^\dg(t)\|_\tF. 
\end{aligned}
\end{align} 
Then, we will show that the diameters above converge to zero:
\begin{equation}  \label{Z-50-3}
\lim_{t\to\infty} \Big( d(U,\tilde U)(t)  + d(V,\tilde V)(t) \Big) = 0.
\end{equation}
As a next step, since our system is autonomous, for any $T>0$, we choose $\tilde U_j$ and $\tilde V_j$ as 
\begin{equation*}
\tilde U_j(t) = U_j(t+T),\quad \tilde V_j(t) = V_j(t+T).
\end{equation*} 
By discretizing the time $t\in \bbr_+$ as $n\in \bbz_+$ and setting $T=m \in \bbz_+$, we conclude that $\{U_i(n)U_j^\dg(n)\}_{n\in \bbz_+}$ and  $\{V_i(n)V_j^\dg(n)\}_{n\in \bbz_+}$ are indeed Cauchy sequences in the complete spaces $\Un$ and $\Um$, respectively. Hence, there exist two constant unitary matrices $U_{ij}^\infty \in \Un$ and $V_{ij}^\infty \in \Um$ such that
\begin{equation*}
\lim_{t\to\infty} \| U_i(t)U_j^\dg(t) - U_{ij}^\infty \|_\tF = 0,\quad \lim_{t\to\infty} \|V_i(t) V_j^\dg(t) - V_{ij}^\infty\|_\tF = 0.
\end{equation*}
Hence, we aim to find a sufficient condition under which  \eqref{Z-50-3} holds. To show \eqref{Z-50-3}, we associate another diameters measuring the difference between two solution configurations $\{U_j,V_j\}$ and $\{\tilde U_j, \tilde V_j\}$:
\begin{align} \label{Z-50-6}
\begin{aligned}
&\mathcal S(U,\tilde U)(t) := \max_{1\leq i,j\leq N} | \langle U_i,U_j\rangle(t) - \langle \tilde U_i,\tilde U_j\rangle (t)|, \\
&\mathcal S(V,\tilde V)(t) := \max_{1\leq i,j\leq N} | \langle V_i,V_j\rangle(t) - \langle \tilde V_i,\tilde V_j\rangle (t)|.
\end{aligned}
\end{align}
Note that 
\begin{equation*}
\mathcal S(U,\tilde U) \leq \sqrt n d(U,\tilde U),\quad \mathcal S(V,\tilde V) \leq \sqrt m d(V,\tilde V).
\end{equation*}
To this end, our goal of this subsection is to find a sufficient framework leading to 
\begin{equation} \label{Z-52}
\lim_{t\to\infty} \mathcal F(t) = 0,\quad \mathcal F(t) := d(U,\tilde U)(t) + d(V,\tilde V)(t) + \mathcal S(U,\tilde U)(t) + \mathcal S(V,\tilde V)(t).
\end{equation}
Below, we derive a differential inequality for $\mathcal F$ in \eqref{Z-52}. 
  
  \begin{lemma} \label{L4.2} 
  Let $\{( U_i, V_i)\}$ and $\{( \tilde U_i,\tilde  V_i)\}$ be any two solutions to \eqref{Z-0} with \eqref{Z-8}, respectively. Then, the aggregation functional $\mathcal F$ satisfies
  \begin{equation} \label{Z-55}
\frac{\d}{\dt} { \mathcal F} \leq  -\kp \left( 2   m - 8  \sqrt n - \frac{ \max\{ \mathcal D(\mathcal H), \mathcal D(\mathcal G)\} }{\kp} \right) \mathcal F   + \kp ( 4n + 22) \mathcal L \mathcal F + 20\kp \mathcal L^2 \mathcal F,
  \end{equation}
  where $\mathcal D(\mathcal H)$ and $\mathcal D(\mathcal G)$ are defined as
  \begin{equation*}
  \mathcal D(\mathcal H):= \max_{1\leq i,j\leq N } \|H_i - H_j\|_\infty, \quad   \mathcal D(\mathcal G):= \max_{1\leq i,j\leq N } \|G_i - G_j\|_\infty.
  \end{equation*}
  \end{lemma}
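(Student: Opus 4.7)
The plan is to mimic the strategy developed for Lemma \ref{L4.1} but now to track the discrepancy between two configurations $\{(U_i, V_i)\}$ and $\{(\tilde U_i, \tilde V_i)\}$ driven by the same natural frequencies $H_j, G_j$. Since $\mathcal F(t)$ is the maximum of finitely many smooth quantities, it is Lipschitz and differentiable almost everywhere; at a.e.\ $t$, $\dot{\mathcal F}$ coincides with the time derivative of whichever of $\|X_{ij} - \tilde X_{ij}\|_\tF$, $\|Y_{ij} - \tilde Y_{ij}\|_\tF$, $|d_{ij} - \tilde d_{ij}|$ or $|c_{ij} - \tilde c_{ij}|$ attains the maximum. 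Thus it suffices to bound each such derivative, at the maximizing pair of indices, by the right-hand side of \eqref{Z-55}.

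I would first isolate the free-flow contributions. Writing $\dot X_{ij} = \dot U_i U_j^\dg + U_i \dot U_j^\dg$ and subtracting the analogue for $\tilde X_{ij}$, the Hamiltonian part collapses to $-\mi H_i(X_{ij}-\tilde X_{ij}) + \mi (X_{ij}-\tilde X_{ij}) H_j$; when paired with $X_{ij}-\tilde X_{ij}$ inside $\tfrac{\d}{\dt}\|\cdot\|_\tF^2 = 2\Re\langle \cdot,\cdot\rangle_\tF$, Hermiticity of $H_i$ and $H_j$ annihilates the real part, so $d(U,\tilde U)$ receives nothing from the free flow; the same holds for $d(V,\tilde V)$. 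By contrast the free-flow contribution to $\dot d_{ij} - \dot{\tilde d}_{ij}$ reduces to $\mi\,\tr((H_i-H_j)(X_{ji}-\tilde X_{ji}))$, which is controlled by $\mathcal D(\mathcal H)\mathcal F$ via trace bounds, and an analogous bound $\mathcal D(\mathcal G)\mathcal F$ holds for $\dot c_{ij}-\dot{\tilde c}_{ij}$. This produces the $\max\{\mathcal D(\mathcal H),\mathcal D(\mathcal G)\}/\kp$ correction in \eqref{Z-55}. For the coupling terms I would then telescope each nonlinearity, for instance
\begin{align*}
\langle V_j, V_k\rangle_\tF U_k - \langle \tilde V_j, \tilde V_k\rangle_\tF \tilde U_k
&= \langle V_j-\tilde V_j, V_k\rangle_\tF U_k + \langle \tilde V_j, V_k-\tilde V_k\rangle_\tF U_k \\
&\quad + \langle \tilde V_j, \tilde V_k\rangle_\tF (U_k-\tilde U_k),
\end{align*}
and a similar three-term expansion for the cubic piece $\langle V_k, V_j\rangle_\tF U_j U_k^\dg U_j$. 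The structurally dissipative contribution comes from $\langle V_k, V_j\rangle_\tF \approx m$ and $\langle U_k, U_j\rangle_\tF \approx n$ multiplied against $X_{ij}-\tilde X_{ij}$ and $Y_{ij}-\tilde Y_{ij}$; reproducing the Lemma \ref{L4.1} argument one level up on the difference yields the coercive $-2m\kp\mathcal F$ term, while the $-8\sqrt n\kp$ correction (double the $4\sqrt n$ of Lemma \ref{L4.1}) reflects that both the $U$- and $V$-dynamics now act on the difference simultaneously. The remaining cross-terms from telescoping carry either one $\mathcal F$-factor and one $\mathcal L$-factor, producing the $\kp(4n+22)\mathcal L\mathcal F$ term, or one $\mathcal F$-factor and two $\mathcal L$-factors, producing the $20\kp\mathcal L^2\mathcal F$ term.

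The hard part will be the bookkeeping of the cubic interaction: each of the four quantities composing $\mathcal F$ depends on both $U$ and $V$ through the coupling, so every telescoping expansion fans out into many cross terms of mixed "$\mathcal F$-order" and "$\mathcal L$-order", and I must isolate the sign-producing contributions so that the $-2m\kp$ dissipation survives intact. An additional subtlety is that the index pairs attaining the maxima of $d$ and of $\mathcal S$ need not coincide, forcing one to dominate inner-product differences by matrix differences at the cost of a $\sqrt n$ or $\sqrt m$ factor via $\mathcal S(U,\tilde U)\leq \sqrt n\,d(U,\tilde U)$ and $\mathcal S(V,\tilde V)\leq \sqrt m\,d(V,\tilde V)$; this is precisely how the $8\sqrt n$ ultimately enters \eqref{Z-55}. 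I expect the detailed computations to parallel those of Appendix \ref{sec:app.A} for Lemma \ref{L4.1}, with each estimate doubled (one copy for each of the two trajectories being compared) and each application of Lagrange-type telescoping producing one extra term relative to the homogeneous case.
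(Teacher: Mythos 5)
Your overall architecture coincides with the paper's (Appendix \ref{sec:app.B}): derive separate differential inequalities for the four components of $\mathcal F$, note that the Hermitian free flow contributes nothing to $d(U,\tilde U)$ and $d(V,\tilde V)$ (the commutator term has vanishing real trace against $S_{ij}-\tilde S_{ij}$) while it contributes $\mathcal D(\mathcal H)\,d(U,\tilde U)$ and $\mathcal D(\mathcal G)\,d(V,\tilde V)$ to the inner-product differences, telescope the cubic coupling differences into mixed $\mathcal L$--$\mathcal F$ terms, and sum; this is exactly Lemma \ref{LB.1}, Lemma \ref{LB.2} and \eqref{BB-25}. One small correction to your framing: $\mathcal F$ in \eqref{Z-52} is a \emph{sum} of four maxima, not a maximum, so ``$\dot{\mathcal F}$ coincides with the derivative of whichever quantity attains the maximum'' is not correct as stated, and the goal is not to bound each component's derivative by the right-hand side of \eqref{Z-55} but to bound the \emph{sum} of the four derivatives (each taken at its own maximizing index pair) by it. This is easily repaired and is how the paper proceeds.

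The more substantive problem is your account of where $-8\kp\sqrt n$ comes from, which is the heart of the estimate. It is not a ``doubling'' of the $4\sqrt n$ in Lemma \ref{L4.1} --- the coefficient there, $-2\kp(m-4\sqrt n)$, already equals $-2\kp m+8\kp\sqrt n$, the same $8\sqrt n$ --- and it does not enter through $\mathcal S(U,\tilde U)\leq \sqrt n\,d(U,\tilde U)$. The paper never uses those domination inequalities inside the proof; indeed, the whole reason all four quantities are carried in $\mathcal F$ is to avoid them, since closing the system on the $d$'s alone via $\mathcal S(V,\tilde V)\leq\sqrt m\,d(V,\tilde V)$ would produce cross terms of size $4\kp\sqrt{nm}\,d$, and the dissipation $-2m\kp$ would then survive only if $m>4n$, contradicting \eqref{Z-8}. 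The true source is the family of asymmetry terms
\begin{equation*}
\frac{\kp}{N}\sum_{k=1}^N\Bigl[(c_{ki}-c_{ik}+c_{jk}-c_{kj})-(\tilde c_{ki}-\tilde c_{ik}+\tilde c_{jk}-\tilde c_{kj})\Bigr]I_n
\end{equation*}
and their traces (the terms $\mathcal I_{45}$ and $\mathcal I_{56}$ in the paper's proof): each is bounded by $4\sqrt n\,\mathcal S(V,\tilde V)$ through $\|I_n\|_\tF=\sqrt n$, each of the four component inequalities carries one such term, and summing with $m\leq n$ yields $8\kp\sqrt n\,\mathcal F$. These terms exist only because the Frobenius inner product of complex matrices is not symmetric ($c_{ik}\neq c_{ki}$); they are what forces the restriction $m>4\sqrt n$ in \eqref{Z-10}, and they disappear in the real orthogonal setting of Appendix \ref{sec:app.C}. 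Your sketch never identifies them, so as written it does not account for the most delicate contribution, even though the telescoping machinery you set up would generate them once the computation is actually carried out.
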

  \begin{proof}
  We postpone its proof in Appendix \ref{sec:app.B}. 
  \end{proof} 
  
  In what follows,  using the differential inequality \eqref{Z-55}, we provide a sufficient condition leading to the phase-locked state. First,  in order to make  a leading coefficient of \eqref{Z-55}  negative, we assume that a coupling strength $\kp$ is sufficiently large: 
  \begin{equation*}
  \kp > \frac{ \max\{ \mathcal D(\mathcal H), \mathcal D(\mathcal G)\} }{2(m -4\sqrt n)}.
  \end{equation*}
For a handy notation, we denote
\begin{equation*}
\Lambda := 2(m-4\sqrt n) - \frac{ \max\{ \mathcal D(\mathcal H), \mathcal D(\mathcal G)\} }{\kp}.
\end{equation*}

Next, we show that we can make the total aggregation functional $\mathcal L$ small as we wish by controlling the coupling strength $\kp$. 

\begin{proposition} \label{P4.1} 
Suppose system parameters and initial data satisfy 
\begin{align}  \label{Z-60}
\begin{aligned}
&\textup{(i)}~~ n\geq m,\quad \mathcal D(\mathcal H) \geq \mathcal D(\mathcal G), \qquad \textup{(ii)}~~ \kp\geq\kp_\textup{c}, \quad \mathcal L^0 \ll \nu_2, 
\end{aligned}
\end{align}
where $\kp_\textup{c}$ and $\nu_2$ are specified later in \eqref{Z-70} and \eqref{Z-71}, respectively, and let $\{(U_j,V_j)\}$ be a solution to \eqref{Z-0}. Then, we have
\begin{equation*}
\lim_{\kp \to\infty} \limsup_{t\to\infty} \mathcal L(t) = 0.
\end{equation*}
\end{proposition}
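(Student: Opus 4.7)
The plan is to reduce the claim to a scalar ODE comparison by deriving a heterogeneous analogue of Lemma \ref{L4.1} and then analyzing the resulting cubic. First, I would redo the computation leading to \eqref{Z-9} while carrying along the new free-flow terms $-\mi H_j U_j$ and $-\mi G_j V_j$ in \eqref{Z-0}. Inside the relevant brackets, the decomposition
\begin{equation*}
-\mi (H_i U_i - H_j U_j) = -\mi H_i (U_i - U_j) - \mi(H_i - H_j)U_j
\end{equation*}
shows that the first (skew-Hermitian) piece contributes zero to the real-valued quantity $\frac{\d}{\dt}\|U_i - U_j\|_\tF^2$ after symmetrization, while the second is controlled by $2\mathcal{D}(\mathcal{H})\sqrt{n}$; analogous bounds hold when one differentiates $\mathcal{S}(\mathcal{U})$, $\mathcal{D}(\mathcal{V})$ and $\mathcal{S}(\mathcal{V})$. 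All the polynomial terms coming from the cubic interactions in \eqref{Z-0} are identical to those in Lemma \ref{L4.1}, so one obtains
\begin{equation*}
\dot{\mathcal{L}} \le C_0 \bigl(\mathcal{D}(\mathcal{H})+\mathcal{D}(\mathcal{G})\bigr) - 2\kp(m-4\sqrt{n})\mathcal{L} + \kp(4n+9)\mathcal{L}^2 + \kp\bigl(2n+\tfrac{8}{3}\bigr)\mathcal{L}^3
\end{equation*}
for an explicit constant $C_0=C_0(n,m)$ arising from the Frobenius--operator norm conversion.

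Given this inequality, I would introduce the cubic
\begin{equation*}
g_\kp(s) := \kp\bigl(2n+\tfrac{8}{3}\bigr)s^3 + \kp(4n+9)s^2 - 2\kp(m-4\sqrt{n})s + C_0\bigl(\mathcal{D}(\mathcal{H})+\mathcal{D}(\mathcal{G})\bigr),
\end{equation*}
so that $\dot{\mathcal{L}} \le g_\kp(\mathcal{L})$. Because $g_\kp(0)>0$, $g_\kp'(0) = -2\kp(m-4\sqrt{n})<0$ under \eqref{Z-60}(i), and $g_\kp(s)\to+\infty$ as $s\to+\infty$, there is an explicit threshold $\kp_\textup{c}$ above which $g_\kp$ possesses two positive roots $0<\nu_1(\kp)<\nu_2(\kp)$ with $g_\kp<0$ on $(\nu_1,\nu_2)$; these are the quantities referenced in \eqref{Z-70} and \eqref{Z-71}. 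A standard differential-inequality argument then shows that if $\mathcal{L}^0<\nu_2$, the trajectory is trapped in $[0,\nu_2)$ and monotonically enters $[0,\nu_1]$, whence $\limsup_{t\to\infty}\mathcal{L}(t) \le \nu_1(\kp)$.

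Finally, a rescaling $g_\kp/\kp$ identifies, for large $\kp$, the smaller positive root as
\begin{equation*}
\nu_1(\kp) \sim \frac{C_0\bigl(\mathcal{D}(\mathcal{H})+\mathcal{D}(\mathcal{G})\bigr)}{2\kp(m-4\sqrt{n})},
\end{equation*}
which tends to $0$ as $\kp\to\infty$, while $\nu_2(\kp)$ converges to the positive root of the quadratic in the homogeneous Theorem \ref{T4.1}. Combining with the previous bound yields $\lim_{\kp\to\infty}\limsup_{t\to\infty}\mathcal{L}(t)=0$.

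The hard part will be the derivation of the heterogeneous inequality itself. The Hamiltonian contributions to $\frac{\d}{\dt}(n-\langle U_i,U_j\rangle_\tF)$ are complex valued in general, so one must project onto the real part and track imaginary residues carefully, ensuring that the constant $C_0$ is extracted without spoiling the leading negative coefficient $-2\kp(m-4\sqrt{n})\mathcal{L}$ that drives the contraction. Once this is in hand, identifying the extremal indices realizing $\mathcal{D}(\mathcal{U})$, $\mathcal{S}(\mathcal{U})$, $\mathcal{D}(\mathcal{V})$ and $\mathcal{S}(\mathcal{V})$ along the flow requires the same Danskin-type envelope argument already used in the proof of Lemma \ref{L4.1} and should proceed without new difficulty.
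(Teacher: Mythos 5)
Your proposal is correct and takes essentially the same route as the paper: the paper likewise augments the inequality \eqref{Z-9} of Lemma \ref{L4.1} by a constant forcing term (written there as $2(1+3\sqrt n)\mathcal D(\mathcal H)$, absorbing $\mathcal D(\mathcal G)$ via \eqref{Z-60}(i)), rewrites the bound as $\dot{\mathcal L}\leq \kp\big(2(1+3\sqrt n)\mathcal D(\mathcal H)/\kp - g(\mathcal L)\big)$ with $g(s)=2(m-4\sqrt n)s-(4n+9)s^2-\left(2n+\frac83\right)s^3$, and uses the two positive roots $0<\nu_1(\kp)<\nu_2(\kp)$ of the perturbed equation to trap $\mathcal L$ below $\nu_1(\kp)$, which tends to $0$ as $\kp\to\infty$. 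The only cosmetic differences are your generic constant $C_0(\mathcal D(\mathcal H)+\mathcal D(\mathcal G))$ in place of the paper's explicit one and your explicit asymptotic rate for $\nu_1(\kp)$, which the paper states only as a limit.
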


\begin{proof}
Since we assumed \eqref{Z-60}(i), it follows from \eqref{Z-9} that
\begin{equation} \label{Z-62}
\dot{\mathcal L} \leq -2\kp (m-4\sqrt n)\mathcal L + \kp(4n+9)\mathcal L^2 + \kp\left( 2n+\frac83 \right) \mathcal L^3 + 2(1+3\sqrt n)\mathcal D(\mathcal H).
\end{equation}
Now, we introduce an auxiliary cubic polynomial:
\begin{equation} \label{Z-63}
g(s) := 2(m-4\sqrt n ) s -(4n+9) s^2 - \left( 2n+ \frac83\right) s^3.
\end{equation}
Then, \eqref{Z-62} can be rewritten as
\begin{equation*}
\dot {\mathcal L} \leq \kp \left( \frac{ 2(1+3\sqrt n) \mathcal D(\mathcal H) }{\kp}- g(\mathcal L) \right).
\end{equation*}
By investigating roots of the polynomial $g$ in \eqref{Z-63}, we deduce that for a sufficient large $\kp$, the polynomial $g$ admits  one negative root, say, $\nu_0<0$, and two positive roots, say, $0<\nu_1<\nu_2$ with  continuous dependence of $\kp$, i.e., 
\begin{equation} \label{Z-65}
\lim_{\kp\to\infty} \nu_1(\kp) = 0,\quad \lim_{\kp\to\infty} \nu_2(\kp) = \alpha_*,\quad \textup{$\alpha_*$: a unique positive root of $g$}. 
\end{equation}
Since we assume \eqref{Z-60}(ii), there exists a finite entrance time $T_*>0$ such that
\begin{equation} \label{Z-66}
\mathcal L(t) <\nu_1,\quad t>T_*. 
\end{equation}
Finally, we combine \eqref{Z-65} and \eqref{Z-66} to obtain the desired estimate:
\begin{equation*}
\lim_{\kp \to\infty} \limsup_{t\to\infty} \mathcal L(t) = 0.
\end{equation*}
\end{proof}

\begin{remark}
For an explicit value for $\kp_\textup{c}$ in \eqref{Z-60}\textup{(ii)}, by simple calculus, we know that the polynomial $g=g(s)$ in \eqref{Z-63} admits a global maximum in $\bbr_+$ at $s=s_*$:
\begin{equation*}
s_* := \frac{-(4n+9) + \sqrt{ (4n+9)^2 + 2(6n+8)(m-4\sqrt n)  }  }{6n+8}.
\end{equation*} 
Thus, in order to guarantee the existence of a positive root for $g$, one should impose
\begin{equation} \label{Z-70}
g(s_*) > \frac{  2(1+3\sqrt n)\mathcal D(\mathcal H)}{\kp},\quad \textup{i.e.,} \quad  \kp >  \frac{  2(1+3\sqrt n)\mathcal D(\mathcal H)}{g(s_*)}=:\kp_\textup{c}.
\end{equation}
For this $\kp_\textup{c}$, $\nu_2$ in \eqref{Z-60}(ii) can be explicitly determined as the largest positive root of 
\begin{equation} \label{Z-71} 
g(s) = \frac{  2(1+3\sqrt n)\mathcal D(\mathcal H)}{\kp} \quad \textup{when $\kp>\kp_\textup{c}$}.
\end{equation} 
\end{remark}

\vspace{0.2cm}

It follows from Proposition \ref{P4.1} that under the assumption on the smallness of the initial data $\mathcal L^0$ and largeness of the coupling strength $\kp$, we have 
\begin{equation*}
\lim_{\kp\to\infty} \limsup_{t\to\infty} \mathcal L(t) =0. 
\end{equation*}
Hence, there exists a (large) coupling strength  $\kp_\textup{p}$ larger than $ \kp_\textup{c}$ such that for $t\gg1$, 
\begin{equation*}
(4n+22)\mathcal L (t)+ 20\mathcal L(t)^2 <\frac{\Lambda}{2},\quad \Lambda = 2(m-4\sqrt n) - \frac{ \max\{\mathcal D(\mathcal H),\mathcal D(\mathcal G) \}}{\kp}.
\end{equation*}
Then, \eqref{Z-55} becomes 
\begin{align*}
\frac\d\dt \mathcal F &\leq - \Lambda \kp \mathcal F + \kp(4n+22)\mathcal L \mathcal F + 20 \kp \mathcal L^2 \mathcal F  \leq  -\Lambda \kp \mathcal F + \frac{\Lambda\kp}{2} \mathcal F = -\frac{\Lambda\kp}{2} \mathcal F .
\end{align*}
This yields a desired exponential decay for $\mathcal F$. The argument above can be stated and shown as follows.

\begin{theorem} \label{T4.2} 
Suppose system parameters and initial data satisfy 
\begin{align*}
\begin{aligned}
&\textup{(i)}~~n\geq m,\quad \mathcal D(\mathcal H)\geq \mathcal D(\mathcal G), \qquad &\textup{(ii)}~~\kp > \kp_\textup{p} > \kp_\textup{c},\quad \max\{ \mathcal L^0, \tilde{\mathcal L}^0\} \leq \nu_2,
\end{aligned}
\end{align*}
and let $\{(U_i,V_i)\}$ and $\{(\tilde U_i, \tilde V_i)\}$ be any two solutions to \eqref{Z-0}, respectively.  Then, the following assertions hold. 
\begin{enumerate}
\item
The  functional $\mathcal F$ converges to zero with an exponential rate. 
\item
The normalized velocities $\mi \dot U_j U_j^\dg$ and $\mi \dot V_j V_j^\dg$ synchronize:
\[ \lim_{t \to \infty}  \|\mi \dot U_j U_j^\dg - \mi \dot{\tilde U}_j \tilde U_j^\dg \|_\tF = 0, \quad  \lim_{t \to \infty} \|\mi \dot V_j V_j^\dg - \mi \dot{\tilde V}_j \tilde V_j^\dg \|_\tF = 0. \]
\item
There exist unitary matrices $X_\infty \in \Un$ and $Y_\infty \in \Um$  such that
\begin{align*}
& \lim_{t\to\infty} U_i^\dg(t) \tilde U_i(t) = X_\infty, \quad \lim_{t\to\infty} \|\tilde U_i(t) - U_i(t) X_\infty\|_\tF =0, \\
& \lim_{t\to\infty} V_i^\dg(t) \tilde V_i(t) = Y_\infty,\quad \lim_{t\to\infty} \|\tilde V_i(t) - V_i(t) Y_\infty\|_\tF =0.
\end{align*}
\item
Asymptotic phase-locking emerge: for any  indices $i$ and $j$, 
\begin{equation*}
\exists~\lim_{t\to\infty} U_i(t) U_j^\dg (t) \quad \textup{and}\quad \exists~\lim_{t\to\infty} U_i(t) U_j^\dg (t).
\end{equation*}
Moreover, there exist phase-locked state $\mathcal X^\infty := \{X_i^\infty\}_{i=1}^N$ and $\mathcal Y^\infty := \{Y_i^\infty\}_{i=1}^N$, and unitary matrices $P \in \Un$ and $Q \in \Um$  such that
\begin{align*}
&\lim_{t\to\infty} \| U_i(t) - X_i^\infty P\|_\tF = 0 = \lim_{t\to \infty} \|V_i(t) - Y_i^\infty Q\|_\tF = 0, \quad \max \{ \mathcal D(\mathcal X^\infty), \mathcal D(\mathcal Y^\infty) \} < \nu_2. 
\end{align*}
\end{enumerate}
\end{theorem}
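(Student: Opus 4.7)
The plan is to combine Lemma \ref{L4.2} with Proposition \ref{P4.1} in a bootstrap to produce exponential decay of $\mathcal F$, and then translate this decay into the phase-locking statements (2)--(4) via the autonomous time-shift argument already sketched between \eqref{Z-50-2} and Lemma \ref{L4.2}. For part (1), I would apply Proposition \ref{P4.1} to both configurations $\{(U_i,V_i)\}$ and $\{(\tilde U_i,\tilde V_i)\}$: since $\max\{\mathcal L^0,\tilde{\mathcal L}^0\}\leq\nu_2$ and $\kp>\kp_\textup{p}$, there exist a threshold $\delta>0$ with $(4n+22)\delta+20\delta^2\leq \Lambda/2$ and a finite entry time $T_\star$ after which both $\mathcal L(t)$ and $\tilde{\mathcal L}(t)$ stay below $\delta$; here $\kp_\textup{p}$ is tuned so that the $\limsup$ in Proposition \ref{P4.1} sits beneath this $\delta$. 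Substituting into \eqref{Z-55} yields $\dot{\mathcal F}\leq -\tfrac{\Lambda\kp}{2}\mathcal F$ on $[T_\star,\infty)$, and Gr\"onwall's inequality delivers the exponential decay claimed in (1).

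For part (2), I would right-multiply the $U$-equation of \eqref{Z-0} by $U_j^\dg$ and use $U_jU_j^\dg=I_n$ to obtain
\[
\mi\dot U_j U_j^\dg = H_j-\frac{\mi\kp}{N}\sum_{k=1}^N\bigl(\langle V_j,V_k\rangle_\tF U_k U_j^\dg - \langle V_k,V_j\rangle_\tF U_j U_k^\dg\bigr),
\]
then subtract the analogous expression for $\tilde U_j$. Since $H_j$ is common to both solutions it drops out, and the remaining difference is a combination of $\langle V_\ell,V_m\rangle_\tF-\langle\tilde V_\ell,\tilde V_m\rangle_\tF$ and $U_\ell U_m^\dg-\tilde U_\ell\tilde U_m^\dg$, each controlled by $\mathcal F$; hence (1) forces the normalized velocity differences to zero. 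The parallel manipulation with the $V$-equation handles the second limit.

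For parts (3) and (4), I would exploit the autonomy of \eqref{Z-0} and apply (1) to the time-shifted solution $\tilde U_j(t):=U_j(t+T)$, $\tilde V_j(t):=V_j(t+T)$, admissible for every $T\geq T_\star$ because Proposition \ref{P4.1} ensures the shifted initial data already lies below $\nu_2$. Exponential decay of $\mathcal F$ uniform in $T$ renders $t\mapsto U_i(t)U_j^\dg(t)$ Cauchy in the complete space $\Un$ and hence convergent to some $U_{ij}^\infty$; the parallel argument in $\Um$ produces $V_{ij}^\infty$, which gives the phase-locking in (4). For (3), set $X_i(t):=U_i^\dg(t)\tilde U_i(t)$, so that $\tilde U_i\tilde U_j^\dg = U_i(X_iX_j^\dg)U_j^\dg$; exponential decay of $\|\tilde U_i\tilde U_j^\dg-U_iU_j^\dg\|_\tF$ from (1) forces $\|X_iX_j^\dg-I_n\|_\tF\to 0$ exponentially, so all $X_i$ share a common limit provided one exists. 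Existence of that limit comes from a direct Cauchy estimate on $X_i(t)$: a short computation shows the Hamiltonian terms in $\dot X_i$ cancel, leaving an expression bounded pointwise by $\mathcal F$ and hence $L^1$-integrable in time. The constants $P,Q$ in the final assertion are then these common limits $X_\infty,Y_\infty$, while the phase-locked configurations $\mathcal X^\infty,\mathcal Y^\infty$ are obtained by composing (3) with the existence of $U_{ij}^\infty,V_{ij}^\infty$.

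The principal technical obstacle is the interlocking character of the bootstrap: the dissipative structure of \eqref{Z-55} is only usable once $\mathcal L$ has become sufficiently small, whereas Proposition \ref{P4.1} controls the size of $\mathcal L$ only asymptotically for large coupling. The threshold $\kp_\textup{p}$ must therefore be selected so that the contraction of $\mathcal L$ beneath $\delta$ and the negativity of the effective linear coefficient in \eqref{Z-55} can be realized simultaneously. A more subtle issue appears in (3), where pointwise convergence of $X_i(t)$---as opposed to convergence merely modulo a shared multiplicative constant---crucially relies on the fact that the decay of $\mathcal F$ produced in (1) is exponential, hence $L^1$ in time, rather than merely $o(1)$.
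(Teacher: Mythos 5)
Your parts (1)--(3), and the Cauchy/time-shift portion of (4), coincide with the paper's own argument: the same bootstrap (Proposition \ref{P4.1} drives $\mathcal L$ below a threshold after a finite entrance time, so \eqref{Z-55} collapses to $\frac{\d}{\dt}\mathcal F\leq -\tfrac{\Lambda\kp}{2}\mathcal F$ and Gr\"onwall gives exponential decay), the same Hamiltonian-cancellation bound $\|\mi\dot U_jU_j^\dg-\mi\dot{\tilde U}_j\tilde U_j^\dg\|_\tF\leq 2\kp(m+n)\mathcal F$, the same integrability-in-time argument for $\frac{\d}{\dt}(U_i^\dg\tilde U_i)$ combined with $d(U,\tilde U)\to 0$ to make the limit index-independent, and the same discretized Cauchy-sequence argument in the complete spaces $\Un$, $\Um$. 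Your insistence on controlling both $\mathcal L$ and $\tilde{\mathcal L}$ is in fact slightly more careful than the paper's write-up, since the derivation of \eqref{Z-55} in Appendix \ref{sec:app.B} uses diameters of both configurations; this is exactly what the hypothesis $\max\{\mathcal L^0,\tilde{\mathcal L}^0\}\leq\nu_2$ is for.

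The gap is in the tail of part (4). The theorem asserts not merely that $\lim_{t\to\infty}U_iU_j^\dg$ exists, but that the limiting configuration $\mathcal X^\infty=\{X_i^\infty\}$ with $X_i^\infty:=\lim_{t\to\infty}U_i(t)U_1^\dg(t)$ is a \emph{phase-locked state}, and that $\mathcal D(\mathcal X^\infty)<\nu_2$; your proposal addresses neither. The paper proves the first claim by passing to the limit in the ODE \eqref{Z-85} for $U_iU_j^\dg$ via Barbalat's lemma: the time derivative vanishes asymptotically, which yields the stationary relations \eqref{Z-86}--\eqref{Z-87}, and the index-independence of \eqref{Z-87} produces the common matrices $\Theta$ and $\Gamma$ certifying that $\{\{X_i^\infty\},\Theta\}$ and $\{\{Y_i^\infty\},\Gamma\}$ are phase-locked states; the diameter bound then follows from $\mathcal D(\mathcal X^\infty)=\lim_{t\to\infty}\mathcal D(\mathcal U)\leq\nu_1<\nu_2$. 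Separately, your identification of $P$ and $Q$ with the limits $X_\infty,Y_\infty$ of part (3) does not work as stated: those limits depend on the second solution (equivalently, on the time shift $T$), and you supply no argument that the single configuration $U_i(t)$ itself converges to $X_i^\infty P$ for one fixed unitary $P$ --- that assertion is exactly what the extracted phase-locked structure is needed for, and it is also the point at which the paper's own proof is most terse.
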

\begin{proof}
(i)~It follows from \eqref{Z-55} in Proposition \ref{P4.1} that  $\mathcal F$ satisfies 
\begin{equation} \label{Z-75}
\frac{\d}{\dt} { \mathcal F} \leq  -\kp \left( 2   m - 8  \sqrt n - \frac{ \max\{ \mathcal D(\mathcal H), \mathcal D(\mathcal G)\} }{\kp} \right) \mathcal F   + \kp ( 4n + 22) \mathcal L \mathcal F + 20\kp \mathcal L^2 \mathcal F.
\end{equation}
Since $\mathcal L$ can be sufficiently small under \eqref{Z-60}, i.e.,  
\begin{equation*}
\mathcal L(t) <\nu_1,\quad t>T_*,
\end{equation*}
we choose $\kp_\textup{p} >\kp_\textup{c}$ sufficiently large such that 
\begin{equation*}
(4n+22)\mathcal L + 20\mathcal L^2 <\frac{\Lambda}{2},\quad \Lambda = 2(m-4\sqrt n) - \frac{ \max\{\mathcal D(\mathcal H),\mathcal D(\mathcal G) \}}{\kp}.
\end{equation*}
Thus, \eqref{Z-75} becomes
\begin{equation} \label{Z-77}
\frac\d\dt \mathcal F \leq - \frac{\Lambda\kp}{2} \mathcal F,\quad t>T_*,
\end{equation}
and the relation \eqref{Z-77} yields the desired exponential convergence of $\mathcal F$ toward zero. \newline 

\noindent (ii) For the second assertion, we claim:
\begin{equation} \label{Z-78}
\|\mi \dot U_j U_j^\dg - \mi \dot{\tilde U}_j \tilde U_j^\dg \|_\tF \leq   2\kp(m+n) \mathcal F, \qquad \|\mi \dot V_j V_j^\dg - \mi \dot{\tilde V}_j \tilde V_j^\dg \|_\tF \leq   2\kp(m+n) \mathcal F.
\end{equation}
Once the relations \eqref{Z-78} hold, it follows from the first assertion to derive the second assertion. Note that 
\begin{align} \label{Z-79}
\begin{aligned}
&\|\mi \dot U_j U_j^\dg - \mi \dot{\tilde U}_j \tilde U_j^\dg \|_\tF  = \left \|  \frac{\kp}{N}\sum_{k=1}^N \Big( \langle V_j,V_k\rangle_\tF U_k U_j^\dg - \langle V_k,V_j\rangle_\tF U_jU_k^\dg  \Big)  \right. \\
& \hspace{5cm}  \left. - \frac{\kp}{N}\sum_{k=1}^N \Big( \langle \tilde V_j,\tilde V_k\rangle_\tF \tilde U_k \tilde U_j^\dg - \langle \tilde V_k,\tilde V_j\rangle_\tF \tilde U_j\tilde U_k^\dg  \Big)  \right\|_\tF.
\end{aligned}
\end{align}
By algebraic manipulations, one has 
\begin{align*}
&\| \langle V_j,V_k\rangle_\tF U_k U_j^\dg - \langle \tilde V_j,\tilde V_k\rangle_\tF \tilde U_k \tilde U_j^\dg \|_\tF \\
& \hspace{0.5cm} \leq \|\langle V_j,V_k\rangle_\tF (U_kU_j^\dg - \tilde U_k \tilde U_j^\dg )\|_\tF + \| ( \langle V_j,V_k\rangle_\tF - \langle \tilde V_j,\tilde V_k\rangle_\tF ) \tilde U_k\tilde U_j^\dg \|_\tF\leq (m+n) \mathcal F.
\end{align*} 
Thus, the relation \eqref{Z-79} yields
\begin{equation*}
\|\mi \dot U_j U_j^\dg - \mi \dot{\tilde U}_j \tilde U_j^\dg \|_\tF \leq    2\kp(m+n) \mathcal F .
\end{equation*}
This shows the desired synchronization of the normalized velocities and the exactly same argument is applied to $\mi \dot V_j V_j^\dg$. \newline

\noindent (iii) Since system \eqref{Z-0} is autonomous, we directly use Theorem 2(3) of \cite{H-R}. Hence, we briefly sketch a proof. We observe 
\begin{align*}
\left\|   \frac{\d}{\d s} (U_i^\dg \tilde U_i ) \right\|_\tF = \| \dot U_i U_i^\dg - \dot{\tilde U}_i \tilde U_i^\dg \|_\tF  \leq 2\kp(m+n)\mathcal F.
\end{align*}
Since $\mathcal F$ tends to zero exponentially, 
\begin{equation} \label{Z-82}
\lim_{t\to\infty} (U_i^\dg(t) \tilde U_i(t)) = U_i^{0,\dg}\tilde U_i^0 +\int_0^\infty  \frac{\d}{\d s} ( U_i^\dg(s) \tilde U_i(s)) \d s.
\end{equation}
In addition,  since $d(U,\tilde U)$ converges to zero, one deduces that the right-hand side of \eqref{Z-82} does not depend on the index $i$. This establishes the desired assertion. \newline

\noindent (iv) We first show the existence of the asymptotic limit of $U_iU_j^\dg$ and $V_iV_j^\dg$. For any $T>0$, since system \eqref{Z-0} is autonomous, we choose $\tilde U_i$ and $\tilde V_i$ as 
\begin{equation*}
\tilde U_i(t)  = U_i(t+T),\quad V_i(t) = V_i(t+T).
\end{equation*}
If we discretize the time $t\in\bbr_+$ and $n \in \bbz_+$ and choose $T= m \in \bbz_+$, we use the convergence of $d(U,\tilde U)$ and $d(V,\tilde V)$ to conclude that  $\{U_i(n) U_j^\dg(n)\}_{n\in \bbz_+}$ and $\{V_i(n) V_j^\dg(n)\}_{n\in \bbz_+}$ become Cauchy sequences in the complete spaces $\Un$ and $\Um$, respectively. Thus, the limits of $U_iU_j^\dg$ and $V_iV_j^\dg$ exist. In particular, if we denote 
\begin{equation*}
X_i^\infty := \lim_{t\to\infty} U_i (t) U_1^\dg(t) ,
\end{equation*}
then one has
\[ \lim_{t\to\infty} U_i(t) U_j^\dg (t) =\lim_{t\to\infty}  U_iU_1^\dg(t) (U_j(t) U_1^\dg(t))^\dg  = X_i^\infty X_j^{\infty,\dg}, \quad \lim_{t\to\infty} \langle U_i,U_j\rangle_\tF = \langle X_i^\infty, X_j^\infty \rangle_\tF.
\]
On the other hand, we recall \eqref{AA-2-1} 
\begin{align} \label{Z-85}
\begin{aligned}
&\frac{\d}{\dt} (U_iU_j^\dg) = \mi (U_i U_j^\dg H_j - H_i U_iU_j^\dg ) \\
&+ \frac\kp N \sum_{k=1}^N \Big( \langle V_i,V_k\rangle_\tF U_kU_j^\dg    - \langle V_k,V_i\rangle_\tF U_iU_k^\dg U_i U_j^\dg   +  \langle V_k,V_j\rangle_\tF U_iU_k^\dg  - \langle V_j,V_k\rangle_\tF U_iU_j^\dg U_k U_j^\dg\Big).
\end{aligned}
\end{align}
In \eqref{Z-85}, if we let $t\to\infty$ and apply Barbalat's lemma, the left-hand side of \eqref{Z-85} vanishes and consequently, the relation \eqref{Z-85} becomes
\begin{align} \label{Z-86}
\begin{aligned}
O &= \mi (X_i^\infty X_j^{\infty,\dg} H_j - H_i X_i^\infty X_j^{\infty,\dg} ) \\
&\hspace{0.5cm}+ \frac\kp N \sum_{k=1}^N \Big(    \langle Y_i^\infty, Y_k^\infty\rangle _ \tF X_k^\infty X_j^{\infty,\dg} - \langle Y_k^\infty , Y_i^{\infty, \dg} \rangle_\tF X_i^\infty X_k^{\infty,\dg}X_i^\infty X_j^{\infty,\dg}  \\
&\hspace{0.5cm}+  \langle Y_k^\infty, Y_j^\infty\rangle _ \tF X_i^\infty X_k^{\infty,\dg} - \langle Y_j^\infty , Y_k^{\infty, \dg}\rangle_\tF X_i^\infty X_j^{\infty,\dg}X_k^\infty X_j^{\infty,\dg} \Big) .
\end{aligned} 
\end{align}
By performing left-multiplication  $X_i^{\infty,\dg}$ and right-multiplication $X_j^\infty$ with \eqref{Z-86}, we obtain 
\begin{align} \label{Z-87}
\begin{aligned}
& -\mi X_j^{\infty, \dg} H_j X_j^\infty + \frac\kp N \sum_{k=1}^N  \Big( \langle Y_j^\infty, Y_k^\infty\rangle _ \tF X_j^{\infty,\dg} X_k^\infty - \langle Y_k^\infty, Y_j^\infty\rangle _ \tF X_k^{\infty,\dg} X_j^\infty \Big) \\
 &\hspace{0.5cm} =    -\mi X_i^{\infty, \dg} H_i X_i^\infty + \frac\kp N \sum_{k=1}^N \Big( \langle Y_i^\infty, Y_k^\infty\rangle _ \tF X_i^{\infty,\dg} X_k^\infty - \langle Y_k^\infty, Y_i^\infty\rangle _ \tF X_k^{\infty,\dg} X_i^\infty\Big).
\end{aligned}
\end{align}
Since the relation\eqref{Z-87} does not depend on the index, we can set
\begin{equation*}
-\mi \Theta :=  -\mi X_j^{\infty, \dg} H_j X_j^\infty + \frac\kp N \sum_{k=1}^N  \Big( \langle Y_j^\infty, Y_k^\infty\rangle _ \tF X_j^{\infty,\dg} X_k^\infty - \langle Y_k^\infty, Y_j^\infty\rangle _ \tF X_k^{\infty,\dg} X_j^\infty \Big).
\end{equation*}
This yields
\begin{equation*}
X_j^\infty \Theta X_j^{\infty,\dg} = H_j + \frac\kp N \sum_{k=1}^N \Big( \langle Y_k^\infty, Y_j^\infty\rangle _ \tF X_k^{\infty,\dg} X_j^\infty - \langle Y_j^\infty, Y_k^\infty\rangle _ \tF X_j^{\infty,\dg} X_k^\infty \Big).
\end{equation*}
Similarly for $\{V_i\}$, we can define  a matrix $\Gamma$ independent of the index such that 
\begin{equation*}
Y_j^\infty \Gamma  Y_j^{\infty,\dg} =G_j + + \frac\kp N \sum_{k=1}^N  \Big( \langle X_k^\infty, X_j^\infty\rangle _ \tF Y_k^{\infty,\dg} Y_j^\infty - \langle X_j^\infty, X_k^\infty\rangle _ \tF Y_j^{\infty,\dg} Y_k^\infty \Big).
\end{equation*}
Then, $ \{ \{X_i^\infty\}, \Theta\}$ and $ \{ \{Y_i^\infty\}, \Gamma\}$ indeed consist of the phase-locked state. Moreover, one has
\[
\mathcal D(\mathcal X^\infty) = \max_{1\leq i,j\leq N } \|X_i^\infty X_j^{\infty,\dg} - I_n\| = \lim_{t\to\infty} \max_{1\leq i,j\leq N } \|U_i(t) U_j^\dg(t) -I_n\|_\tF = \lim_{t\to\infty} \mathcal D(\mathcal U) <\nu_2.
\]
Exactly the same estimate holds for $\mathcal D(\mathcal Y^\infty)$ as well.
\end{proof}

\begin{remark}
In \cite{Lo09}, the phase-locked state of \eqref{Z-0} are defined to be of the following form:
\begin{equation*}
U_i(t) = U_i^\infty e^{-\mi \Gamma_U t}, \quad V_i(t) = V_i^\infty e^{-\mi \Gamma_V t},
\end{equation*} 
where $U_i^\infty \in \Un$ and $V_i^\infty \in \Um$  are unitary matrices, and $\Lambda_U$ and $\Lambda_V$ satisfy 
\begin{align*} \label{pl}
\begin{aligned}
&U_i^\infty \Gamma_U U_i^{\infty,\dg} = H_i + \frac{\mi \kp}{N} \sum_{k=1}^N\Big( \langle V_k^\infty, V_j^\infty\rangle_\tF  U_k^\infty U_i^{\infty,\dg} -\langle V_j^\infty, V_k^\infty\rangle_\tF U_i^\infty U_k^{\infty,\dg} \Big), \\
&V_i^\infty \Gamma_V V_i^{\infty,\dg} = G_i + \frac{\mi \kp}{N} \sum_{k=1}^N\Big( \langle U_k^\infty, U_j^\infty\rangle_\tF  V_k^\infty V_i^{\infty,\dg} -\langle U_j^\infty, U_k^\infty\rangle_\tF V_i^\infty V_k^{\infty,\dg} \Big) . 
\end{aligned}
\end{align*}
\end{remark}
\section{Rank-$2m$ quadratically separable state} \label{sec:5} 
\setcounter{equation}{0}
In this section, we study a quadratically separable state of the LT model for rank-$2m$ tensors by introducing the  MM model whose solution configuration is given as follows.
\begin{equation*}
\{ \mathcal U^1,\mathcal U^2,\cdots, \mathcal U^m\},\quad \mathcal U^p := (U_1^p,\cdots,U_N^p),\quad p=1,\cdots,m.
\end{equation*}
In the following two subsections, we introduce extended models for the DM model \eqref{C-10-1} and the DUM model \eqref{C-17}. Since the procedures are similar as those in Section \ref{sec:3}, we omit details. 
\subsection{The  MM model} \label{sec:5.1}
In this subsection, we introduce the MM model:
\begin{align}\label{C-15-1}
\begin{cases}
 \displaystyle\dot{U}_j^p=B_j^p U_j^p + \displaystyle\frac{\kappa_1}{N}\sum_{k=1}^N\left(\prod_{\substack{\ell=1\\\ell\neq p}}^m\langle U_j^\ell, U_k^\ell\rangle_\tF  U_k^p\big(U_j^p\big)^\dagger U_j^p-\prod_{\substack{\ell=1\\\ell\neq p}}^m\langle U_k^l, U_j^l\rangle_\tF U_j^p\big(U_k^p\big)^\dagger U_j^p\right)\\
 \hspace{1cm}\displaystyle+\frac{\kappa_2}{N}\sum_{k=1}^N\left(\prod_{\substack{\ell=1\\\ell\neq p}}^m\langle U_j^\ell, U_k^\ell\rangle_\tF  U_j^p\big(U_j^p\big)^\dagger U_k^p-\prod_{\substack{\ell=1\\\ell\neq p}}^m\langle U_k^\ell, U_j^\ell\rangle_\tF U_j^p\big(U_k^p\big)^\dagger U_j^p\right),
\end{cases}
\end{align}
where $B_j^p$ is a rank-$4m$ tensor satisfying the skew-symmetric property. \newline

\noindent For a solution $\{ \mathcal U^p\}_{p=1}^m$ to   \eqref{C-15-1}, we set a rank-$2m$ tensor denoted by $T_i$:
\begin{equation} \label{C-15-2} 
T_i(t)=U_i^1(t)\otimes U_i^2(t)\otimes \cdots\otimes U_i^m(t),\quad U_i^p\in\bbc^{d_1^p\times d_2^p}, \quad i=1,\cdots,N,\quad p=1,\cdots,m,
\end{equation}
which can be written in a component form:
\begin{equation*}
[T_i]_{\alpha_1\beta_1\alpha_2\beta_2\cdots\alpha_m\beta_m}=[U_i^1]_{\alpha_1\beta_1}[U_i^2]_{\alpha_2\beta_2}\cdots[U_i^m]_{\alpha_m\beta_m}.
\end{equation*}
In order to relate the model  \eqref{C-15-1} with the LT model \eqref{LT}, we consider the index vector $i_*$ in $\kp_{i_*}$. Then, we introduce two subsets of $\{0,1\}^{2d}$: for $q =1,\cdots,m$, 
\begin{align*}
\Lambda_1&=\{i_*\in\{0, 1\}^{2d}: \text{0 appears once at $2q-1$-th coordinate}\},\\
\Lambda_2&=\{i_*\in\{0, 1\}^{2d}: \text{0 appears once at $2q$-th coordinate}\}.
\end{align*}
Note that $|\Lambda_1|= |\Lambda_2| = m$ and for instance with $m=2$, 
\begin{equation*}
(0,1,1,1), (1,1,0,1) \in \Lambda_1,\quad (1,0,1,1),(1,1,1,0) \in \Lambda_2.  
\end{equation*}
In this regard, we choose the index vector as 
\begin{align}\label{C-18}
\kappa_{i_*}=\begin{cases}
\kappa_1\quad&\text{if }i_*\in\Lambda_1,\\
\kappa_2\quad&\text{if }i_*\in\Lambda_2,\\
0&\text{otherwise},
\end{cases}
\end{align}
which generalizes \eqref{C-9-1} for the DM model. \newline

Next, for the natural frequency tensors, we use $\{B_j^p\}_{p=1}^m$ to associate a rank-$4m$ tensor $A_j$ as 
\begin{align}\label{C-18-1}
[A_j]_{\alpha_1\alpha_2\cdots\alpha_{2m}\beta_1\beta_2\cdots\beta_{2m}}=\sum_{k=1}^m\left([B_j^k]_{\alpha_{2k-1}\alpha_{2k}\beta_{2k-1}\beta_{2k}}\prod^d_{\substack{\ell=1\\ \ell\neq k }}\left(\delta_{\alpha_{2\ell-1}\beta_{2\ell-1}}\delta_{\alpha_{2\ell}\beta_{2\ell}}\right)\right),
\end{align}
which corresponds to \eqref{C-11} for the DM model. Then, it follows from straightforward calculation that the Lohe tensor model with \eqref{C-18} and \eqref{C-18-1} reduces to \eqref{C-15-1} whose solutions can be related by \eqref{C-15-2}. The argument above is summarized in the following proposition analogous to Proposition \ref{P3.1}.

\begin{proposition}\label{P5.1}
The following assertions hold. 
\begin{enumerate}
\item Suppose $\{\mathcal U^p\}_{p=1}^m$ is a solution to \eqref{C-15-1}. Then, a rank-$2m$ tensor defined by $T_i := U_i^1 \otimes U_i^2 \otimes \cdots \otimes U_i^m$ is the QSS to \eqref{LT} with  well-prepared initial data and free flow tensors $A_i$ satisfying \eqref{C-18-1}. 
    
\vspace{0.2cm}

\item
Suppose a rank-$2m$ tensor $T_i$ is a solution to \eqref{LT} with \eqref{C-18-1} and quadratically separable initial data:
\begin{equation*}
T_i^0  := U_i^{1,0} \otimes U_i^{2,0} \otimes \cdots \otimes U_i^{p,0},\quad 1\leq i \leq N ,
\end{equation*}
for rank-2 tensors $U_i^{p,0} \times \bbc^{d_1^p \times d_2^p}$ with unit norms. Then, there exist matrices $\{U_i^p\}$ with unit norms such that
\begin{equation*}
T_i(t) = U_i^1(t) \otimes U_i^2(t) \otimes \cdots \otimes U_i^p(t),\quad t>0,
\end{equation*}
where $\{U_i^p\}$ is a solution to \eqref{C-15-1} with $U_i^{p,0} = U_i^p(0)$. 
\end{enumerate}
\end{proposition}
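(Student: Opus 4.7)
The plan is to generalize the $m=2$ derivation carried out in Section~\ref{sec:3.1}, since the algebraic structure is identical in the higher rank case; the only additional ingredient is cleaner combinatorial bookkeeping. For part~(1), I would differentiate $T_i = U_i^1 \otimes U_i^2 \otimes \cdots \otimes U_i^m$ by the Leibniz rule to get
\[
\dot T_i = \sum_{p=1}^m U_i^1 \otimes \cdots \otimes \dot U_i^p \otimes \cdots \otimes U_i^m,
\]
substitute \eqref{C-15-1} for each $\dot U_i^p$, and verify that the result coincides with the right-hand side of \eqref{LT} under the choices \eqref{C-18} of $\kappa_{i_*}$ and \eqref{C-18-1} of $A_j$.

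For the free-flow piece, the tensor $A_j$ in \eqref{C-18-1} is constructed so that, by a Kronecker-delta calculation directly parallel to \eqref{C-10-3}, one has
\[
(A_j T_j)_{\alpha_{*0}} = \sum_{p=1}^m \bigl(U_j^1 \otimes \cdots \otimes (B_j^p U_j^p) \otimes \cdots \otimes U_j^m\bigr)_{\alpha_{*0}},
\]
which matches the Leibniz expansion on the left-hand side. Skew-symmetry of $A_j$ follows term-by-term from skew-symmetry of each $B_j^p$, so the well-preparedness condition is compatible with the natural frequencies of the MM model.

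The main step is the cubic interaction. The definitions of $\Lambda_1$ and $\Lambda_2$ ensure that for any $i_* \in \Lambda_1 \cup \Lambda_2$ there is exactly one $p \in \{1, \ldots, m\}$ for which the index pair $(i_{2p-1}, i_{2p})$ differs from $(1,1)$, with all other pairs equal to $(1,1)$. For such $i_*$, a direct index manipulation yields the factorization
\[
[T_k]_{\alpha_{*i_*}}[\bar T_j]_{\alpha_{*1}}[T_j]_{\alpha_{*(1-i_*)}} = \Bigl(\prod_{\ell \neq p} \langle U_j^\ell, U_k^\ell\rangle_\tF\Bigr) \cdot \bigl([U_k^p]_{\cdot}\,[\bar U_j^p]_{\cdot}\,[U_j^p]_{\cdot}\bigr),
\]
and the cubic in $U^p$ on the right matches the $\kappa_1$-cubic or the $\kappa_2$-cubic of \eqref{C-15-1} according as $i_* \in \Lambda_1$ or $i_* \in \Lambda_2$. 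The sum $\sum_{i_* \in \Lambda_1 \cup \Lambda_2}$ thus reorganizes as $\sum_{p=1}^m$ in two batches (one for each $\kappa$), which is exactly the Leibniz $p$-sum that we need.

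For part~(2), I would construct the candidate $\tilde T_i(t) := U_i^1(t) \otimes \cdots \otimes U_i^m(t)$ from the MM solution with initial data $U_i^{p,0}$; by part~(1), $\tilde T_i$ solves \eqref{LT} with the prescribed $A_j$ and with the same initial datum $T_i^0$, so uniqueness for the locally Lipschitz ODE \eqref{LT} forces $T_i \equiv \tilde T_i$. Unit-norm preservation $\|U_i^p(t)\|_\tF = 1$ follows from the MM model itself by the same computation as Lemma~\ref{L3.2}. The principal obstacle is purely combinatorial: setting up the bijection between the unique distinguished index pair in each $i_* \in \Lambda_1 \cup \Lambda_2$ and the factor $U^p$ on which the cubic acts, together with tracking the interchange $j \leftrightarrow k$ in the second cubic term. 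Once this bookkeeping is in place there is no analytic subtlety, and the argument reduces to verifying equality of two explicit component-wise expressions.
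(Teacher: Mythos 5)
Your proposal is correct and follows essentially the same route as the paper: the paper proves Proposition \ref{P5.1} by exactly this generalization of the Section \ref{sec:3.1} computation (Leibniz rule on the tensor product, the factorization of the cubic terms into Frobenius inner products $\prod_{\ell\neq p}\langle U_j^\ell,U_k^\ell\rangle_\tF$ times a cubic in $U^p$ indexed by the unique distinguished pair in $i_*\in\Lambda_1\cup\Lambda_2$, the Kronecker-delta identification \eqref{C-18-1} of $A_j$, and uniqueness of the ODE flow for part (2)), which the paper records as a ``straightforward calculation'' analogous to Proposition \ref{P3.1}. The only cosmetic gap is that your displayed factorization should also carry the spectator factors $[U_j^\ell]_{\alpha_{*0}}$, $\ell\neq p$, on the right-hand side, as these are what reassemble the Leibniz $p$-sum.
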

\subsection{The  MUM model}  \label{sec:5.2}
In this subsection, we further reduce to the MM model to the model on the product of the unitary groups. Since the unitary group is concerned, we set 
\begin{equation} \label{C-20}
d_1^p=d_2^p=:d_p,\quad  1\leq p\leq m.
\end{equation}
For the modeling of natural frequencies, we also define Hermitian matrices with a size $d_p \times d_p$: 
\begin{equation*}\label{C-21}
[-\mathrm{i}H_j^p]_{\alpha_1\alpha_2}\delta_{\beta_1\beta_2} := [B_j^p]_{\alpha_1\beta_1\alpha_2\beta_2}.
\end{equation*}
In addition, $\{H_j^p\}$ satisfy
\begin{align}\label{C-24}
[A_j]_{\alpha_1\alpha_2\cdots\alpha_{2m}\beta_1\beta_2\cdots\beta_{2m}}
&=\sum_{k=1}^m\left([-\mathrm{i}H_j^k]_{\alpha_{2k-1}\beta_{2k-1}}\delta_{\alpha_{2k}\beta_{2k}}\prod^m_{\substack{\ell=1\\\ell\neq k }}\left(\delta_{\alpha_{2\ell-1}\beta_{2\ell-1}}\delta_{\alpha_{2\ell}\beta_{2\ell}}\right)\right).
\end{align}
By Lemma \ref{L3.2}, one can verify that system \eqref{C-15-1} with \eqref{C-20} conserves the unitarity of $U_j^p$. Thus, system \eqref{C-15-1} reduces to the following model on the unitary group:
\begin{align}\label{C-23}
\begin{cases}
\displaystyle\dot{U}_j^p=-\mi H_j^pU_j^p+\displaystyle\frac{\kappa}{N}\sum_{k=1}^N\left(\prod_{\substack{\ell=1\\\ell\neq p}}^m\langle U_j^l, U_k^l\rangle_\tF  U_k^p -\prod_{\substack{\ell=1\\\ell\neq p}}^m\langle U_k^\ell, U_j^\ell\rangle_\tF U_j^p\big(U_k^p\big)^\dagger U_j^p\right),\\
U_j^p(0)=U_j^{p,0}\in\mathbf{U}(d_p),
\end{cases}
\end{align}
where $\kappa=\kappa_1+\kappa_2$ and $H_j^p U_j^p$ is a usual matrix product. As in Proposition \ref{P5.1},  existence and uniqueness of the QSS for \eqref{C-23} can be stated as follows. 
\begin{proposition} \label{P5.2}
The following assertions hold. 
\begin{enumerate}
\item Suppose $\{\mathcal U^p\}_{p=1}^m$ is a solution to \eqref{C-23}. Then, a rank-$2m$ tensor defined by $T_i := U_i^1 \otimes U_i^2 \otimes \cdots \otimes U_i^m$ is a quadratically separable state to \eqref{LT} with a well-prepared free flow tensor $A_i$ satisfying \eqref{C-24}. 

\vspace{0.2cm}

\item
Suppose a rank-$2m$ tensor $T_i$ is a solution to \eqref{LT} with \eqref{C-24} and quadratically separable initial data:
\begin{equation*}
T_i^0  := U_i^{1,0} \otimes U_i^{2,0} \otimes \cdots \otimes U_i^{p,0},\quad 1\leq i \leq N ,
\end{equation*}
for rank-2 tensors $U_i^{p,0} \times \bbc^{n_1^p \times n_2^p}$ with unit norms. Then, there exist matrices $\{U_i^p\}$ with unit norms such that
\begin{equation*}
T_i(t) = U_i^1(t) \otimes U_i^2(t) \otimes \cdots \otimes U_i^p(t),\quad t>0,
\end{equation*}
where $\{U_i^p\}$ is a solution to \eqref{C-23} with $U_i^{p,0} = U_i^p(0)$. 
\end{enumerate}
\end{proposition}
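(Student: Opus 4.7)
The plan is to reduce Proposition \ref{P5.2} to the corresponding statement for the MM model (Proposition \ref{P5.1}) by establishing that unitarity is propagated along \eqref{C-23}, exactly as in the rank-4 case treated by Lemma \ref{L3.2}. The two parts of the proposition are then handled by a direct computation combined with uniqueness of ODE solutions, mirroring the strategy of Proposition \ref{P3.1} and Proposition \ref{P5.1}.

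For assertion (1), I first claim that each factor remains unitary: if $U_j^{p,0} \in \mathbf U(d_p)$, then $U_j^p(t) \in \mathbf U(d_p)$ for all $t>0$. I would write \eqref{C-23} in the form
\begin{equation*}
\dot U_j^p = -\mi H_j^p U_j^p + (P_j^p - (P_j^p)^\dagger) U_j^p + U_j^p (Q_j^p - (Q_j^p)^\dagger),
\end{equation*}
with $P_j^p$ and $Q_j^p$ built from the structure terms of the $p$th equation, and then compute $\frac{\d}{\dt}(U_j^p (U_j^p)^\dagger)$ as in the proof of Lemma \ref{L3.2} to conclude that $I_{d_p} - U_j^p(U_j^p)^\dagger$ is conserved. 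Using unitarity, the relations $U_j^p (U_j^p)^\dagger = I_{d_p} = (U_j^p)^\dagger U_j^p$ allow me to identify the nonlinear terms of \eqref{C-23} with the full MM cubic interactions appearing in \eqref{C-15-1} (with $\kappa = \kappa_1 + \kappa_2$), since, e.g., $U_k^p = U_k^p (U_j^p)^\dagger U_j^p$ and $U_k^p = U_j^p (U_j^p)^\dagger U_k^p$. Proposition \ref{P5.1} then yields that $T_i := U_i^1 \otimes \cdots \otimes U_i^m$ is a QSS to the LT model with coupling choice \eqref{C-18} and free flow tensor $A_i$ given by \eqref{C-18-1}. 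Finally, substituting $[B_j^p]_{\alpha_1\beta_1\alpha_2\beta_2} = [-\mi H_j^p]_{\alpha_1\alpha_2}\delta_{\beta_1\beta_2}$ into \eqref{C-18-1} produces precisely \eqref{C-24}.

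For assertion (2), I would argue by uniqueness. Given quadratically separable initial data $T_i^0 = U_i^{1,0} \otimes \cdots \otimes U_i^{m,0}$ with each $U_i^{p,0} \in \mathbf U(d_p)$, let $\{\mathcal U^p\}_{p=1}^m$ be the unique solution to \eqref{C-23} with these initial factors; existence globally in time is guaranteed by the Lipschitz structure of the right-hand side together with the unitarity-preservation established above, which keeps $\{U_j^p(t)\}$ in the compact manifold $\mathbf U(d_p)$. By part (1), the tensor $\widetilde T_i(t) := U_i^1(t) \otimes \cdots \otimes U_i^m(t)$ satisfies the LT model \eqref{LT} with the same free flow tensor \eqref{C-24} and the same initial datum $T_i^0$ as our original $T_i$. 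Since \eqref{LT} is an ODE with locally Lipschitz right-hand side, uniqueness forces $T_i(t) = \widetilde T_i(t)$ for all $t>0$, which is precisely the claimed quadratic separability with the prescribed dynamics of the factors.

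The only subtle step is the very first one — verifying that \eqref{C-23} genuinely preserves $\mathbf U(d_p)$ for each index $p$ separately even though the couplings between different $p$ appear through the scalar factors $\prod_{\ell \neq p} \langle U_j^\ell, U_k^\ell\rangle_\tF$. This is not automatic from the MM identity and must be checked by repeating the Lemma \ref{L3.2} computation within each $p$-block; the cross-block couplings are scalar and contribute symmetrically to $P_j^p - (P_j^p)^\dagger$, which is what makes $\frac{\d}{\dt}(U_j^p(U_j^p)^\dagger) = 0$ go through. Once this is in hand, everything else is a bookkeeping exercise that transplants Propositions \ref{P3.1} and \ref{P5.1} to the unitary setting.
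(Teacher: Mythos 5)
Your overall architecture is the paper's: propagate unitarity, identify the reduced interactions of \eqref{C-23} with the full MM interactions of \eqref{C-15-1} on the unitary group (with $\kp = \kappa_1+\kappa_2$), invoke Proposition \ref{P5.1}, check that substituting $[B_j^p]_{\alpha_1\beta_1\alpha_2\beta_2}=[-\mi H_j^p]_{\alpha_1\alpha_2}\delta_{\beta_1\beta_2}$ into \eqref{C-18-1} yields \eqref{C-24}, and close part (2) by ODE uniqueness. All of that matches what the paper does in Section \ref{sec:5.2} and is correct. The problem is the step you yourself single out as the only subtle one: your justification of unitarity propagation for \eqref{C-23} is wrong as stated, and the paper's ordering of the argument exists precisely to avoid this trap.

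Concretely, \eqref{C-23} is the \emph{already reduced} vector field, and off the unitary group it cannot be written in the Lemma \ref{L3.2} form $-\mi H_j^p U_j^p + (P_j^p-(P_j^p)^\dg)U_j^p + U_j^p(Q_j^p-(Q_j^p)^\dg)$: the natural choice $P_j^p = \frac{\kp}{N}\sum_k a_{jk} U_k^p (U_j^p)^\dg$, $a_{jk}:=\prod_{\ell\neq p}\langle U_j^\ell,U_k^\ell\rangle_\tF$, produces the cubic term $U_k^p(U_j^p)^\dg U_j^p$, whereas \eqref{C-23} contains the linear term $U_k^p$; these agree only on the set whose invariance you are trying to prove. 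Worse, the consequence such a form would give — exact conservation of $\|I_{d_p}-U_j^p(U_j^p)^\dg\|_\tF$ for \emph{arbitrary} data — is false for \eqref{C-23}: already for $N=1$ one computes, with $X:=U_1^p(U_1^p)^\dg$ and $a_{11}=\prod_{\ell\neq p}\|U_1^\ell\|_\tF^2$,
\begin{equation*}
\frac{\d}{\dt} X = [-\mi H_1^p, X] + 2\kp\, a_{11}\, X(I_{d_p}-X), \qquad
\frac{\d}{\dt}\|I_{d_p}-X\|_\tF^2 = -2\kp\, a_{11}\, \tr\big(X(I_{d_p}-X)^2\big),
\end{equation*}
which is strictly negative at $X=\tfrac12 I_{d_p}$. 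So ``repeating the Lemma \ref{L3.2} computation within each $p$-block'' does not go through for \eqref{C-23}; that computation lives on the unreduced model, where the skew-decomposition is an exact identity. (Also, even in Lemma \ref{L3.2} the conclusion is not $\frac{\d}{\dt}(U_jU_j^\dg)=0$ but that $U_jU_j^\dg$ evolves by a commutator, whence the norm of the defect is constant.)

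The gap is fixable in either of two ways. (i) Keep your direction but change the computation: along \eqref{C-23} the defect $Y_j:=I_{d_p}-U_j^p(U_j^p)^\dg$ satisfies
\begin{equation*}
\dot Y_j = [-\mi H_j^p, Y_j] - \frac{\kp}{N}\sum_{k=1}^N \Big( a_{jk}\, Y_j\, U_k^p (U_j^p)^\dg + a_{kj}\, U_j^p (U_k^p)^\dg\, Y_j \Big),
\end{equation*}
a linear homogeneous equation in $Y_j$ with locally bounded coefficients, so $Y_j(0)=O$ forces $Y_j\equiv O$ by Gr\"onwall (and similarly for $(U_j^p)^\dg U_j^p$). (ii) Follow the paper: let $\{W_j^p\}$ solve the MM model \eqref{C-15-1} with the same unitary initial data and $B_j^p$ induced by $H_j^p$; Lemma \ref{L3.2} applies verbatim there, so $W_j^p$ stays unitary and hence also solves \eqref{C-23}; uniqueness for \eqref{C-23} then gives $W_j^p = U_j^p$, so your given solution is unitary and solves \eqref{C-15-1}. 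With either repair, the remainder of your argument — Proposition \ref{P5.1}, the frequency-tensor bookkeeping, and the uniqueness step in part (2) — is sound.
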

\begin{remark}
In Section \ref{sec:4}, we provided the emergent dynamics of the double unitary matrix model \eqref{Z-0}. However for its generalized model \eqref{C-23}, emergent dynamics will not be studied, since it can be straightforwardly obtained from the results in Section \ref{sec:4}. 
\end{remark}

\section{Conclusion} \label{sec:6}
\setcounter{equation}{0}
In this paper, we have studied the existence and emergent dynamics of quadratically separable states for the Lohe tensor model which incorporates several well-known low-rank aggregation models such as the Kuramoto model, the Lohe sphere model and the Lohe matrix model, etc. In our previous work \cite{H-K-P}, we obtained completely separable states as special solutions to the Lohe tensor model defined as  tensor products of rank-1 real tensors (or vectors). In analogy with the previous work, we consider the states in which a solution can be decomposed as a tensor product of rank-2   tensors (or matrices), namely, a quadratically separable state. Precisely, if   initial data are quadratically separable, then such separability is preserved along the Lohe tensor flow. Moreover, by introducing and analyzing double matrix and unitary matrix models, we are able to study asymptotic behavior of the quadratically separable states to the Lohe tensor model. Of course, there are several issues that are not discussed in this work. For example, one can naturally consider the state consisting of tensors with possibly different ranks and sizes. We explore this issue in a future work. 

\newpage

\appendix
\section{Proof of Lemma \ref{L4.1}} \label{sec:app.A}
\setcounter{equation}{0}
In this appendix, we provide a proof of Lemma \ref{L4.1} in which a differential inequality for the aggregation functional $\mathcal L=\mathcal D(\mathcal U) + \mathcal D(\mathcal V) + \mathcal S(\mathcal U) +\mathcal  S(\mathcal V)$ is derived. We divide a proof into two steps: \newline
\begin{itemize}
\item
Step A:~we derive differential inequalities for $\mathcal D(\mathcal U)$ and $ \mathcal D(\mathcal V)$ (see Lemma \ref{LA.1}).

\vspace{0.2cm}

\item
Step B:~ we derive differential inequalities for $\mathcal S(\mathcal U)$ and $\mathcal S(\mathcal V)$ (see Lemma  \ref{LA.2}).
\end{itemize}

\vspace{0.5cm}

\begin{lemma} \label{LA.1} 
Let $\{(U_i,V_i)\}$ be a solution to \eqref{Z-1}. Then, $\mathcal D(\mathcal U)$ and $\mathcal D(\mathcal V) $ satisfy 
\begin{align} \label{AA-2} 
\begin{aligned}
&\frac{\d}{\dt} \mathcal D(\mathcal U) \leq -2m \kp \mathcal D(\mathcal U) + m\kp \mathcal D(\mathcal U)^3 + 6\kp \mathcal S(\mathcal V) \mathcal D(\mathcal U) + 2\kp \mathcal S(\mathcal V) \mathcal D(\mathcal U)^2 + 4\kp \sqrt n S(\mathcal V) , \\
&\frac{\d}{\dt} \mathcal D(\mathcal V) \leq -2n \kp \mathcal D(\mathcal V) + n\kp \mathcal D(\mathcal V)^3 + 6\kp \mathcal S(\mathcal U) \mathcal D(\mathcal V) + 2\kp \mathcal S(\mathcal U) \mathcal D(\mathcal V)^2 + 4\kp \sqrt m S(\mathcal U) .
\end{aligned}
\end{align}
\end{lemma}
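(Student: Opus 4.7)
The two inequalities in \eqref{AA-2} are perfectly symmetric under the role swap $(U,\mathcal U,n)\leftrightarrow (V,\mathcal V,m)$ in system \eqref{Z-1}, so the plan is to establish the first one and invoke symmetry for the second. For fixed $t>0$, I pick a maximizing pair $(p,q)=(p(t),q(t))$ so that $\mathcal D(\mathcal U)^2=\|U_p-U_q\|_\tF^2$. By Danskin's envelope theorem applied to the finite max, for a.e.\ $t$,
\[
\frac{1}{2}\frac{\d}{\dt}\mathcal D(\mathcal U)^2 \;=\; \textup{Re}\langle \dot U_p-\dot U_q,\,U_p-U_q\rangle_\tF,
\]
and the desired bound on $\frac{\d}{\dt}\mathcal D(\mathcal U)$ will follow by dividing by $\mathcal D(\mathcal U)$ (the case $\mathcal D(\mathcal U)=0$ being trivial).

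I write $\langle V_i,V_k\rangle_\tF = m - c_{ik}$ with $|c_{ik}|\leq \mathcal S(\mathcal V)$ to split $\dot U_i=\dot U_i^{\textup{lead}}+\dot U_i^{\textup{err}}$, where $\dot U_i^{\textup{lead}}:=\kappa m\,(U_c - U_iU_c^\dagger U_i)$ is the standard Lohe matrix vector field with effective coupling $\kappa m$ (and $U_c := \frac{1}{N}\sum_k U_k$), and $\dot U_i^{\textup{err}}:=\frac{\kappa}{N}\sum_k[-c_{ik}U_k+\bar c_{ik}U_iU_k^\dagger U_i]$. For the leading part, the factorization
\[
U_pU_c^\dagger U_p - U_qU_c^\dagger U_q = (U_p-U_q)U_c^\dagger U_p + U_qU_c^\dagger(U_p-U_q)
\]
combined with $U_p^\dagger U_c = I_n - \frac{1}{N}\sum_k(I_n - U_p^\dagger U_k)$ (and the analogous identity for $U_cU_q^\dagger$) reduces the inner product with $U_p-U_q$ to $2\|U_p-U_q\|_\tF^2$ plus $N$-averaged corrections of the form $\textup{Re}\,\textup{tr}((I_n-U_p^\dagger U_k)\,D^\dagger D)$ with $D:=U_p-U_q$. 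The key observation is that taking the real part extracts only the Hermitian part of $I_n - U_p^\dagger U_k$: these corrections equal $\textup{tr}(H_k D^\dagger D)$ with $H_k:=I_n-\frac{1}{2}(U_p^\dagger U_k+U_k^\dagger U_p)\succeq 0$ and $\|H_k\|_\textup{op}=\frac{1}{2}\|U_p-U_k\|_\textup{op}^2\leq \frac{1}{2}\mathcal D(\mathcal U)^2$, which yields $|\textup{tr}(H_k D^\dagger D)|\leq \frac{1}{2}\mathcal D(\mathcal U)^4$. Assembling the prefactor $\kappa m$ gives
\[
\textup{Re}\langle \dot U_p^{\textup{lead}}-\dot U_q^{\textup{lead}},U_p-U_q\rangle_\tF \;\leq\; -2\kappa m\,\mathcal D(\mathcal U)^2 + \kappa m\,\mathcal D(\mathcal U)^4.
\]

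For the error, I regroup
\[
E_p - E_q = \frac{\kappa}{N}\sum_{k=1}^N\bigl[(\bar c_{pk}-\bar c_{qk})U_qU_k^\dagger U_q + \bar c_{pk}(U_pU_k^\dagger U_p-U_qU_k^\dagger U_q) + (c_{qk}-c_{pk})U_k\bigr]
\]
and apply $|c_{pk}-c_{qk}|\leq 2\mathcal S(\mathcal V)$ (from $|c_{pk}|,|c_{qk}|\leq \mathcal S(\mathcal V)$), $\|U_pU_k^\dagger U_p-U_qU_k^\dagger U_q\|_\tF\leq 2\mathcal D(\mathcal U)$ (triangle inequality plus unitarity), and $\|U_k\|_\tF=\|U_qU_k^\dagger U_q\|_\tF=\sqrt n$ to obtain $\|E_p-E_q\|_\tF\leq 4\kappa\sqrt n\,\mathcal S(\mathcal V)+2\kappa\mathcal S(\mathcal V)\mathcal D(\mathcal U)$; Cauchy--Schwarz then yields $\textup{Re}\langle E_p-E_q,U_p-U_q\rangle_\tF\leq 4\kappa\sqrt n\,\mathcal S(\mathcal V)\mathcal D(\mathcal U)+2\kappa\mathcal S(\mathcal V)\mathcal D(\mathcal U)^2$. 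Summing the leading and error contributions and dividing by $\mathcal D(\mathcal U)$ delivers the first bound in \eqref{AA-2}; the slightly larger constants appearing in the stated target absorb mine via $\mathcal D(\mathcal U)\leq 2\sqrt n$. The second inequality then follows verbatim with the $(U,V,n,m)$ roles exchanged. The hardest step is the leading dissipation: a naive bound $|\textup{tr}((I_n-U_p^\dagger U_k)D^\dagger D)|\leq \|I_n-U_p^\dagger U_k\|_\tF\|D^\dagger D\|_\tF\leq \mathcal D(\mathcal U)^3$ would yield only a $\mathcal D(\mathcal U)^3$ correction, destroying the dissipation at scale $\mathcal D(\mathcal U)\sim 1$; the Hermitian/skew-Hermitian split is essential because the skew-Hermitian first-order piece of $I_n-U_p^\dagger U_k$ drops out under the real part, leaving only the genuinely quadratic Hermitian piece that furnishes the correct $\mathcal D(\mathcal U)^4$ correction.
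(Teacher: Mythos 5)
Your proof is correct, and in fact it establishes a strictly sharper inequality than \eqref{AA-2}: your final bound $-2\kp m\,\mathcal D(\mathcal U)+\kp m\,\mathcal D(\mathcal U)^3+2\kp\,\mathcal S(\mathcal V)\mathcal D(\mathcal U)+4\kp\sqrt n\,\mathcal S(\mathcal V)$ is dominated term by term by the right-hand side of \eqref{AA-2}, because the two extra terms there, $4\kp\mathcal S(\mathcal V)\mathcal D(\mathcal U)$ and $2\kp\mathcal S(\mathcal V)\mathcal D(\mathcal U)^2$, are nonnegative; your closing appeal to $\mathcal D(\mathcal U)\le 2\sqrt n$ is unnecessary. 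The route differs from the paper's in bookkeeping rather than in mechanism. The paper never fixes a maximizing pair: it derives the matrix ODE \eqref{AA-4} for the relative states $S_{ij}=I_n-U_iU_j^\dg$ of \emph{every} pair, pairs it against $S_{ij}^\dg$, and estimates the five groups $\mathcal I_{11},\dots,\mathcal I_{15}$, obtaining the quartic (rather than cubic) leading correction from the unitarity identity $S_{ij}+S_{ji}=S_{ij}S_{ji}$, which converts $\Re\,\textup{tr}(S_{ik}S_{ij}S_{ij}^\dg)$ into $\frac12\|S_{ki}S_{ij}\|_\tF^2$. Your Hermitian-part extraction is exactly this identity in different clothes, since $I_n-\frac12(U_p^\dg U_k+U_k^\dg U_p)=\frac12(U_p-U_k)^\dg(U_p-U_k)$; so the ``key observation'' you flag as essential is precisely what the paper exploits as well. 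What your maximizer-plus-mean-field organization buys is cleanliness and better constants: isolating the Lohe matrix vector field with effective coupling $\kp m$ and treating $m-\langle V_i,V_k\rangle_\tF$ as a fluctuation lets you keep the differences $c_{pk}-c_{qk}$ attached to norm-$\sqrt n$ matrices, which is why you avoid the paper's contributions $6\kp\mathcal S(\mathcal V)\mathcal D(\mathcal U)+2\kp\mathcal S(\mathcal V)\mathcal D(\mathcal U)^2$ coming from the pairwise estimates of $\mathcal I_{12},\mathcal I_{13},\mathcal I_{14}$. What the paper's pairwise formulation buys is reusability: the same ODE \eqref{AA-4} is traced to prove Lemma \ref{LA.2}, and is recycled in Appendix \ref{sec:app.B} for the heterogeneous functional $\mathcal F$ and in Appendix \ref{sec:app.C} for the $\SOn\times\SOm$ case, whereas your Danskin-type argument would have to be rerun separately for each of those functionals. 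One last caveat, which applies equally to the paper: since you differentiate a maximum of smooth functions, the inequality should be read in the sense of Dini derivatives or for a.e.\ $t$, which your envelope-theorem phrasing already acknowledges.
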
 
\begin{proof} 
 By straightforward calculations, one finds a differential equation for $G_{ij} = U_iU_j^\dg$:
\begin{equation} \label{AA-2-1}
\frac{\d}{\dt} G_{ij} = \frac\kp N \sum_{k=1}^N (c_{ik} G_{kj} - c_{ki} G_{ik} G_{ij} + c_{kj} G_{ik} - c_{jk} G_{ij} G_{kj}).
\end{equation}
By using the relation $G_{ij} = I_n - S_{ij}$, we see that $S_{ij}$ satisfies
\begin{align} \label{AA-3}
\begin{aligned}
\frac{\d}{\dt} S_{ij} &=  \frac\kp N \sum_{k=1}^N \Big[  c_{ik} S_{kj} - c_{ki}S_{ik} - c_{ki}S_{ij} + c_{i}S_{k}S_{ij} + c_{kj}S_{ik} - c_{jk}S_{kj} - c_{jk}S_{ij} + c_{jk} S_{ij}S_{kj} \\
&\hspace{2.5cm} + (c_{ki} - c_{ik} + c_{jk} - c_{kj})I_n \Big ].
\end{aligned}
\end{align}
After algebraic manipulation, we rewrite \eqref{AA-3} in terms of $S_{ij}$ and $c_{ij}-m$ that are expected to converge to zero:
\begin{align} \label{AA-4}
\begin{aligned}
\frac{\d S_{ij}}{dt} & =-2 m \kp S_{ij} + \frac{m\kp}{N} \sum_{k=1}^N \Big( S_{ik} S_{ij} + S_{ij}S_{kj}  \Big)\\
& \hspace{0.2cm}+ \frac\kp N \sum_{k=1}^N \Big[ (c_{ik} - m)S_{kj} - (c_{ki}-m)S_{ik} - (c_{ik} - m )S_{ij} + (c_{ki}-m)S_{ik}S_{ij} \Big ] \\
&\hspace{0.2cm}+\frac\kp N \sum_{k=1}^N \Big[ (c_{kj} - m )S_{ik} - (c_{jk}-m)S_{kj} - (c_{jk}-m)S_{ij} + (c_{jk}-m)S_{ij}S_{kj} \Big ] \\
&\hspace{0.2cm}+\frac\kp N \sum_{k=1}^N \Big[  (c_{ki} - c_{ik}  +  c_{jk} - c_{kj})I_n \Big ].
\end{aligned}
\end{align}
On the other hand for an $n\times n$ matrix $A$, one has
\begin{equation*}
\frac12\frac{\d}{\dt} \|A\|_\tF^2 = \frac12\frac{\d}{\dt} \textup{tr} (AA^\dg)  = \frac12\tr (\dot A A^\dg + A\dot A^\dg) = \Re\tr (\dot A A^\dg).
\end{equation*}
We multiply \eqref{AA-4} with $S_{ij}^\dg$ to find
\begin{align} \label{AA-6}
\begin{aligned}
&\frac12\frac{\d}{\dt} \|S_{ij}\|_\tF^2 =  -2m\kp \|S_{ij}\|_\tF^2 + \frac{m\kp}{N}\sum_{k=1}^N \Re\tr ( S_{ik} S_{ij} S_{ij}^\dg + S_{ij} S_{kj} S_{ij}^\dg) \\ 
&\hspace{0.5cm} +\frac\kp N \sum_{k=1}^N \Re [(c_{ik}-m) \tr(  S_{kj}S_{ij}^\dg )] - \Re[(c_{ki}-m)\tr ( S_{ik}S_{ij}^\dg)] - \Re(c_{ik}-m)\|S_{ij}\|_\tF^2 \\
&\hspace{0.5cm}+ \frac\kp N \sum_{k=1}^N \Re[(c_{kj}-m) \tr(S_{ik}S_{ij}^\dg)] - \Re[(c_{jk}-m)\tr(S_{kj}S_{ij}^\dg) ]- \Re(c_{jk}-m) \|S_{ij}\|_\tF^2 \\
&\hspace{0.5cm}+ \frac\kp N \sum_{k=1}^N \Re[(c_{ki}-m) \tr( S_{ik}S_{ij}S_{ij}^\dg)] + \Re[ (c_{jk}-m)\tr (S_{ij}S_{kj}S_{ij}^\dg) ]\\
&\hspace{0.5cm}+\frac\kp N \sum_{k=1}^N  \Re[(c_{ki}-c_{ik} + c_{jk} - c_{kj}) \tr(S_{ij}^\dg) ]\\
& = : -2m\kp \|S_{ij}\|_\tF^2 + \mathcal I_{11} + \mathcal I_{12} +  \mathcal I_{13} +  \mathcal I_{14} +  \mathcal I_{15}.
\end{aligned}
\end{align}
Below, we present   estimates for $\mathcal I_{1k},~k=1,\cdots,5$, respectively. \newline

\noindent $\bullet$ (Estimate of $\mathcal I_{11}$): We use \eqref{AA-7} and 
\begin{equation} \label{AA-7}
S_{ij} + S_{ji} = S_{ij}S_{ji}, \quad S_{ij}^\dg = S_{ji} 
\end{equation} 
to derive 
\begin{align*}
\Re \tr ( S_{ik} S_{ij} S_{ji} ) &= \frac12 \tr ( S_{ik} S_{ij} S_{ji} + S_{ij}S_{ji}S_{ki}) = \frac12 \tr( S_{ij}S_{ji}(S_{ki} + S_{ik}) )  \\
& = \frac12 \tr (S_{ij}S_{ji} S_{ki}S_{ik}) = \frac12\|S_{ki}S_{ij}\|_\tF^2. 
\end{align*}
Similarly,  one has 
\begin{equation*}
\Re\tr ( S_{ij}S_{kj} S_{ij}^\dg) = \frac12 \|S_{ij}S_{jk}\|_\tF^2 .
\end{equation*}
Hence, $\mathcal I_{11}$ satisfies
\[
\mathcal I_{11} = \frac{m\kp}{N}\sum_{k=1}^N \Re\tr ( S_{ik} S_{ij} S_{ij}^\dg + S_{ij} S_{kj} S_{ij}^\dg) = \frac{m\kp}{2N}\sum_{k=1}^N  \big( \|S_{ki}S_{ij}\|_\tF^2 + \|S_{ij}S_{jk}\|_\tF^2 \big) \leq m \kp \mathcal D(\mathcal U)^4.
\]
$\bullet$ (Estimates of $\mathcal I_{12}$ and $\mathcal I_{13}$): By the maximality of $\mathcal D(\mathcal U)$ and $\mathcal S(\mathcal V)$, we have
\begin{align*}
\mathcal I_{12} &= \frac\kp N \sum_{k=1}^N\Re[ (c_{ik}-m) \tr(  S_{kj}S_{ij}^\dg ) ]- \Re[(c_{ki}-m)\tr ( S_{ik}S_{ij}^\dg) ]- \Re(c_{ik}-m)\|S_{ij}\|_\tF^2 \\
&\leq 3\kp \mathcal S(\mathcal V) \mathcal D(\mathcal U)^2, \\
\mathcal I_{13} & =  \frac\kp N \sum_{k=1}^N\Re[ (c_{kj}-m) \tr(S_{ik}S_{ij}^\dg)] -\Re[ (c_{jk}-m)\tr(S_{kj}S_{ij}^\dg) ]-\Re (c_{jk}-m) \|S_{ij}\|_\tF^2 \\
& \leq 3\kp\mathcal S(\mathcal V) \mathcal D(\mathcal U)^2.  
\end{align*}
$\bullet$ (Estimate of $\mathcal I_{14}$): By straightforward calculation, one has
\begin{align*}
\mathcal I_{14} &=  \frac\kp N \sum_{k=1}^N \Re[(c_{ki}-m) \tr( S_{ik}S_{ij}S_{ij}^\dg)] + \Re[ (c_{jk}-m)\tr (S_{ij}S_{kj}S_{ij}^\dg) ]  \leq  2\kp \mathcal S(\mathcal V) \mathcal D(\mathcal U)^3.
\end{align*} 
$\bullet$ (Estimate of $\mathcal I_{15}$): Note that for an $n\times n$ matrix $A$, 
\begin{equation*}
\tr(A) = \tr(I_n A) \leq \|I_n\|_\tF \|A\|_\tF = \sqrt n \|A\|_\tF.
\end{equation*}
This yields
\begin{equation} \label{AA-9}
\mathcal I_{15} = \frac\kp N \sum_{k=1}^N  \Re[(c_{ki}-c_{ik} + c_{jk} - c_{kj}) \tr(S_{ij}^\dg) ] \leq 4 \kp\sqrt n  \mathcal S(\mathcal V) \mathcal D(\mathcal U).
\end{equation}
In \eqref{AA-6}, we collect all the estimates for $\mathcal I_{1k},~k=1,\cdots,5$ to obtain 
\[
\frac12\frac{\d}{\dt} \|S_{ij}\|_\tF^2 \leq   -2m\kp \|S_{ij}\|_\tF^2 + m\kp \mathcal D(\mathcal U)^4 + 6\kp \mathcal S(\mathcal V) \mathcal D(\mathcal U)^2 + 2\kp \mathcal S(\mathcal V) \mathcal D(\mathcal U)^3 + 4\kp \sqrt n \mathcal S(\mathcal V) \mathcal D(\mathcal U).
\]
Hence, $\mathcal D(\mathcal U)$ satisfies
\begin{equation*} \label{AA-10}
\frac{\d}{\dt} \mathcal D(\mathcal U) \leq -2m \kp \mathcal D(\mathcal U) + m\kp \mathcal D(\mathcal U)^3 + 6\kp \mathcal S(\mathcal V) \mathcal D(\mathcal U) + 2\kp \mathcal S(\mathcal V) \mathcal D(\mathcal U)^2 + 4\kp \sqrt n \mathcal S(\mathcal V) .
\end{equation*}
Similarly, one can find a differential inequality for $\mathcal D(\mathcal V)$ by exchanging the roles of $\mathcal U$ and $\mathcal V$: 
\begin{equation*} \label{AA-11} 
\frac{\d}{\dt} \mathcal D(\mathcal V) \leq -2n \kp \mathcal D(\mathcal V) + n\kp \mathcal D(\mathcal V)^3 + 6\kp \mathcal S(\mathcal U) \mathcal D(\mathcal V) + 2\kp \mathcal S(\mathcal U) \mathcal D(\mathcal V)^2 + 4\kp \sqrt m\mathcal S(\mathcal U) .
\end{equation*}
\end{proof}
In \eqref{AA-2}, note that $\mathcal S(\mathcal U)$ and $\mathcal S(\mathcal V)  $ appear in the differential inequalities for $\mathcal D(\mathcal U)$ and $\mathcal D(\mathcal V) $. Hence, we derive the differential inequalities for $\mathcal S(\mathcal U)$ and $\mathcal S(\mathcal V) $ below. 

\begin{lemma} \label{LA.2} 
Let $\{(U_i,V_i)\}$ be a solution to \eqref{Z-1}. Then, $\mathcal S(\mathcal U)$ and $\mathcal S(\mathcal V) $ satisfy 
\begin{align} \label{AA-12}
\begin{aligned}
&\frac{\d}{\dt} \mathcal S(\mathcal U) \leq -2m\kp \mathcal S(\mathcal U) + 2m\kp \mathcal D( \mathcal U)^2 + 6\kp \mathcal S(\mathcal U) \mathcal S(\mathcal V) + 2\kp \mathcal S(\mathcal V) \mathcal D(\mathcal U)^2 + 4\kp \sqrt n \mathcal S(\mathcal V), \\
&\frac{\d}{\dt} \mathcal S(\mathcal V) \leq -2n\kp \mathcal S(\mathcal V) + 2n\kp \mathcal D( \mathcal V)^2 + 6\kp \mathcal S(\mathcal U) \mathcal S(\mathcal V) + 2\kp \mathcal S(\mathcal U) \mathcal D(\mathcal V)^2 + 4\kp \sqrt m \mathcal S(\mathcal U).
\end{aligned}
\end{align}
\end{lemma}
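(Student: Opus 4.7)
The plan is to follow the template of Lemma~\ref{LA.1}, but apply the trace rather than the Frobenius inner product to the evolution equation~\eqref{AA-4}. Set $a_{ij}:=\tr(S_{ij}) = n-\overline{d_{ij}}$, so that $|a_{ij}| = |n-d_{ij}|$ and hence $\mathcal{S}(\mathcal{U}) = \max_{i,j}|a_{ij}|$. Taking the trace of \eqref{AA-4} produces a complex scalar ODE
\[
\dot{a}_{ij} = -2m\kappa\,a_{ij} + \frac{m\kappa}{N}\sum_k\bigl[\tr(S_{ik}S_{ij})+\tr(S_{ij}S_{kj})\bigr] + (\text{six }(c{-}m)\text{-weighted perturbations}) + \frac{\kappa n}{N}\sum_k(c_{ki}-c_{ik}+c_{jk}-c_{kj}),
\]
the final scalar error coming from $\tr(I_n)=n$.

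I would then compute $\frac{\d}{\dt}|a_{ij}|^2 = 2\Re(\overline{a_{ij}}\dot{a}_{ij})$ and estimate each resulting contribution, finally dividing by $2|a_{ij}|$ at the maximizing index pair (Danskin's argument) to obtain a differential inequality for $\mathcal{S}(\mathcal{U})$. The damping yields the $-2m\kappa\,\mathcal{S}(\mathcal{U})$ decay. The cubic contractions are bounded via Cauchy--Schwarz, $|\tr(S_{ik}S_{ij})|\leq\|S_{ik}\|_\tF\|S_{ij}\|_\tF\leq\mathcal{D}(\mathcal{U})^2$, producing the $2m\kappa\,\mathcal{D}(\mathcal{U})^2$ term. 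The six $(c{-}m)$-weighted perturbations, handled just as $\mathcal{I}_{12}$--$\mathcal{I}_{14}$ in Lemma~\ref{LA.1}, combine $|c_{ij}-m|\leq\mathcal{S}(\mathcal{V})$ with $|\tr(S_{\bullet\bullet})|\leq\mathcal{S}(\mathcal{U})$ or $|\tr(S_{\bullet\bullet}S_{\bullet\bullet})|\leq\mathcal{D}(\mathcal{U})^2$, producing the $6\kappa\,\mathcal{S}(\mathcal{U})\mathcal{S}(\mathcal{V})$ and $2\kappa\,\mathcal{S}(\mathcal{V})\mathcal{D}(\mathcal{U})^2$ contributions.

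The main obstacle is the identity-matrix error $\frac{\kappa n}{N}\sum_k(c_{ki}-c_{ik}+c_{jk}-c_{kj})$: the naive bound $|c_{ki}-c_{ik}|\leq 2\mathcal{S}(\mathcal{V})$ yields a $4n\kappa\,\mathcal{S}(\mathcal{V})$ contribution, which is a factor $\sqrt{n}$ too weak and would destroy the synchronization threshold $m>4\sqrt{n}$ in Theorem~\ref{T4.1}. The sharper $\sqrt{n}$-scaling is extracted by exploiting two structural features: (i) $c_{ki}-c_{ik} = 2i\,\textup{Im}(c_{ki})$ is purely imaginary, so that $\Re(\overline{a_{ij}}\cdot\text{error})$ pairs only with $\textup{Im}(a_{ij})=\textup{Im}(d_{ij})$; and (ii) $\textup{Im}(d_{ij}) = \textup{Im}\bigl(\tr(U_i^\dg(U_j-U_i))\bigr)$ is controlled by $\sqrt{n}\,\|U_i-U_j\|_\tF\leq\sqrt{n}\,\mathcal{D}(\mathcal{U})$ via Cauchy--Schwarz. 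Inserting these into the chain-rule identity $\frac{\d}{\dt}|a_{ij}|^2 = 2|a_{ij}|\frac{\d}{\dt}|a_{ij}|$ converts the $n$-scaling of the raw error into the $\sqrt{n}$-scaling of the final coefficient, delivering the target $4\kappa\sqrt{n}\,\mathcal{S}(\mathcal{V})$ term. The inequality for $\mathcal{S}(\mathcal{V})$ follows by the symmetric argument with the roles $(n,U,H)\leftrightarrow(m,V,G)$ interchanged.
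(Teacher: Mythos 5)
You have located the crux correctly, and everything else in your plan is sound and coincides with the paper's own proof: the paper likewise takes the trace of \eqref{AA-4} and estimates the damping, the cubic contractions, and the $(c-m)$-weighted terms ($\mathcal I_{21}$--$\mathcal I_{24}$) exactly as you do. The genuine gap is your final step, the claim that the chain rule converts the factor $n=\tr(I_n)$ into $\sqrt n$. Write $a_{ij}=\tr(S_{ij})$ and $E:=\frac{\kp n}{N}\sum_k(c_{ki}-c_{ik}+c_{jk}-c_{kj})$; your two structural observations are correct, but at the maximizing pair the contribution of $E$ to $\frac{\d}{\dt}\mathcal S(\mathcal U)=\Re(\overline{a_{ij}}\,\dot a_{ij})/|a_{ij}|$ is $\textup{Im}(a_{ij})\,\textup{Im}(E)/\mathcal S(\mathcal U)$, so inserting $|\textup{Im}(a_{ij})|\le\sqrt n\,\mathcal D(\mathcal U)$ and $|E|\le 4\kp n\,\mathcal S(\mathcal V)$ yields the bound
\begin{equation*}
\frac{\sqrt n\,\mathcal D(\mathcal U)}{\mathcal S(\mathcal U)}\cdot 4\kp n\,\mathcal S(\mathcal V),
\end{equation*}
and the prefactor $\sqrt n\,\mathcal D(\mathcal U)/\mathcal S(\mathcal U)$ is never $O(1/\sqrt n)$. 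Indeed, $|n-d_{ij}|\le\tfrac12\|U_i-U_j\|_\tF^2+|\textup{Im}(d_{ij})|\le 2\sqrt n\,\|U_i-U_j\|_\tF$, so one always has $\mathcal S(\mathcal U)\le 2\sqrt n\,\mathcal D(\mathcal U)$ and hence $\sqrt n\,\mathcal D(\mathcal U)/\mathcal S(\mathcal U)\ge\tfrac12$: the division by $|a_{ij}|$ exactly cancels what you gained by trading $|\textup{Im}(a_{ij})|\le|a_{ij}|$ for $\sqrt n\,\mathcal D(\mathcal U)$, and this route can never produce a coefficient below $2\kp n\,\mathcal S(\mathcal V)$.

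Moreover, this step cannot be repaired by any other argument, because with the coefficient $4\kp\sqrt n\,\mathcal S(\mathcal V)$ the inequality \eqref{AA-12} is false once $n>4$. Take $N=2$, $U_1^0=U_2^0=U\in\Un$ and $V_2^0=e^{\mi\theta}V_1^0$ with $\theta>0$ small. Substituting into \eqref{Z-1} and using unitarity gives $\dot U_1(0)=\mi\kp m\sin\theta\,U$ and $\dot U_2(0)=-\mi\kp m\sin\theta\,U$, hence $\frac{\d}{\dt}\langle U_1,U_2\rangle_\tF(0)=-2\mi\kp mn\sin\theta$ and $\frac{\d}{\dt}\mathcal S(\mathcal U)(0^+)=2\kp mn\sin\theta$, while the right-hand side of \eqref{AA-12} reduces to $4\kp\sqrt n\,\mathcal S(\mathcal V)=8\kp\sqrt n\,m\sin(\theta/2)$ because $\mathcal D(\mathcal U)=\mathcal S(\mathcal U)=0$; for small $\theta$ the left side exceeds the right precisely when $\sqrt n>2$, and the violation persists for small $t>0$ by continuity. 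For comparison, the paper's own proof stumbles at the identical spot: \eqref{AA-15} records the trace of $(\cdots)I_n$ with the factor $\sqrt n$ although $\tr(I_n)=n$, and \eqref{AA-17} then asserts $\mathcal I_{25}\le 4\kp\sqrt n\,\mathcal S(\mathcal V)$ --- the same unjustified $n\mapsto\sqrt n$ replacement you tried to legitimize. The honest trace bound is $4\kp n\,\mathcal S(\mathcal V)$, with which the sign structure needed for Theorem \ref{T4.1} (cf.\ Remark \ref{RA.1}) is lost. So the obstacle you flagged is real, but it lies in the statement itself, not merely in your argument, and no choice of estimates can close it.
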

\begin{proof} 
 In \eqref{AA-4}, we take trace to obtain 
\begin{align} \label{AA-15}
\begin{aligned}
&\frac{\d}{\dt} (n- d_{ij}) =-2m\kp ( n-d_{ij}) + \frac{m\kp}{N}\sum_{k=1}^N \textup{tr} (S_{ik}S_{ij} + S_{ij} S_{kj}) \\
&\hspace{1cm}+ \frac\kp N \sum_{k=1}^N (c_{ik} - m)(n-d_{kj}) - (c_{ki}-m)(n-d_{ik}) - (c_{ik}-m)(n-d_{ij}) \\
&\hspace{1cm}+\frac\kp N \sum_{k=1}^N (c_{kj}-m)(n-d_{ik}) - (c_{jk}-m)(n-d_{kj}) - (c_{jk}-m)(n-d_{ij}) \\
&\hspace{1cm} +\frac\kp N \sum_{k=1}^N (c_{ki}-m)\textup{tr} (S_{ik}S_{ij}) + (c_{jk}-m)\textup{tr}( S_{ij}S_{kj}) \\
&\hspace{1cm} + \frac\kp N \sum_{k=1}^N[  (c_{ki} - c_{ik}) + (c_{jk} - c_{kj})]\sqrt n  \\
&\hspace{1cm} =: -2m\kp ( n-d_{ij}) + \mathcal I_{21} +  \mathcal I_{22}+  \mathcal I_{23} +  \mathcal I_{24} +  \mathcal I_{25}. 
\end{aligned}
\end{align}
Below, we present   estimates of $\mathcal I_{2k},~k=1,\cdots,5$, separately. \newline

\noindent $\bullet$ (Estimate of $\mathcal I_{21}$): We use the definition of $\mathcal D(\mathcal U)$ to find
\begin{align*}
\mathcal I_{21} = \frac{m\kp}{N}\sum_{k=1}^N \textup{tr} (S_{ik}S_{ij} + S_{ij} S_{kj}) \leq 2m\kp \mathcal D(\mathcal U)^2.
\end{align*}
\noindent $\bullet$ (Estimates of $\mathcal I_{22}$ and $\mathcal I_{23}$):~It is easy to see that  
\begin{align*}
&\mathcal I_{22} = \frac\kp N \sum_{k=1}^N (c_{ik} - m)(n-d_{kj}) - (c_{ki}-m)(n-d_{ik}) - (c_{ik}-m)(n-d_{ij}) \leq 3\kp \mathcal S(\mathcal U) \mathcal S(\mathcal V), \\
&\mathcal I_{23} = \frac\kp N \sum_{k=1}^N (c_{kj}-m)(n-d_{ik}) - (c_{jk}-m)(n-d_{kj}) - (c_{jk}-m)(n-d_{ij}) \leq 3\kp \mathcal S(\mathcal U) \mathcal S(\mathcal V).
\end{align*}
\noindent $\bullet$ (Estimate of $\mathcal I_{24}$): Similar to $\mathcal I_{21}$, one has
\begin{align*}
\mathcal I_{24} = \frac\kp N \sum_{k=1}^N (c_{ki}-m)\textup{tr} (S_{ik}S_{ij}) + (c_{jk}-m)\textup{tr}( S_{ij}S_{kj}) \leq 2\kp \mathcal S(\mathcal V) \mathcal D(\mathcal U)^2. 
\end{align*} 
\noindent $\bullet$ (Estimate of $\mathcal I_{25}$): We find
\begin{equation} \label{AA-17}
\mathcal I_{25} = \frac\kp N \sum_{k=1}^N[  (c_{ki} - c_{ik}) + (c_{jk} - c_{kj})]\sqrt n \leq 4\kp \sqrt n \mathcal S(\mathcal V).
\end{equation} 
In \eqref{AA-15}, we combine all estimates to obtain 
\begin{equation*} \label{AA-20}
\frac{\d}{\dt} \mathcal S(\mathcal U) \leq  -2m\kp \mathcal S(\mathcal U) + 2m\kp \mathcal D( \mathcal U)^2 + 6\kp \mathcal S(\mathcal U) \mathcal S(\mathcal V) + 2\kp \mathcal S(\mathcal V) \mathcal D(\mathcal U)^2 + 4\kp \sqrt n \mathcal S(\mathcal V).
\end{equation*}
By exchanging the roles of $\mathcal U$ and $\mathcal V$, we derive a differential inequality for $\mathcal S(\mathcal V)$: 
\begin{equation*} \label{AA-21}
\frac{\d}{\dt} \mathcal S(\mathcal V) \leq -2n\kp \mathcal S(\mathcal V) + 2n\kp \mathcal D( \mathcal V)^2 + 6\kp \mathcal S(\mathcal U) \mathcal S(\mathcal V) + 2\kp \mathcal S(\mathcal U) \mathcal D(\mathcal V)^2 + 4\kp \sqrt m \mathcal S(\mathcal U).
\end{equation*}
\end{proof}

\begin{remark} \label{RA.1} 
For homogeneous Hamiltonians, we add $\eqref{AA-12}_1$ and $\eqref{AA-12}_2$ to find 
\begin{align*}
\frac{\d}{\dt} (\mathcal S(\mathcal U) + \mathcal S(\mathcal V)) &\leq  -2\kp ( m-2\sqrt m ) \mathcal S(\mathcal U) -2\kp ( n-2\sqrt n) \mathcal S(\mathcal V) + \mathcal O( (\mathcal S(\mathcal U) + \mathcal S(\mathcal V))^2).
\end{align*}
Hence, in order to obtain the desired convergence, we assume
\begin{equation*}
n>2\sqrt n,\quad m >2\sqrt m,\quad \textup{i.e.,}\quad n,m>4,
\end{equation*}
 which requires the restriction on the size of $U_i$ and $V_j$. This technical assumption arises from the estimate of $\mathcal I_{25}$ in \eqref{AA-17}.
\end{remark}
\vspace{0.5cm}

\noindent Now, we are ready to present a proof of Lemma \ref{L4.1} using  Lemma \ref{LA.1} and Lemma \ref{LA.2}. \newline

\noindent {\it Proof of Lemma \ref{L4.1}}:~Recall the inequalities in \eqref{AA-2} and \eqref{AA-12}:
\begin{align} \label{AA-24}
\begin{aligned}
&\frac{\d}{\dt} \mathcal D(\mathcal U) \leq -2m \kp \mathcal D(\mathcal U) + m\kp \mathcal D(\mathcal U)^3 + 6\kp \mathcal S(\mathcal V) \mathcal D(\mathcal U) + 2\kp \mathcal S(\mathcal V) \mathcal D(\mathcal U)^2 + 4\kp \sqrt n \mathcal S(\mathcal V), \\
&\frac{\d}{\dt} \mathcal D(\mathcal V) \leq -2n \kp \mathcal D(\mathcal V) + n\kp \mathcal D(\mathcal V)^3 + 6\kp \mathcal S(\mathcal U) \mathcal D(\mathcal V) + 2\kp \mathcal S(\mathcal U) \mathcal D(\mathcal V)^2 + 4\kp \sqrt m \mathcal S(\mathcal U), \\
&\frac{\d}{\dt} \mathcal S(\mathcal U) \leq -2m\kp \mathcal S(\mathcal U) + 2m\kp \mathcal D( \mathcal U)^2 + 6\kp \mathcal S(\mathcal U) \mathcal S(\mathcal V) + 2\kp \mathcal S(\mathcal V) \mathcal D(\mathcal U)^2 + 4\kp \sqrt n \mathcal S(\mathcal V), \\
&\frac{\d}{\dt} \mathcal S(\mathcal V) \leq -2n\kp \mathcal S(\mathcal V) + 2n\kp \mathcal D( \mathcal V)^2 + 6\kp \mathcal S(\mathcal U) \mathcal S(\mathcal V) + 2\kp \mathcal S(\mathcal U) \mathcal D(\mathcal V)^2 + 4\kp \sqrt m \mathcal S(\mathcal U).
\end{aligned}
\end{align} 
Without loss of generality, we may assume $n\geq m$, and add all the inequalities in \eqref{AA-24} to find 
\begin{align} \label{AA-27}
\begin{aligned}
\frac{\d}{\dt} \mathcal L &\leq  - 2\kp (m-4\sqrt n) \mathcal L + n\kp (\mathcal D(\mathcal U)^3 + 2\mathcal D(\mathcal U)^2 + \mathcal D(\mathcal V)^3 + 2\mathcal D(\mathcal V)^2) \\
&\hspace{0.5cm} +6\kp ( \mathcal D(\mathcal U) \mathcal S(\mathcal V) + \mathcal D(\mathcal V)\mathcal S(\mathcal U) + 2 \mathcal S(\mathcal U)\mathcal S(\mathcal V)) + 4\kp ( \mathcal D(\mathcal U)^2 \mathcal S(\mathcal V) + \mathcal D(\mathcal V)^2 \mathcal S(\mathcal U)) \\
&=:- 2\kp (m-4\sqrt n) \mathcal L + n \kp \mathcal I_{31} + 6\kp \mathcal I_{32} + 4\kp \mathcal I_{33}. 
\end{aligned}
\end{align}
In the sequel, we provide estimates for $\mathcal I_{3k},~k=1,2,3$, respectively. \newline

\noindent $\bullet$ (Estimate of $\mathcal I_{31}$): We use a rough estimate to find
\begin{align*}
\mathcal I_{31} = \mathcal D(\mathcal U)^3 + 2\mathcal D(\mathcal U)^2 + \mathcal D(\mathcal V)^3 + 2\mathcal D(\mathcal V)^2 \leq \mathcal L^3 + 2\mathcal L^2 .
\end{align*}
\noindent $\bullet$ (Estimate of $\mathcal I_{32}$): By straightforward calculation, one has
\begin{align*}
\mathcal I_{32} &=  \mathcal D(\mathcal U) \mathcal S(\mathcal V) + \mathcal D(\mathcal V)\mathcal S(\mathcal U) + 2 \mathcal S(\mathcal U)\mathcal S(\mathcal V)  \\
&\leq \frac12 \mathcal D(\mathcal U)^2 +  \frac12\mathcal D(\mathcal V)^2 + \frac32 \mathcal S(\mathcal U)^2 +\frac32  \mathcal S(\mathcal V)^2    \leq \frac32\mathcal L^2.
\end{align*}
\noindent $\bullet$ (Estimate of $\mathcal I_{33}$):~We use Young's inequality that for $a,b>0$, 
\begin{equation*}
a^2 b \leq \frac23 a^3 + \frac13 b^3
\end{equation*}
to find
\begin{align*}
\mathcal I_{33} = \mathcal D(\mathcal U)^2 \mathcal S(\mathcal V) + \mathcal D(\mathcal V)^2 \mathcal S(\mathcal U) \leq \frac23 \mathcal D(\mathcal U)^3  + \frac13 \mathcal S(\mathcal V)^3 + \frac23 \mathcal D(\mathcal V)^3 + \frac13 \mathcal S(\mathcal U)^3 \leq \frac23 \mathcal L^3.
\end{align*}
In \eqref{AA-27}, we combine all the estimates to obtain the desired inequality for $\mathcal L$: 
\begin{align*}
\frac{\d}{\dt} \mathcal L &\leq-2\kp (m-4\sqrt n)\mathcal L + 2\kp n(\mathcal L^3 + 2\mathcal L^2) + 9\kp \mathcal L^2 + \frac{8\kp}{3} \mathcal L^3 \\
&= -2\kp (m-4\sqrt n)\mathcal L + + \kp(4n+9)\mathcal L^2 +  \kp \left( 2n + \frac 83\right)\mathcal L^3.
\end{align*}

\section{Proof of Lemma \ref{L4.2}} \label{sec:app.B}
\setcounter{equation}{0}
In this appendix, we present a proof of Lemma \ref{L4.2} in which the differential inequality for the aggregation functional $\mathcal F$ in \eqref{Z-52} will be derived. 

\begin{lemma} \label{LB.1}
Let $\{(U_i,V_i)\}$ be a solution to \eqref{Z-0}. Then, $d(U,\tilde U)$ and $d(V,\tilde V)$ satisfy 
\begin{align} \label{BB-0}
\begin{aligned}
\frac{\d}{\dt} d(U,\tilde U) &\leq -2m\kp d(U,\tilde U) + 4m\kp \mathcal L d(U,\tilde U) + 6\mathcal L(d(U,\tilde U) + \mathcal S(V,\tilde V) ) \\
&\hspace{0.5cm}+ 2\mathcal L^2 (4d(U,\tilde U) + \mathcal S(V,\tilde V)) + 4\sqrt n \mathcal S(V,\tilde V). \\
\frac{\d}{\dt} d(V,\tilde V) &\leq -2n\kp d(V,\tilde V) + 4n\kp \mathcal L d(V,\tilde V) + 6\mathcal L(d(V,\tilde V) + \mathcal S(U,\tilde U) ) \\
&\hspace{0.5cm}+ 2\mathcal L^2 (4d(V,\tilde V) + \mathcal S(U,\tilde U)) + 4\sqrt m \mathcal S(U,\tilde U).
\end{aligned}
\end{align}
\end{lemma}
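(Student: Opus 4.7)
The plan is to mirror the derivation of Lemma \ref{LA.1}, with the matrix $S_{ij} = I_n - U_iU_j^\dg$ replaced by the between-solutions difference $\Delta G_{ij} := U_i U_j^\dg - \tilde U_i \tilde U_j^\dg$. Starting from \eqref{AA-2-1} I write the analogous ODE for $\tilde G_{ij} := \tilde U_i \tilde U_j^\dg$ with $\tilde c_{ij} := \langle \tilde V_i, \tilde V_j\rangle_\tF$ in place of $c_{ij}$, subtract, and use the two telescoping identities
\begin{equation*}
cX - \tilde c\,\tilde X = c(X-\tilde X) + (c-\tilde c)\tilde X, \qquad G_aG_b - \tilde G_a\tilde G_b = (G_a-\tilde G_a)G_b + \tilde G_a(G_b-\tilde G_b)
\end{equation*}
to re-express the right-hand side as a part linear in $\Delta G$ driven by $\Delta c_{ab} := c_{ab} - \tilde c_{ab}$. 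The free-flow contribution $\mi(\Delta G_{ij} H_j - H_i\Delta G_{ij})$ will drop out in the next step because $\tr(H\,\Delta G_{ij}^\dg\Delta G_{ij})$ is real whenever $H$ is Hermitian, so its product with $\mi$ has vanishing real part; this is why the statement contains no $\mathcal D(\mathcal H)$-type contribution.

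Next, I compute $\tfrac12 \tfrac{\d}{\dt}\|\Delta G_{ij}\|_\tF^2 = \Re\,\tr\bigl(\dot{\Delta G}_{ij}\,\Delta G_{ij}^\dg\bigr)$ and expand each $c_{ab}$ as $m + (c_{ab}-m)$. The $m$-proportional part collapses, via $G = I_n - S$ together with the symmetry identity $S_{ij} + S_{ji} = S_{ij}S_{ji}$ from \eqref{AA-7}, to the coercive estimate $-2m\kp\|\Delta G_{ij}\|_\tF^2$ plus cubic-in-diameter corrections bounded by $4m\kp\,\mathcal L\,d(U,\tilde U)^2$ (exactly as in the treatment of $\mathcal I_{11}$ in Appendix \ref{sec:app.A}). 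The $(c_{ab}-m)$-corrections, controlled by $|c_{ab}-m|\le \mathcal S(\mathcal V)\le \mathcal L$, combine with the $\Delta G$- and $\Delta c$-factors to yield contributions of order $\mathcal L\bigl(d(U,\tilde U)+\mathcal S(V,\tilde V)\bigr)$; once additional $S$-factors from the nonlinear summands are retained, these upgrade to order $\mathcal L^2\bigl(d(U,\tilde U)+\mathcal S(V,\tilde V)\bigr)$. The only non-quadratic piece stems from the four pure $\Delta c_{ab}\,I_n$ summands: their Frobenius pairing satisfies $|\tr(\Delta G_{ij}^\dg)|\le \sqrt n\,\|\Delta G_{ij}\|_\tF$, yielding the $4\sqrt n\,\mathcal S(V,\tilde V)$ contribution exactly as in the estimate of $\mathcal I_{15}$.

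Dividing by $\|\Delta G_{ij}\|_\tF$ and maximising over $i, j$ produces the stated inequality for $d(U,\tilde U)$; interchanging the roles $U\leftrightarrow V$ and $n\leftrightarrow m$ gives the inequality for $d(V,\tilde V)$. The main obstacle I expect is the careful bookkeeping needed to extract the coefficient $-2m$ of the dissipative term with the correct sign, since the four interaction summands in \eqref{AA-2-1} contribute unequal-looking traces that collapse only after applying the cyclic property of the trace together with \eqref{AA-7} symmetrically, in the manner of the collapse $\Re\tr(S_{ik}S_{ij}S_{ji}) = \tfrac12\|S_{ki}S_{ij}\|_\tF^2$ in Appendix \ref{sec:app.A}. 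Matching the precise numerical coefficients then reduces to organising the splitting $c_{ab} = m + (c_{ab}-m)$ into a fixed collection of bilinear blocks and applying the uniform bounds $\|\tilde G_{ab}\|_\tF \le 1$, $\|S_{ab}\|_\tF \le \mathcal L$ and $|c_{ab}-m| \le \mathcal L$.
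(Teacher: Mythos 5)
Your proposal is essentially the paper's own proof in Appendix \ref{sec:app.B}: subtract the Gram-matrix ODEs of the two solutions, telescope into $\Delta G$- and $\Delta c$-blocks, split $c_{ab}=m+(c_{ab}-m)$ so that the $m$-part produces the dissipative term, kill the free-flow contribution by the Hermitian-trace argument, and bound the four pure $\Delta c_{ab}\,I_n$ summands via $|\tr(\Delta G_{ij}^\dg)|\le \sqrt n\,\|\Delta G_{ij}\|_\tF$; this is exactly the paper's decomposition into $\mathcal I_{41},\dots,\mathcal I_{45}$, since $S_{ij}-\tilde S_{ij}=-\Delta G_{ij}$. Two small inaccuracies, neither fatal: first, for a unitary matrix the Frobenius norm is $\sqrt n$, so $\|\tilde G_{ab}\|_\tF=\sqrt n$, not $\le 1$ (only the operator norm is $1$); this bound is in fact never needed once you write $\tilde G_{ab}=I_n-\tilde S_{ab}$ and use $\|\tilde S_{ab}\|_\tF\le \mathcal L$ together with the trace bound on the $I_n$-parts, which is what you do elsewhere in the outline, but as stated it would throw off coefficients by a factor of $\sqrt n$ if actually invoked. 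Second, the identity \eqref{AA-7} and the collapse $\Re\tr(S_{ik}S_{ij}S_{ji})=\tfrac12\|S_{ki}S_{ij}\|_\tF^2$ are not needed in this lemma: that sharp quartic identity is what makes $\mathcal I_{11}$ work in Lemma \ref{LA.1}, whereas here the corresponding difference term ($\mathcal I_{41}$ in the paper) is handled by crude telescoping, $\|\mathcal I_{41}\|_\tF\le 4\mathcal L\,d(U,\tilde U)$, and the coefficient $-2m\kp$ of the dissipative term falls out of the $G=I_n-S$ substitution and cancellation of constants alone, with no trace gymnastics.
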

\begin{proof} First, we recall \eqref{AA-4}:
\begin{align*} 
\frac{\d}{\dt} S_{ij} & = -2 m \kp S_{ij} + \frac{m\kp}{N} \sum_{k=1}^N S_{ik} S_{ij} + S_{ij}S_{kj}   + \mi (H_i - H_j) + \mi ( S_{ij} H_j - H_i S_{ij} ) \\
& \hspace{0.5cm}+ \frac\kp N \sum_{k=1}^N (c_{ik} - m)S_{kj} - (c_{ki}-m)S_{ik} - (c_{ik} - m )S_{ij} + (c_{ki}-m)S_{ik}S_{ij} \\
&\hspace{0.5cm}+\frac\kp N \sum_{k=1}^N (c_{kj} - m )S_{ik} - (c_{jk}-m)S_{kj} - (c_{jk}-m)S_{ij} + (c_{jk}-m)S_{ij}S_{kj} \\
&\hspace{0.5cm}+\frac\kp N \sum_{k=1}^N (c_{ki} - c_{ik}  +  c_{jk} - c_{kj})I_n.
\end{align*}
Thus, $S_{ij} - \tilde S_{ij}$ satisfies 
\begin{align} \label{BB-2}
\begin{aligned}
&\frac{\d}{\dt} (S_{ij}- \tilde S_{ij} ) =  -2m\kp (S_{ij}  -\tilde S_{ij}) + \frac{m\kp}{N}\sum_{k=1}^N  \underbrace{(S_{ik} S_{ij} + S_{ij} S_{kj} -\tilde S_{ik} \tilde S_{ij} - \tilde S_{ij} \tilde S_{kj})}_{=:\mathcal I_{41}} \\
&\hspace{1cm} + \underbrace{\mi ((S_{ij} - \tilde S_{ij})H_j - H_i (S_{ij} - \tilde S_{ij}))}_{=:\mathcal I_{42}}  + \frac\kp N \sum_{k=1}^N ( \mathcal I_{43} - \tilde {\mathcal I}_{43}) + \frac\kp N \sum_{k=1}^N ( \mathcal I_{44} - \tilde {\mathcal I}_{44}) \\
&\hspace{1cm} + \frac\kp N \sum_{k=1}^N \underbrace{((c_{ki} - c_{ik} + c_{jk} - c_{kj} )- (\tilde c_{ki} -\tilde  c_{ik} + \tilde c_{jk} -\tilde  c_{kj} )) I_n}_{=:\mathcal I_{45}},
\end{aligned}
\end{align} 
where $\mathcal I_{43}$ and $\mathcal I_{44}$ are defined as 
\begin{align*}
&\mathcal I_{43} := (c_{ik} - m)S_{kj} - (c_{ki}-m)S_{ik} - (c_{ik} - m )S_{ij} + (c_{ki}-m)S_{ik}S_{ij}, \\
&\mathcal I_{44} := (c_{kj} - m )S_{ik} - (c_{jk}-m)S_{kj} - (c_{jk}-m)S_{ij} + (c_{jk}-m)S_{ij}S_{kj}.
\end{align*}
Below, we provide the estimates for $\mathcal I_{4k},~k=1,\cdots,5$, respectively. \newline

\noindent $\bullet$ (Estimate of $\mathcal I_{41}$): Note that 
\begin{align*}
\mathcal I_{41} &= S_{ik} S_{ij} + S_{ij} S_{kj} -\tilde S_{ik} \tilde S_{ij} - \tilde S_{ij} \tilde S_{kj}  \\
& = S_{ik}(S_{ij} - \tilde S_{ij}) + (S_{ik} - \tilde S_{ik} ) \tilde S_{ij} + S_{ij} ( S_{kj} - \tilde S_{kj}) + (S_{ij} - \tilde S_{ij} ) \tilde S_{kj} .
\end{align*}
This yields
\begin{equation} \label{BB-5}
\|\mathcal I_{41}\|_\tF \leq 4\mathcal L d(U,\tilde U).
\end{equation}

\vspace{0.2cm}

\noindent $\bullet$ (Estimate of $\mathcal I_{42}$): Note that for a skew-Hermitian matrix $\Omega$ and a matrix $A$,  one has
\begin{align*}
\textup{tr} (\Omega A A^\dg ) = 0.
\end{align*}
This implies
\begin{align*}
\textup{Re} \textup{tr} ( \mathcal I_{42}( S_{ij} - \tilde S_{ij})^\dg ) =  \textup{Re} \textup{tr} ( \mi H_j ( S_{ij} - \tilde S_{ij}) ( S_{ij} - \tilde S_{ij})^\dg ) =0.
\end{align*}

\vspace{0.2cm}

\noindent $\bullet$ (Estimate of $\mathcal I_{43}$): We observe
\begin{align*}
\begin{aligned}
&\| (c_{ik} -m )S_{kj} - (\tilde c_{ik} - m) \tilde S_{kj} \|_\tF \\
& \hspace{1cm} = \|(c_{ik} - m) (S_{kj} - \tilde S_{kj}) + (c_{ik} - \tilde c_{ik}) \tilde S_{kj} \|_\tF \\
& \hspace{1cm} \leq \mathcal L d(U,\tilde U) + \mathcal L \mathcal S(V,\tilde V) = \mathcal L(d(U,\tilde U) + \mathcal S(V,\tilde V)).
\end{aligned}
\end{align*}
Moreover, we use \eqref{BB-5} 
\begin{align*}
& \|(c_{ki} - m )S_{ik} S_{ij} - (\tilde c_{ki} - m )\tilde S_{ik}\tilde S_{ij} \|_\tF \\
& \hspace{1cm}  = \|(c_{ki} - m) (S_{ik} S_{ij}- \tilde S_{ik} \tilde S_{ij}) + (c_{ik} - \tilde c_{ik}) \tilde S_{ik} \tilde S_{ij} \|_\tF \\
& \hspace{1cm} \leq \mathcal L \cdot 4\mathcal L d(U,\tilde U) + \mathcal S(V,\tilde V) \mathcal L^2 = \mathcal L^2 ( 4 d(U,\tilde U) + \mathcal S(V,\tilde V))
\end{align*}
to obtain
\begin{align*}
\|\mathcal I_{43} - \tilde {\mathcal I}_{43}\|_\tF \leq  3\mathcal L(d(U,\tilde U) + \mathcal S(V,\tilde V) ) + \mathcal L^2 ( 4 d(U,\tilde U) + \mathcal S(V,\tilde V)). 
\end{align*}

\vspace{0.2cm}

\noindent $\bullet$ (Estimate of $\mathcal I_{44}$): Similar to $\mathcal I_{43}$, one finds
\begin{align*}
\|\mathcal I_{44} - \tilde {\mathcal I}_{44}\|_\tF \leq  3\mathcal L(d(U,\tilde U) + \mathcal S(V,\tilde V) ) + \mathcal L^2 ( 4 d(U,\tilde U) + \mathcal S(V,\tilde V)). 
\end{align*}

\vspace{0.2cm}

\noindent $\bullet$ (Estimate of $\mathcal I_{45}$): We directly find
\begin{align*}
\|\mathcal I_{45} \|_\tF \leq 4\sqrt n \mathcal S(V,\tilde V).
\end{align*}
In \eqref{BB-2}, we combine all the estimates to obtain 
\begin{align*}
\frac{\d}{\dt} d(U,\tilde U) &\leq -2m\kp d(U,\tilde U) + 4m\kp \mathcal L d(U,\tilde U) + 6\mathcal L(d(U,\tilde U) + \mathcal S(V,\tilde V) ) \\
&\hspace{0.5cm}+ 2\mathcal L^2 (4d(U,\tilde U) + \mathcal S(V,\tilde V)) + 4\sqrt n \mathcal S(V,\tilde V).
\end{align*}
By exchanging the roles of $\mathcal U$ and $\mathcal V$, we find the differential inequality for $d(V,\tilde V)$:
\begin{align*}
\frac{\d}{\dt} d(V,\tilde V) &\leq -2n\kp d(V,\tilde V) + 4n\kp \mathcal L d(V,\tilde V) + 6\mathcal L(d(V,\tilde V) + \mathcal S(U,\tilde U) ) \\
&\hspace{0.5cm}+ 2\mathcal L^2 (4d(V,\tilde V) + \mathcal S(U,\tilde U)) + 4\sqrt m \mathcal S(U,\tilde U).
\end{align*}
\end{proof}
Below, we derive differential inequalities for $\mathcal S(U,\tilde U)$ and $\mathcal S(V,\tilde V)$. 
\begin{lemma} \label{LB.2} 
Let $\{(U_i,V_i)\}$ be  a solution to \eqref{Z-0}. Then, $\mathcal S(U,\tilde U)$ and $\mathcal S(V,\tilde V)$ satisfy
\begin{align} \label{BB-14} 
\begin{aligned}
\frac{\d}{\dt} \mathcal S(U,\tilde U) &\leq -2m\kp \mathcal S(U,\tilde U) + 4\mathcal L d(U,\tilde U) + \mathcal D( \mathcal H)d (U,\tilde U) + 6\mathcal L(\mathcal S(U,\tilde U) + \mathcal S(V,\tilde V)) \\
&\hspace{0.5cm} + 8\mathcal L^2 d(U,\tilde U) + 2\mathcal L^2 \mathcal S(V,\tilde V) + 4\sqrt n \mathcal S(V,\tilde V). \\
\frac{\d}{\dt} \mathcal S(V,\tilde V) &\leq -2n\kp \mathcal S(V,\tilde V) + 4\mathcal L d(V,\tilde V) + \mathcal D( \mathcal H)d (V,\tilde V) + 6\mathcal L(\mathcal S(V,\tilde V) + \mathcal S(U,\tilde U)) \\
&\hspace{0.5cm} + 8\mathcal L^2 d(V,\tilde V) + 2\mathcal L^2 \mathcal S(U,\tilde U) + 4\sqrt m \mathcal S(U,\tilde U).
\end{aligned}
\end{align}
\end{lemma}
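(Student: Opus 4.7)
The plan is to mirror the passage from Lemma \ref{LA.1} to Lemma \ref{LA.2}, but now applied to the difference of two solutions. I would start from the matrix equation \eqref{BB-2} for $S_{ij}-\tilde S_{ij}$ that was already derived in the proof of Lemma \ref{LB.1}, take its trace, and use the identity $\tr S_{ij} = n - \overline{d_{ij}}$ so that (after taking real parts) the left-hand side becomes the scalar quantity whose maximum controls $\mathcal S(U,\tilde U)$. The linear dissipative coefficient $-2m\kp$ is inherited verbatim from \eqref{BB-2}, and the five remaining pieces $\mathcal I_{41},\ldots,\mathcal I_{45}$ would be estimated one-by-one.

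For the cubic piece $\mathcal I_{41}$, the Frobenius bound $\|\mathcal I_{41}\|_\tF \leq 4\mathcal L\, d(U,\tilde U)$ is already available from the proof of Lemma \ref{LB.1}, and combining with $|\tr A|\leq \sqrt n\,\|A\|_\tF$ and the $m\kp$ prefactor yields the $4\mathcal L\, d(U,\tilde U)$ contribution in \eqref{BB-14}. The mixed quadratic pieces $\mathcal I_{43}$ and $\mathcal I_{44}$ split via the same ``add-and-subtract'' trick used in Lemma \ref{LB.1} into factors weighted by $(c_{ik}-\tilde c_{ik})$ (controlled by $\mathcal S(V,\tilde V)$), by $(S_{ij}-\tilde S_{ij})$ (controlled by $d(U,\tilde U)$ or, after trace, by $\mathcal S(U,\tilde U)$), and bounded residual matrices; these produce the $6\mathcal L(\mathcal S(U,\tilde U)+\mathcal S(V,\tilde V))$, $8\mathcal L^2 d(U,\tilde U)$, and $2\mathcal L^2\mathcal S(V,\tilde V)$ contributions. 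The scalar multiple-of-identity piece $\mathcal I_{45}$ contributes $4\sqrt n\,\mathcal S(V,\tilde V)$ after using $|\tr I_n| = n \leq \sqrt n\,\|I_n\|_\tF$. The differential inequality for $\mathcal S(V,\tilde V)$ then follows by the symmetric interchange $(U,H,n)\leftrightarrow(V,G,m)$ throughout.

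The main obstacle, and the only structurally new feature compared with Lemma \ref{LA.2}, is the Hamiltonian commutator $\mathcal I_{42} = \mi\big((S_{ij}-\tilde S_{ij})H_j - H_i(S_{ij}-\tilde S_{ij})\big)$. The crucial observation is that the isolated inhomogeneous drift $\mi(H_i - H_j)$ appearing in $\dot S_{ij}$ cancels identically upon subtracting $\dot{\tilde S}_{ij}$, since the two solutions share the same Hamiltonian ensemble; what survives is only the commutator with $S_{ij}-\tilde S_{ij}$. Taking trace and using cyclicity collapses it to
\begin{equation*}
\tr\mathcal I_{42} \;=\; \mi\,\tr\!\big((S_{ij}-\tilde S_{ij})(H_j - H_i)\big),
\end{equation*}
which by Frobenius Cauchy--Schwarz and $\|H_j-H_i\|_\infty \leq \mathcal D(\mathcal H)$ is bounded by a constant multiple of $\mathcal D(\mathcal H)\, d(U,\tilde U)$, producing the new $\mathcal D(\mathcal H)\, d(U,\tilde U)$ term in \eqref{BB-14}. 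The care required here is precisely in tracking the operator-versus-Frobenius norm bookkeeping so that the prefactor absorbs cleanly into $\mathcal D(\mathcal H)$ as defined via $\|\cdot\|_\infty$, consistent with its later appearance in \eqref{Z-55}; everywhere else the estimates are identical in structure to the homogeneous case.
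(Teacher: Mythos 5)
Your overall architecture is the same as the paper's. Taking the trace of the matrix identity \eqref{BB-2} is exactly the paper's scalar relation \eqref{BB-17}, and your handling of the one structurally new term is the paper's as well: the inhomogeneous drift $\mi(H_i-H_j)$ cancels because both solutions share the same Hamiltonians, and the surviving commutator collapses under the trace, by cyclicity, to $\mi\,\tr\big((S_{ij}-\tilde S_{ij})(H_j-H_i)\big)$ (the paper's $\mathcal I_{52}$), which is bounded by $\mathcal D(\mathcal H)\,d(U,\tilde U)$. That part is correct.

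The gap is in how you estimate the remaining trace terms. You propose to recycle the Frobenius bounds of Lemma \ref{LB.1} through the crude inequality $|\tr A|\le\sqrt n\,\|A\|_\tF$. Taken literally, that does not yield \eqref{BB-14}: for the cubic piece it gives $|\tr\mathcal I_{41}|\le 4\sqrt n\,\mathcal L\,d(U,\tilde U)$, and with the $\frac{m\kp}{N}\sum_k$ prefactor the contribution is of size $m\kp\sqrt n\,\mathcal L\,d(U,\tilde U)$, not the $4\mathcal L\,d(U,\tilde U)$ you claim is "yielded"; the same spurious $\sqrt n$ infects the $\mathcal I_{43},\mathcal I_{44}$-type and quadratic terms. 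The paper never applies the trace-versus-Frobenius bound to a whole expression: each trace term is split by add-and-subtract into traces of products of exactly two matrices, one factor of which is small, and the Frobenius Cauchy--Schwarz inequality $|\tr(AB)|=|\langle A^\dg,B\rangle_\tF|\le\|A\|_\tF\,\|B\|_\tF$ is applied to each product, e.g.
\begin{equation*}
|\tr\big(S_{ik}(S_{ij}-\tilde S_{ij})\big)|\le\|S_{ik}\|_\tF\,\|S_{ij}-\tilde S_{ij}\|_\tF\le\mathcal L\,d(U,\tilde U),
\end{equation*}
so that no dimension factor enters the $\mathcal L$- and $\mathcal L^2$-weighted coefficients; the only place a dimension factor legitimately survives is the coefficient of the $I_n$-term, which is where $4\sqrt n\,\mathcal S(V,\tilde V)$ comes from. (On that last term your written justification, $|\tr I_n|=n\le\sqrt n\,\|I_n\|_\tF$, is just the identity $n=n$ and produces a factor $n$ rather than $\sqrt n$; the paper's own bookkeeping there is equally cavalier, so I would not press it.) A further small point: "taking real parts" of $\tr(S_{ij}-\tilde S_{ij})$ is not the right reduction, since $\mathcal S(U,\tilde U)$ is the modulus of the complex number $d_{ij}-\tilde d_{ij}$; one should instead use $\frac{\d}{\dt}|z|\le|\dot z|$. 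The repair is mechanical---replace the crude trace bound by the per-product Cauchy--Schwarz estimates above---but as proposed your estimates prove a strictly weaker inequality than the stated \eqref{BB-14}.
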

\begin{proof}
We recall \eqref{AA-15}: 
\begin{align*} \label{BB-15}
\begin{aligned}
\frac{\d}{\dt} (n- d_{ij}) &= -2m\kp ( n-d_{ij}) + \frac{m\kp}{N}\sum_{k=1}^N \textup{tr} (S_{ik}S_{ij} + S_{ij} S_{kj}) \\
&\hspace{0.5cm} + \mi \textup{tr}(H_i - H_j) + \mi \textup{tr}(S_{ij} H_j - H_i S_{ij}) \\
&\hspace{0.5cm}+ \frac\kp N \sum_{k=1}^N (c_{ik} - m)(n-d_{kj}) - (c_{ki}-m)(n-d_{ik}) - (c_{ik}-m)(n-d_{ij}) \\
&\hspace{0.5cm}+\frac\kp N \sum_{k=1}^N (c_{kj}-m)(n-d_{ik}) - (c_{jk}-m)(n-d_{kj}) - (c_{jk}-m)(n-d_{ij}) \\
&\hspace{0.5cm} +\frac\kp N \sum_{k=1}^N (c_{ki}-m)\textup{tr} (S_{ik}S_{ij}) + (c_{jk}-m)\textup{tr}( S_{ij}S_{kj}) \\
&\hspace{0.5cm} + \frac\kp N \sum_{k=1}^N[  (c_{ki} - c_{ik}) + (c_{jk} - c_{kj})]\sqrt n. 
\end{aligned}
\end{align*}
We denote
\begin{equation*}
p_{ij}:= n-d_{ij},\quad q_{ij} := m-c_{ij}.
\end{equation*}
Then, $p_{ij}- \tilde p_{ij}$ satisfies
\begin{align}  \label{BB-17}
\begin{aligned}
&\frac{\d}{\dt} (p_{ij} - \tilde p_{ij}) = -2m\kp (p_{ij} - \tilde p_{ij})+ \frac{m\kp}{N} \sum_{k=1}^N \underbrace{\textup{tr}(S_{ik} S_{ij} + S_{ij}S_{kj} -\tilde S_{ik} \tilde S_{ij} - \tilde S_{ij} \tilde S_{kj})}_{=:\mathcal I_{51}} \\
&\hspace{0.2cm} +\underbrace{ \mi \textup{tr} ((S_{ij} - \tilde S_{ij})(H_j - H_i) ) }_{=:\mathcal I_{52}} + \frac\kp N \sum_{k=1}^N ( \mathcal I_{53} - \tilde {\mathcal I}_{53} ) + \frac\kp N \sum_{k=1}^N ( \mathcal I_{54} - \tilde {\mathcal I}_{54} ) \\
& \hspace{0.2cm} + \frac\kp N \sum_{k=1}^N ( \mathcal I_{55} - \tilde {\mathcal I}_{55} )  + \frac\kp N \sum_{k=1}^N \underbrace{( p_{ik} - \tilde p_{ik} + p_{kj} - \tilde p_{kj}) + (\tilde p_{ki} - \tilde p_{ki}) + ( \tilde p_{jk} - p_{jk})}_{=:\mathcal I_{56}} \sqrt n . 
\end{aligned}
\end{align}
Below, we present the estimates for $\mathcal I_{5k},~k=1,\cdots,6$, separately. \newline

\noindent $\bullet$ (Estimate of $\mathcal I_{51}$): We recall \eqref{BB-5} to find
\begin{equation*}
|\mathcal I_{51}| \leq 4\mathcal L d(U,\tilde U). 
\end{equation*}
$\bullet$ (Estimate of $\mathcal I_{52}$): we set
\begin{equation*}
\mathcal D(\mathcal H) := \max_{1\leq i,j\leq N} \|H_i - H_j\|_\infty,
\end{equation*}
Then, one has
\begin{equation*}
|\mathcal I_{52}| \leq  \mathcal D(\mathcal H) d(U,\tilde U).
\end{equation*}
$\bullet$ (Estimate of $\mathcal I_{53}$): Note that 
\begin{align*}
&|(c_{ik} - m)(n-d_{kj}) - (\tilde c_{ik} -m)(n- \tilde d_{kj})| \\
& \hspace{1cm} \leq |p_{ik} - \tilde p_{ik}| |q_{ik}| + |\tilde p_{ik}| |q_{kj} - \tilde q_{kj}| \leq \mathcal L ( \mathcal S(U,\tilde U) + \mathcal S(V,\tilde V)).
\end{align*}
Thus, we find
\begin{align*}
\|\mathcal I_{53} - \tilde {\mathcal I}_{53} \|_\tF \leq 3 \mathcal L ( \mathcal S(U,\tilde U) + \mathcal S(V,\tilde V)). 
\end{align*}
$\bullet$ (Estimate of $\mathcal I_{54}$): We observe
\begin{align*}
\|\mathcal I_{54} - \tilde {\mathcal I}_{54} \|_\tF \leq  3 \mathcal L ( \mathcal S(U,\tilde U) + \mathcal S(V,\tilde V)). 
\end{align*}
$\bullet$ (Estimate of $\mathcal I_{55}$): Note that 
\begin{align*}
& q_{ki} \textup{tr} (S_{ik} S_{ij} - \tilde q_{ki} \textup{tr} (\tilde S_{ik} \tilde S_{ij}) \\
& \hspace{0.5cm} = q_{ij} \textup{tr} (S_{ik} S_{ij} - \tilde S_{ik} \tilde S_{ij}) + (q_{ij} - \tilde q_{ij}) \textup{tr}(\tilde S_{ik} \tilde S_{ij}) \leq 4\mathcal L^2 d(U,\tilde U) + \mathcal L^2 \mathcal S(V,\tilde V) . 
\end{align*}
Thus, we have
\begin{align*}
\|\mathcal I_{55} - \tilde {\mathcal I}_{55} \|_\tF \leq 8\mathcal L^2 d(U,\tilde U) +2 \mathcal L^2 \mathcal S(V,\tilde V). 
\end{align*}
$\bullet$ (Estimate of $\mathcal I_{56}$): We directly find 
\begin{align*}
|\mathcal I_{56} | \leq 4\sqrt n \mathcal S(V,\tilde V).
\end{align*}
In \eqref{BB-17}, we collect all the estimates to find
\begin{align*}
\frac{\d}{\dt} \mathcal S(U,\tilde U) &\leq -2m\kp \mathcal S(U,\tilde U) + 4\mathcal L d(U,\tilde U) + \mathcal D( \mathcal H)d (U,\tilde U) + 6\mathcal L(\mathcal S(U,\tilde U) + \mathcal S(V,\tilde V)) \\
&\hspace{0.5cm} + 8\mathcal L^2 d(U,\tilde U) + 2\mathcal L^2 \mathcal S(V,\tilde V) + 4\sqrt n \mathcal S(V,\tilde V).
\end{align*}
By exchanging the roles of $\mathcal U$ and $\mathcal V$, we can derive a differential inequality for $\mathcal S(V,\tilde V)$: 
\begin{align*}
\frac{\d}{\dt} \mathcal S(V,\tilde V) &\leq -2n\kp \mathcal S(V,\tilde V) + 4\mathcal L d(V,\tilde V) + \mathcal D( \mathcal H)d (V,\tilde V) + 6\mathcal L(\mathcal S(V,\tilde V) + \mathcal S(U,\tilde U)) \\
&\hspace{0.5cm} + 8\mathcal L^2 d(V,\tilde V) + 2\mathcal L^2 \mathcal S(U,\tilde U) + 4\sqrt m \mathcal S(U,\tilde U).
\end{align*}
\end{proof}

\vspace{0.5cm}

Now, we are ready to provide a proof of Lemma \ref{L4.2} with the help of Lemma \ref{LB.1} and Lemma \ref{LB.2}. \newline

\noindent {\it  Proof of Lemma \ref{L4.2}:} ~recall the inequalities in \eqref{BB-0} and \eqref{BB-14}:
\begin{align} \label{BB-25}
\begin{aligned}
\frac{\d}{\dt} d(U,\tilde U) &\leq -2m\kp d(U,\tilde U) + 4m\kp \mathcal L d(U,\tilde U) + 6\mathcal L(d(U,\tilde U) + \mathcal S(V,\tilde V) ) \\
&\hspace{0.5cm}+ 2\mathcal L^2 (4d(U,\tilde U) + \mathcal S(V,\tilde V)) + 4\sqrt n \mathcal S(V,\tilde V), \\
\frac{\d}{\dt} d(V,\tilde V) &\leq -2n\kp d(V,\tilde V) + 4n\kp \mathcal L d(V,\tilde V) + 6\mathcal L(d(V,\tilde V) + \mathcal S(U,\tilde U) ) \\
&\hspace{0.5cm}+ 2\mathcal L^2 (4d(V,\tilde V) + \mathcal S(U,\tilde U)) + 4\sqrt m \mathcal S(U,\tilde U), \\
\frac{\d}{\dt} \mathcal S(U,\tilde U) &\leq -2m\kp \mathcal S(U,\tilde U) + 4\mathcal L d(U,\tilde U) + \mathcal D( \mathcal H)d (U,\tilde U) + 6\mathcal L(\mathcal S(U,\tilde U) + \mathcal S(V,\tilde V)) \\
&\hspace{0.5cm} + 8\mathcal L^2 d(U,\tilde U) + 2\mathcal L^2 \mathcal S(V,\tilde V) + 4\sqrt n \mathcal S(V,\tilde V), \\
\frac{\d}{\dt} \mathcal S(V,\tilde V) &\leq -2n\kp \mathcal S(V,\tilde V) + 4\mathcal L d(V,\tilde V) + \mathcal D( \mathcal G)d (V,\tilde V) + 6\mathcal L(\mathcal S(V,\tilde V) + \mathcal S(U,\tilde U)) \\
&\hspace{0.5cm} + 8\mathcal L^2 d(V,\tilde V) + 2\mathcal L^2 \mathcal S(U,\tilde U) + 4\sqrt m \mathcal S(U,\tilde U).
\end{aligned}
\end{align}
We add all the inequalities in \eqref{BB-25} to obtain the desired inequality for $\mathcal F$:
\begin{align*}
\frac{\d \mathcal F}{dt} &\leq -\kp \left( 2   m - 8  \sqrt n - \frac{\mathcal D(\mathcal H)}{\kp} \right) \mathcal F + 4(n+1)\kp \mathcal L (d(U,\tilde U) + d(V,\tilde V))  + 6\kp \mathcal L \mathcal F  \\
& \hspace{0.5cm}+ 12\kp \mathcal L ( \mathcal S(U,\tilde U) + \mathcal S(V,\tilde V)) + 16\kp \mathcal L^2( d(U,\tilde U) +  d(V,\tilde V) ) + 4\kp \mathcal L^2( \mathcal S(U,\tilde U) + \mathcal S(V,\tilde V))  \\
&\leq -\kp \left( 2   m - 8  \sqrt n - \frac{ \max\{ \mathcal D(\mathcal H), \mathcal D(\mathcal G)\} }{\kp} \right) \mathcal F   + \kp ( 4n + 22) \mathcal L \mathcal F + 20\kp \mathcal L^2 \mathcal F.
\end{align*}

\section{Emergent dynamics of the DSOM model} \label{sec:app.C}
\setcounter{equation}{0}
In this section, we replace the product of two unitary groups $\Un \times \Um$ by the product of two special orthogonal groups $\SOn \times \SOm$. Since all elements of special orthogonal matrices are real-valued, one has for any (real-valued) square matrices $A$ and $B$, 
\begin{equation} \label{Y-0} 
\langle A,B\rangle_\tF = \tr(AB^\top) = \tr (BA^\top) = \langle B,A\rangle_\tF.
\end{equation}
Thus, model \eqref{C-10-1} reduces to 
\begin{equation} \label {Y-1}
\begin{cases}
\displaystyle \dot U_j = B_j U_j + \frac\kp N \sum_{k=1}^N \langle V_j,V_k\rangle_\tF ( U_k - U_jU_k^\top  U_j), \quad t > 0, \\
\displaystyle \dot V_j = C_j V_j + \frac\kp N \sum_{k=1}^N \langle U_j,U_k\rangle_\tF (V_k - V_j V_k^\top V_j), \\
\displaystyle (U_j, V_j)(0) = (U_j^0,V_j^0) \in \SOn \times \SOm,\quad 1\leq j \leq N,
\end{cases}
\end{equation}
where $B_j \in \bbr^{n\times n \times n \times n } $ and $C_j \in \bbr^{m \times m \times m \times m }$ are given rank-4 tensors satisfying skew-symmetric properties:
\begin{equation*}
[B_j]_{\alpha_1\beta_1\alpha_2\beta_2} = -[ B_j]_{\alpha_2\beta_2\alpha_1\beta_1}, \quad [C_j]_{\gamma_1\delta_1\gamma_2\delta_2}=-[{C}_j]_{\gamma_2\delta_2\gamma_1\delta_1}. 
\end{equation*}

\begin{remark}
Due to the symmetric property of the Frobenius inner product on  real-valued matrices, one has
\begin{equation} \label{Y-2}
d_{ij} = \langle U_i,U_j\rangle_\tF  = \langle U_j,U_i\rangle_\tF=d_{ji},\quad c_{ij}= \langle V_i,V_j\rangle_\tF = \langle V_j,V_i\rangle_\tF =c_{ji}. 
\end{equation}
Due to  the relation \eqref{Y-2}, for example, the last term $c_{ki} - c_{ik} + c_{jk} - c_{kj}$ in \eqref{AA-3} vanishes. Thus, it would be expected that the relation \eqref{Y-2} relaxes the condition such as the dimension condition $n\geq m > 4\sqrt n$ in \eqref{Z-10}\textup{(i)} (see also Remark \ref{RA.1}).  
\end{remark}

The contents of this appendix are exactly the same as those of Section \ref{sec:4}, and we only provide a sketch of the proof. In Section \ref{sec:5.1},  we are concerned with the identical (or homogeneous) Hamiltonian, and in Section \ref{sec:5.2}, the non-identical (or heterogeneous) Hamiltonian is considered to exhibit the phase-locked state.  

\subsection{Homogeneous ensemble} \label{sec:5.1}
In this subsection, thanks to the solution splitting property, we assume that
\begin{equation} \label{Y-3}
B_j \equiv O,\quad C_j \equiv O,\quad j=1,\cdots,N. 
\end{equation}
Below, we state the main theorem for an identical ensemble under which the complete aggregation occurs. For this, we recall diameters and aggregation functional:
\begin{align*}
\begin{aligned}
&\mathcal D(\mathcal U(t))  = \max_{1\leq i,j\leq N} \|U_i(t) - U_j(t)\|_\tF,\quad  \mathcal  S(\mathcal U(t)) = \max_{1\leq i,j\leq N} |n- d_{ij}(t)|, \\
&\mathcal D(\mathcal V(t))  = \max_{1\leq i,j\leq N} \|V_i(t) - V_j(t)\|_\tF, \quad \mathcal S(\mathcal V(t)) = \max_{1\leq i,j\leq N} |m- c_{ij}(t)|, \\
&\mathcal L = \mathcal D(\mathcal U) +  \mathcal D(\mathcal V) +  \mathcal S(\mathcal U) +  \mathcal S(\mathcal V).
\end{aligned}
\end{align*}
\begin{theorem} \label{TC.1} 
Suppose initial data satisfy 
\begin{equation} \label{Y-5} 
\mathcal L^0 < \alpha_{m,n}  =\frac{ -(12n+27) +\sqrt{(12n+27)^2 +24m(3n+4)}}{4(4n+3)},
\end{equation}
and let $\{(U_i, V_i)\}$ be a solution to \eqref{Y-1} with \eqref{Y-3}. Then, we have
\begin{equation*}
\lim_{t\to\infty} \mathcal L(t) =0.
\end{equation*}
\end{theorem}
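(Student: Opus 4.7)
The plan is to adapt the proof of Theorem \ref{T4.1} to the real-valued (special orthogonal) setting, exploiting the Frobenius inner-product symmetry \eqref{Y-2} to eliminate precisely the error terms that forced the dimensional restriction $m > 4\sqrt n$ in the unitary case.

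First, I would mimic the derivation of Lemma \ref{LA.1} by computing $\frac{\d}{\dt} S_{ij}$ with $S_{ij} = I_n - U_i U_j^\top$, starting from \eqref{Y-1} under \eqref{Y-3} and then multiplying by $S_{ij}^\top$ to control $\tfrac12 \tfrac{\d}{\dt} \|S_{ij}\|_\tF^2$. The crucial observation is that in the analogue of \eqref{AA-3}, the term $(c_{ki} - c_{ik} + c_{jk} - c_{kj}) I_n$ that produced the estimate $\mathcal I_{15}$ in \eqref{AA-9} vanishes identically by the symmetry $c_{ij} = c_{ji}$; similarly the analogous term in the evolution of $n - d_{ij}$ vanishes, killing $\mathcal I_{25}$ in \eqref{AA-17}. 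All other estimates $\mathcal I_{11}$--$\mathcal I_{14}$ and $\mathcal I_{21}$--$\mathcal I_{24}$ carry over verbatim after replacing $\dagger$ by $\top$, yielding the cleaner pair
\begin{align*}
\dot{\mathcal D}(\mathcal U) &\leq -2m\kp \mathcal D(\mathcal U) + m\kp \mathcal D(\mathcal U)^3 + 6\kp \mathcal S(\mathcal V) \mathcal D(\mathcal U) + 2\kp \mathcal S(\mathcal V) \mathcal D(\mathcal U)^2, \\
\dot{\mathcal S}(\mathcal U) &\leq -2m\kp \mathcal S(\mathcal U) + 2m\kp \mathcal D(\mathcal U)^2 + 6\kp \mathcal S(\mathcal U)\mathcal S(\mathcal V) + 2\kp \mathcal S(\mathcal V) \mathcal D(\mathcal U)^2,
\end{align*}
together with the analogous pair for $\mathcal D(\mathcal V)$ and $\mathcal S(\mathcal V)$ obtained by swapping $\mathcal U \leftrightarrow \mathcal V$ and $m \leftrightarrow n$.

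Next, I would add the four inequalities and, assuming without loss of generality $n \geq m$ so that the slowest dissipation rate $-2m\kp \mathcal L$ governs the linear part, re-use the Young-type bounds from the proof of Lemma \ref{L4.1} (in particular $\mathcal D(\mathcal U)^a \mathcal D(\mathcal V)^b \mathcal S(\mathcal U)^c \mathcal S(\mathcal V)^d \leq \mathcal L^{a+b+c+d}$ and $a^2 b \leq \tfrac23 a^3 + \tfrac13 b^3$) to collapse the coupled system to a scalar cubic differential inequality
\[
\dot{\mathcal L} \leq \kp \mathcal L \Bigl( p_2\, \mathcal L^2 + p_1\, \mathcal L - p_0 \Bigr),
\]
with positive coefficients $p_0, p_1, p_2$ depending only on $n, m$. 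Because the obstructive $-4\sqrt n$ correction no longer appears in $p_0$, the algebraic equation $p_2 s^2 + p_1 s - p_0 = 0$ admits a unique positive root for all admissible dimensions, with no restriction linking $m$ and $\sqrt n$; careful accounting of the combinatorial constants should identify this root with $\alpha_{m,n}$ as displayed in \eqref{Y-5}.

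Finally, since the hypothesis \eqref{Y-5} places $\mathcal L^0$ strictly below this positive root, the bracket $p_2 \mathcal L^2 + p_1 \mathcal L - p_0$ is negative at $t=0$ and remains so, which drives $\mathcal L(t)$ monotonically into a region where it decays at least exponentially. A standard barrier/comparison argument applied to the scalar inequality then gives $\lim_{t\to\infty}\mathcal L(t) = 0$. I expect the main obstacle to be bookkeeping: the four cascades of $\mathcal I_{ab}$-type estimates must be executed cleanly enough to recover the precise coefficients $p_0, p_1, p_2$ claimed in \eqref{Y-5}, since sub-optimal applications of Young's inequality would produce a valid but weaker threshold. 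Conceptually, however, the argument is a direct transcription of Theorem \ref{T4.1} once one observes that the symmetry of the real Frobenius product annihilates the single error term that previously demanded $m > 4\sqrt n$.
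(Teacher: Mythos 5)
Your proposal follows essentially the same route as the paper: its Appendix~\ref{sec:app.C} proof likewise reuses the estimates \eqref{AA-2} and \eqref{AA-12} with the symmetry \eqref{Y-0} annihilating the $\mathcal I_{15}$/$\mathcal I_{25}$ terms, sums the four resulting inequalities (under the standing convention $n\geq m$) to obtain $\dot{\mathcal L} \leq \kp \mathcal L f(\mathcal L)$ with $f(s) = \left(2n+\tfrac83\right)s^2 + (4n+9)s - 2m$, and concludes by identifying $\alpha_{m,n}$ with the unique positive root of $f$ and invoking the comparison/dynamical-systems argument. One cosmetic caveat: carrying out the bookkeeping with the paper's own coefficients gives the root $\bigl(-(12n+27)+\sqrt{(12n+27)^2+48m(3n+4)}\bigr)/\bigl(4(3n+4)\bigr)$, so the constants printed in \eqref{Y-5} appear to contain typos rather than a target your estimates must reproduce exactly.
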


\begin{proof}
By recalling the inequalities in \eqref{AA-2} with the relation \eqref{Y-0}, we see
\begin{align} \label{Y-6}
\begin{aligned}
&\frac{\d}{\dt} \mathcal D(\mathcal U) \leq -2m \kp \mathcal D(\mathcal U) + m\kp \mathcal D(\mathcal U)^3 + 6\kp \mathcal S(\mathcal V) \mathcal D(\mathcal U) + 2\kp \mathcal S(\mathcal V) \mathcal D(\mathcal U)^2  , \\
&\frac{\d}{\dt} \mathcal D(\mathcal V) \leq -2n \kp \mathcal D(\mathcal V) + n\kp \mathcal D(\mathcal V)^3 + 6\kp \mathcal S(\mathcal U) \mathcal D(\mathcal V) + 2\kp \mathcal S(\mathcal U) \mathcal D(\mathcal V)^2 .
\end{aligned}
\end{align} 
Similarly, we use \eqref{AA-12} to find
\begin{align} \label{Y-7}
\begin{aligned}
&\frac{\d}{\dt} \mathcal S(\mathcal U) \leq -2m\kp \mathcal S(\mathcal U) + 2m\kp \mathcal D( \mathcal U)^2 + 6\kp \mathcal S(\mathcal U) \mathcal S(\mathcal V) + 2\kp \mathcal S(\mathcal V) \mathcal D(\mathcal U)^2 , \\
&\frac{\d}{\dt} \mathcal S(\mathcal V) \leq -2n\kp \mathcal S(\mathcal V) + 2n\kp \mathcal D( \mathcal V)^2 + 6\kp \mathcal S(\mathcal U) \mathcal S(\mathcal V) + 2\kp \mathcal S(\mathcal U) \mathcal D(\mathcal V)^2 .
\end{aligned}
\end{align}
Now, we add \eqref{Y-6} and \eqref{Y-7} to obtain a differential inequality for $\mathcal L$: 
\begin{equation} \label{Y-8}
\frac\d\dt \mathcal L  \leq -2\kp m \mathcal L + \kp (4n+9)\mathcal L^2 + \kp \left( 2n +\frac83 \right) \mathcal L^3. 
\end{equation}
If we introduce an auxiliary polynomial:
\begin{equation} \label{Y-9}
f(s) := \left( 2n+\frac83\right) s^2 + (4n+9)s - 2m,
\end{equation}
then \eqref{Y-8} can be written as
\begin{equation*}
\frac\d\dt \mathcal L \leq \kp \mathcal Lf(\mathcal L).
\end{equation*}
Furthermore, we notice that $\alpha_{m,n}$ in \eqref{Y-5} becomes a unique positive root of the quadratic polynomial $f$ in \eqref{Y-9}.  Hence, the desired zero convergence of $\mathcal L$ directly follows from the dynamical systems theory. 
\end{proof}

\subsection{Heterogeneous ensemble} \label{sec:5.2}
In this subsection, we are concerned with the heterogeneous Hamiltonians where the natural frequency tensors are different in general. Similar to Section \ref{sec:4.2}, there exist two skew-symmetric matrices $\Omega_j \in \bbr^{n\times n}$ and $\Psi_j \in \bbr^{m\times m}$ such that
\begin{equation*}
B_j U_j = \Omega_j U_j,\quad C_j V_j = \Psi_j V_j,\quad 1\leq j \leq N.
\end{equation*}
Thus, our model reads as 
\begin{equation} \label {Y-10}
\begin{cases}
\displaystyle \dot U_j = \Omega_j U_j + \frac\kp N \sum_{k=1}^N \langle V_j,V_k\rangle_\tF ( U_k - U_jU_k^\top  U_j), \\
\displaystyle \dot V_j = \Psi_j V_j + \frac\kp N \sum_{k=1}^N \langle U_j,U_k\rangle_\tF (V_k - V_j V_k^\top V_j), \\
\displaystyle (U_j, V_j)(0) = (U_j^0,V_j^0) \in \SOn \times \SOm,\quad 1\leq j \leq N,
\end{cases}
\end{equation}
A crucial estimate used for the emergence of the phase-locked state is \eqref{Z-55} stated in Lemma \ref{L4.2} where the aggregation functional $\mathcal F$ tends to zero. Here, $\mathcal F$ measures the inter-distance between any two solution configurations $\{(U_i,V_i)\}$ and $\{ (\tilde U_i, \tilde V_i)\}$:
\begin{equation*}
\mathcal F(t) = d(U,\tilde U)(t) + d(V,\tilde V)(t) + \mathcal S(U,\tilde U)(t) + \mathcal S(V,\tilde V)(t),
\end{equation*}
where the diameters are defined in \eqref{Z-50-2} and \eqref{Z-50-6}. We also denote the diameters for natural frequencies:
\begin{equation*}
\mathcal D(\Omega) :=\max_{1\leq i,j\leq N}\| \Omega_i - \Omega_j\|_\infty, \quad \mathcal D(\Psi) :=\max_{1\leq i,j\leq N}\| \Psi_i - \Psi_j\|_\infty.
\end{equation*}
Without loss of generality, we may assume
\begin{equation*}
\mathcal D(\Omega)  \geq \mathcal D(\Psi).
\end{equation*}
In the following lemma, we provide a temporal evolution of $\mathcal F$. 

\begin{lemma} \label{LC.1} 
Let  $\{(U_i,V_i)\}$ and $\{ (\tilde U_i, \tilde V_i)\}$ be any two solutions to \eqref{Y-10}. Then, the aggregation functionals  $\mathcal L$ and $\mathcal F$ satisfy 
\begin{align} \label{Y-15} 
\begin{aligned}
&\frac{\d \mathcal L}{dt} \leq   2(1+3\sqrt n ) \mathcal D(\Omega) -2\kp m \mathcal L +\kp(4n+9)\mathcal L^2 + \kp \left( 2n + \frac83\right) \mathcal L^3, \quad t > 0, \\
&\frac{\d \mathcal F}{dt} \leq  -\kp \left( 2m - \frac{\max \{ \mathcal D(\Omega),\mathcal D(\Psi)\}}{\kp} \right) \mathcal F + \kp(4n+22)\mathcal L \mathcal F + 20\kp \mathcal L^2\mathcal F.
\end{aligned}
\end{align}
\end{lemma}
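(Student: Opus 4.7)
The plan is to adapt the arguments of Lemma \ref{L4.1} and Lemma \ref{L4.2} (whose full proofs appear in Appendices \ref{sec:app.A} and \ref{sec:app.B}) to the real-valued, special-orthogonal setting, exploiting two simplifications: first, the Frobenius inner product on real matrices is symmetric, so the anti-symmetric combinations $c_{ki}-c_{ik}+c_{jk}-c_{kj}$ that produced the $4\kappa\sqrt{n}\,\mathcal{S}(\mathcal{V})$ and analogous terms in the unitary estimates (see $\mathcal{I}_{15}$ in \eqref{AA-9} and $\mathcal{I}_{25}$ in \eqref{AA-17}) now vanish identically; second, the natural frequency matrices $\Omega_j,\Psi_j$ are real skew-symmetric, so several Hamiltonian-induced terms collapse after taking Frobenius inner product or trace.

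For the first inequality, I will first derive, along the lines of \eqref{AA-4}, the matrix ODE
\begin{equation*}
\dot S_{ij} = -(\Omega_i-\Omega_j) + \Omega_i S_{ij} - S_{ij}\Omega_j - 2m\kappa S_{ij} + (\text{coupling terms}),
\end{equation*}
and its $V$-analogue. Taking $\tr(\dot S_{ij}S_{ij}^\top)$, the cross terms $\tr(\Omega_i S_{ij}S_{ij}^\top)$ and $\tr(S_{ij}\Omega_j S_{ij}^\top)=\tr(\Omega_j S_{ij}^\top S_{ij})$ vanish because $\Omega_i,\Omega_j$ are skew and $S_{ij}S_{ij}^\top,\,S_{ij}^\top S_{ij}$ are symmetric, leaving only $-\tr((\Omega_i-\Omega_j)S_{ij}^\top)$, which after Cauchy--Schwarz is bounded by $\sqrt{n}\,\mathcal{D}(\Omega)\,\mathcal{D}(\mathcal{U})$, yielding a $\sqrt{n}\,\mathcal{D}(\Omega)$ contribution to $\frac{d}{dt}\mathcal{D}(\mathcal{U})$ and no term of the form $4\kappa\sqrt{n}\,\mathcal{S}(\mathcal{V})$. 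For $\frac{d}{dt}\mathcal{S}(\mathcal{U})$, taking pure trace $\tr(\dot S_{ij})$ and using $\tr(\Omega_i)=\tr(\Omega_j)=0$ reduces the Hamiltonian contribution to $\tr((\Omega_i-\Omega_j)S_{ij})$, again bounded by $\sqrt{n}\,\mathcal{D}(\Omega)\,\mathcal{D}(\mathcal{U})$. Swapping the roles of $\mathcal{U}$ and $\mathcal{V}$ gives analogous bounds with $\sqrt{m}\,\mathcal{D}(\Psi)\le\sqrt{n}\,\mathcal{D}(\Omega)$ under the WLOG assumption $\mathcal{D}(\Omega)\ge\mathcal{D}(\Psi)$. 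Adding the four differential inequalities, the absence of the $4\kappa\sqrt{n}$ contributions (cf.\ Remark \ref{RA.1}) gives the improved linear dissipation $-2\kappa m\,\mathcal{L}$, the polynomial part is inherited verbatim from the proof of Lemma \ref{L4.1}, and all Hamiltonian-induced contributions assemble to the constant $2(1+3\sqrt{n})\,\mathcal{D}(\Omega)$.

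For the second inequality, the starting point is the difference equation for $S_{ij}-\tilde S_{ij}$ as in \eqref{BB-2}, now with the additional Hamiltonian term
\begin{equation*}
\mathcal{I}_{\Omega}:=\Omega_i(S_{ij}-\tilde S_{ij}) - (S_{ij}-\tilde S_{ij})\Omega_j,
\end{equation*}
where the inhomogeneous parts $-(\Omega_i-\Omega_j)$ arising from $S_{ij}$ and $\tilde S_{ij}$ have already cancelled. Taking $\tr(\mathcal{I}_{\Omega}(S_{ij}-\tilde S_{ij})^\top)$ vanishes by the same skew-symmetric-times-symmetric argument used in Step A; taking $\tr(\mathcal{I}_{\Omega})$ and using $\tr(\Omega_i-\Omega_j)=0$ yields $\tr((\Omega_i-\Omega_j)(S_{ij}-\tilde S_{ij}))$, bounded by $\sqrt{n}\,\mathcal{D}(\Omega)\,d(U,\tilde U)$. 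Combining this with the real-valued versions of \eqref{BB-0} and \eqref{BB-14} (where the $4\sqrt{n}\,\mathcal{S}(V,\tilde V)$ and $4\sqrt{n}\,\mathcal{S}(V,\tilde V)$ terms disappear thanks to $c_{ij}=c_{ji}$) and summing produces the differential inequality for $\mathcal{F}$ with leading coefficient $-\kappa\bigl(2m-\max\{\mathcal{D}(\Omega),\mathcal{D}(\Psi)\}/\kappa\bigr)$ and the stated polynomial corrections $\kappa(4n+22)\mathcal{L}\mathcal{F}+20\kappa\mathcal{L}^2\mathcal{F}$.

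The main technical obstacle will be the bookkeeping in Step~A: the Hamiltonian contribution to $\frac{d}{dt}\mathcal{S}(\mathcal{U})$ is a priori only $\sqrt{n}\,\mathcal{D}(\Omega)\,\mathcal{D}(\mathcal{U})$ rather than a pure constant, so to absorb it into the stated $2(1+3\sqrt{n})\,\mathcal{D}(\Omega)$ one must use Young's inequality judiciously, trading this term for a slight enlargement of the coefficients of $\mathcal{L}^2$ and $\mathcal{L}^3$ that are still absorbed into the existing polynomial structure $\kappa(4n+9)\mathcal{L}^2+\kappa(2n+8/3)\mathcal{L}^3$. Once the constants are handled, the remaining computations are mechanical repetitions of the manipulations in Appendices \ref{sec:app.A} and \ref{sec:app.B}.
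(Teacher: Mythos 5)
Your overall strategy coincides with the paper's: Lemma \ref{LC.1} is proved there by rerunning the computations of Appendices \ref{sec:app.A} and \ref{sec:app.B} in the real orthogonal setting, where the symmetry $d_{ij}=d_{ji}$, $c_{ij}=c_{ji}$ kills the antisymmetric combinations that produced $\mathcal I_{15}$, $\mathcal I_{25}$, $\mathcal I_{45}$, $\mathcal I_{56}$ (hence the improved damping $-2\kappa m\mathcal L$ and the disappearance of the $8\sqrt n$ in the $\mathcal F$-inequality), and where the skew-symmetry of $\Omega_j,\Psi_j$ disposes of the free-flow terms through the identity $\tr(\textup{skew}\times\textup{symmetric})=0$. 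Your derivation of the matrix ODE for $S_{ij}$, the cancellation of the inhomogeneous parts $-(\Omega_i-\Omega_j)$ in the equation for $S_{ij}-\tilde S_{ij}$, and the structure of both target inequalities are all correct and match the paper.

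There is, however, a genuine gap in your resolution of what you call the main technical obstacle. You propose to absorb the Hamiltonian contribution $\sqrt n\,\mathcal D(\Omega)\,\mathcal D(\mathcal U)$ to $\frac{\d}{\dt}\mathcal S(\mathcal U)$ into the polynomial terms via Young's inequality. This cannot produce the stated inequality: any Young-type splitting of $\mathcal D(\Omega)\,\mathcal D(\mathcal U)$ either yields a quadratic term whose coefficient carries $\mathcal D(\Omega)$ rather than $\kappa$ (not absorbable into $\kappa(4n+9)\mathcal L^2$ unless one assumes $\mathcal D(\Omega)\lesssim\kappa$, which is not a hypothesis of the lemma), or, if you weight by $\kappa$, leaves a residual constant of size $\mathcal D(\Omega)^2/\kappa$, which is not dominated by $2(1+3\sqrt n)\,\mathcal D(\Omega)$ uniformly in $\kappa$ and $\Omega$. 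Note also that the coefficients $(4n+9)$ and $(2n+8/3)$ in \eqref{Y-15} are exactly those of the homogeneous estimate \eqref{Y-8}, so the frequency terms must contribute \emph{only} to the additive constant, never to the polynomial part. The correct and simpler move---the one implicit in the paper's unitary analogue \eqref{Z-62}---is compactness: $\|S_{ij}\|_\tF=\|U_i-U_j\|_\tF\le 2\sqrt n$ on $\mathbf{SO}(n)$, so $|\tr((\Omega_i-\Omega_j)S_{ij})|$ is bounded by a pure constant multiple of $\mathcal D(\Omega)$; since moreover $\tr(\Omega_i-\Omega_j)=0$ for real skew matrices (unlike $\tr(H_i-H_j)$ in the Hermitian case), the resulting constant is even smaller than the unitary constant, and the stated bound $2(1+3\sqrt n)\mathcal D(\Omega)$ holds with room to spare. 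A secondary point: your honest Frobenius--Frobenius bounds yield $\sqrt n\,\mathcal D(\Omega)\,d(U,\tilde U)$ in the $\mathcal F$-estimate, whereas \eqref{Y-15} (like \eqref{Z-55}) carries $\max\{\mathcal D(\Omega),\mathcal D(\Psi)\}$ with no $\sqrt n$; to reproduce the statement literally you must use the same trace estimate the paper applies to $\mathcal I_{52}$ in Appendix \ref{sec:app.B}, namely $|\tr(AB)|\le\|A\|_\tF\|B\|_\infty$ under the paper's norm convention, rather than Cauchy--Schwarz in the Frobenius norm.
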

\begin{proof}
For the inequality for $\mathcal L$, it directly follows from \eqref{Y-8} in Theorem \ref{TC.1}. Similarly, if we closely follow a proof of Lemma \ref{L4.2} presented in Appendix \ref{sec:app.B}, then we find the desired inequality for $\mathcal F$. 
\end{proof} 
By applying $\eqref{Y-15}_2$, we establish the desired  practical aggregation estimate for \eqref{Y-10}. 
\begin{proposition} \label{PC.1} 
Suppose that the system parameters and initial data satisfy 
\begin{equation} \label{Y-18} 
\mathcal D(\Omega) \geq \mathcal D(\Psi),\quad \kp > \kp_\textup{c},\quad \mathcal L^0 < \nu_2,
\end{equation}
where $\kp_\textup{c}$ and $\nu_2$ are specified in \eqref{Y-24} and \eqref{Y-25}, respectively, and let $\{(U_i,V_i)\}$ be a solution to \eqref{Y-10}. Then, one has
\begin{equation*}
\lim_{\kp \to\infty} \limsup_{t\to\infty} \mathcal L(t) = 0.
\end{equation*}

\end{proposition}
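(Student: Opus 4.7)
The plan is to adapt the argument of Proposition \ref{P4.1} to the differential inequality $\eqref{Y-15}_1$, exploiting the fact that the real-valued Frobenius symmetry \eqref{Y-2} eliminates the boundary terms of order $\sqrt n$ that appeared in $\mathcal I_{15}$ and $\mathcal I_{25}$ of Appendix \ref{sec:app.A}. Consequently, the coefficient of the linear term improves from $-2\kp(m-4\sqrt n)$ to $-2\kp m$, and the assumption $\mathcal D(\Omega) \geq \mathcal D(\Psi)$ lets us absorb the heterogeneity of $\Psi_j$ into that of $\Omega_j$. Neither change alters the structure of the argument, so only the explicit cubic and its critical point must be recomputed.

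First I would introduce the auxiliary cubic
\begin{equation} \label{Y-23}
g(s) := 2ms - (4n+9)s^2 - \left(2n + \frac{8}{3}\right)s^3,
\end{equation}
rewrite $\eqref{Y-15}_1$ as $\dot{\mathcal L} \leq \kp\bigl(2(1+3\sqrt n)\mathcal D(\Omega)/\kp - g(\mathcal L)\bigr)$, and locate the unique positive critical point of $g$ by solving $g'(s_*) = 0$, which yields
\[
s_* = \frac{-(4n+9) + \sqrt{(4n+9)^2 + 2m(6n+8)}}{6n+8}.
\]
The condition
\begin{equation} \label{Y-24}
\kp > \kp_\textup{c} := \frac{2(1+3\sqrt n)\mathcal D(\Omega)}{g(s_*)}
\end{equation}
then guarantees that the horizontal line at height $2(1+3\sqrt n)\mathcal D(\Omega)/\kp$ intersects the graph of $g$ at two positive abscissae $0 < \nu_1 < \nu_2$, together with one negative abscissa $\nu_0 < 0$. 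In particular, the quantity $\nu_2$ appearing in \eqref{Y-18} is taken to be the largest positive root of
\begin{equation} \label{Y-25}
g(\nu_2) = \frac{2(1+3\sqrt n)\mathcal D(\Omega)}{\kp}.
\end{equation}

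The concluding step is a standard invariance-and-trapping argument. Since the right-hand side of the differential inequality is strictly negative on $(\nu_1, \nu_2)$ and $\mathcal L^0 < \nu_2$ by hypothesis, there is a finite entrance time $T_\ast$ after which $\mathcal L(t) \in [0, \nu_1]$. Continuous dependence of the roots on $\kp$, together with $g(0) = 0$ and $g'(0) = 2m > 0$, then forces $\nu_1(\kp) \to 0$ as $\kp \to \infty$, which delivers $\lim_{\kp\to\infty}\limsup_{t\to\infty}\mathcal L(t) = 0$. The main obstacle I anticipate is the sign bookkeeping on the four intervals determined by the three roots $\nu_0 < 0 < \nu_1 < \nu_2$ of the cubic $g - 2(1+3\sqrt n)\mathcal D(\Omega)/\kp$: one must verify that the orbit crosses monotonically from $(\nu_1, \nu_2)$ into $[0, \nu_1]$ and then cannot escape. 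This is routine but must be executed carefully to avoid being trapped at the wrong root.
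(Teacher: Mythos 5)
Your proposal is correct and follows essentially the same route as the paper: the same auxiliary cubic $g(s)=2ms-(4n+9)s^2-\bigl(2n+\tfrac{8}{3}\bigr)s^3$, the same definitions of $\kp_\textup{c}$ via the critical point $s_*$ and of $\nu_2$ as the largest positive root of $g(s)=2(1+3\sqrt n)\mathcal D(\Omega)/\kp$, and the same trapping argument giving a finite entrance time into $[0,\nu_1]$ followed by $\nu_1(\kp)\to 0$ as $\kp\to\infty$. The only cosmetic difference is that the paper defers the computation of $s_*$ and $\kp_\textup{c}$ to a remark after the proposition, while you fold it into the proof itself.
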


\begin{proof}
We introduce an auxiliary polynomial:
\begin{equation} \label{Y-19}
g(s):= 2 m s - (4n+9)s^2 - \left( 2n+\frac83\right) s^3.
\end{equation}
Then, we notice that $g$ has three roots, say, $\alpha_0 <0 <\alpha_1$. Since $\eqref{Y-15}_2$ is rewritten as
\begin{equation*}
\dot{ \mathcal L} \leq \kp \left(   \frac{2(1+3\sqrt n) \mathcal D(\mathcal H)}{\kp} - g(\mathcal L)\right) =:\kp p(s),
\end{equation*}
for a sufficiently large $\kp>0$, $p$ admits one negative root, say, $\nu_0<0$ and two  positive roots, say, $0<\nu_1<\nu_2$ with continuous dependence on $\kp$: for $\nu_1=\nu_1(\kp)$ and $\nu_2=\nu_2(\kp)$, 
\begin{equation*}
\lim_{\kp\to\infty} \nu_1(\kp) = 0,\quad  \lim_{\kp \to\infty} \nu_2(\kp) = \alpha_1. 
\end{equation*}
Since we assume $\eqref{Y-18}_3$, it follows from dynamical systems theory that  there exists a finite entrance time $T_*>0$ such that
\begin{equation*}
\mathcal L(t) < \nu_1,\quad t>T_*.
\end{equation*}
Hence, this yields the desired result. 
\end{proof} 
\begin{remark}
For the explicit value of $\kp_\textup{c}$, we see that the cubic polynomial $g$ in \eqref{Y-19} admits the local maximum at $s=s_*$:
\begin{equation*}
s_* = \frac{ -(4n+9) + \sqrt{ (4n+9)^2 + 4m(3n+4)}}{6n+8}.
\end{equation*} 
Hence, $\kp_\textup{c}$ is chosen to be
\begin{equation} \label{Y-24}
\kp_\textup{c} := \frac{ 2(1+3\sqrt n)\mathcal D(\Omega)}{g(s_*)} \quad \textup{so that}\quad g(s_*) > \frac{2(1+3\sqrt n)}{\kp_\textup{c}}.
\end{equation}
For this $\kp_\textup{c}$, $\nu_2=\nu_2(\kp)$ is completely determined as the largest positive root of 
\begin{equation} \label{Y-25}
p(s) =0 \quad \textup{for $\kp>\kp_\textup{c}$}. 
\end{equation}
\end{remark}

\vspace{0.5cm}

It follows from Proposition \ref{P5.1} that we can make $\mathcal L$ sufficiently small by increasing the value of $\kp$ for $t>T_*$. Thus, there exists $\kp_\textup{p} >\kp_\textup{c}$ such that for $\kp>\kp_\textup{p}$, 
\begin{equation*}
(4n+22)\mathcal L + 20\mathcal L^2 < \frac12 \left( 2m -\frac{\max\{ \mathcal D(\Omega),\mathcal D(\Psi)\}}{\kp} \right) =: \frac12\Lambda. 
\end{equation*}
Hence, $\eqref{Y-15}_2$ yields
\begin{equation*}
\frac{d \mathcal F}{dt} \leq -\frac12\Lambda \mathcal F,\quad t>T_*. 
\end{equation*}
Since we have established the zero convergence of $\mathcal F$, we can obtain the same result in Theorem \ref{T4.2}. Thanks to exactly the same proof, we only state the results. 

\begin{theorem} \label{TC.2}
Suppose that the system parameters and initial data satisfy 
\begin{equation*}
n\geq m,\quad \mathcal D(\Omega) \geq \mathcal D(\Psi),\quad \kp>\kp_\textup{p},\quad \max\{\mathcal L^0,\tilde{\mathcal L}^0\} \leq \nu_2,
\end{equation*}
and let $\{(U_i,V_i)\}$ and $\{(\tilde U_i,\tilde V_i)\}$ be any two solutions to \eqref{Y-10}. Then, the following assertions hold. \newline
\begin{enumerate}
\item
The aggregation functional $\mathcal F$ converges to zero exponentially. 
\item
The normalized velocities $\dot U_i U_i^\top$ and $\tilde{\dot U}_i \tilde U_i^\top$ synchronize:
\begin{equation*}
\| \dot U_i U_i^\top - \tilde{\dot U}_i \tilde U_i^\top \|_\tF \leq 2\kp ( m+n) \mathcal F. 
\end{equation*}
\item
There exist special orthogonal matrices $X_\infty \in \SOn$ and $Y_\infty \in \SOm$  such that
\begin{align*}
& \lim_{t\to\infty} U_i^\top(t) \tilde U_i(t) =X_\infty,\quad \lim_{t\to\infty} \| \tilde U_i (t)- U_i(t) X_\infty \|_\tF = 0, \\
&\lim_{t\to\infty} V_i^\top(t) \tilde V_i(t) =Y_\infty,\quad \lim_{t\to\infty} \| \tilde V_i (t)- V_i(t) Y_\infty \|_\tF = 0. 
\end{align*}
\item
System \eqref{Y-10} exhibits asymptotic phase-locking: for any indices $i$ and $j$,
\begin{equation*}
\exists~~\lim_{t\to\infty} U_i(t) U_j^\top (t) \quad \textup{and}\quad  \exists~~\lim_{t\to\infty} V_i(t) V_j^\top (t).
\end{equation*}
\end{enumerate}
\end{theorem}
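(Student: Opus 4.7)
The plan is to mirror the strategy used in the proof of Theorem \ref{T4.2}, since the DSOM model \eqref{Y-10} enjoys exactly the same structural features (autonomy, bi-invariant Frobenius structure, bilinear cubic interactions) as the DUM model \eqref{Z-0}, with the simplifying symmetry $d_{ij}=d_{ji}$, $c_{ij}=c_{ji}$ inherited from \eqref{Y-2}. The two workhorses are the dissipative inequality for $\mathcal F$ in Lemma \ref{LC.1} and the practical aggregation estimate of Proposition \ref{PC.1}.

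For assertion (i), I would combine $\eqref{Y-15}_2$ with Proposition \ref{PC.1}: choose $\kp>\kp_\textup{p}>\kp_\textup{c}$ so that on $[T_*,\infty)$ one has $\mathcal L(t)<\nu_1$ small enough to enforce $\kp(4n+22)\mathcal L(t) + 20\kp\mathcal L(t)^2 < \frac{\Lambda\kp}{2}$, where $\Lambda := 2m - \kp^{-1}\max\{\mathcal D(\Omega),\mathcal D(\Psi)\}$. Plugging this into $\eqref{Y-15}_2$ gives $\dot{\mathcal F}(t)\le -\tfrac{\Lambda\kp}{2}\mathcal F(t)$ for $t>T_*$, hence exponential decay. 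For assertion (ii), I would write out the defining ODE of $\dot U_j U_j^\top - \dot{\tilde U}_j\tilde U_j^\top$ in analogy with \eqref{Z-79} (using the real-valued analogue obtained by replacing $^\dagger$ with $^\top$ and dropping the $\mi$ factor from the natural frequency term when one compares two solutions with the same $\Omega_j$, exactly as done in \eqref{Z-79}); adding and subtracting intermediate products and using the bounds $|d_{ij}|\le n$, $|c_{ij}|\le m$, one controls each summand by $(m+n)\mathcal F$, which yields the stated bound $\|\dot U_jU_j^\top - \dot{\tilde U}_j\tilde U_j^\top\|_\tF \le 2\kp(m+n)\mathcal F$. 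Together with (i) this forces the velocities to synchronize.

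For assertion (iii), I would observe that $\|\frac{d}{ds}(U_i^\top\tilde U_i)\|_\tF = \|\dot U_iU_i^\top - \dot{\tilde U}_i\tilde U_i^\top\|_\tF$ (since $\|\cdot\|_\tF$ is bi-invariant and $U_i\in\SOn$), so by (ii) this quantity is $L^1$-integrable on $\bbr_+$. Hence
\begin{equation*}
\lim_{t\to\infty} U_i^\top(t)\tilde U_i(t) = U_i^{0,\top}\tilde U_i^0 + \int_0^\infty \frac{d}{ds}(U_i^\top\tilde U_i)(s)\,ds
\end{equation*}
exists; closedness of $\SOn$ secures that the limit $X_\infty$ lies in $\SOn$, and the fact that $d(U,\tilde U)(t)\to 0$ implies the limit is independent of $i$. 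The companion statement for $Y_\infty\in\SOm$ is identical. The norm convergence $\|\tilde U_i(t) - U_i(t)X_\infty\|_\tF\to 0$ follows by left-multiplying $U_i^\top\tilde U_i\to X_\infty$ by $U_i$ (an isometry of the Frobenius norm).

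For assertion (iv), I would exploit autonomy of \eqref{Y-10}: for any $T>0$, the shifted configuration $\tilde U_i(t):=U_i(t+T)$, $\tilde V_i(t):=V_i(t+T)$ is also a solution with the same total aggregation bound, so (i) applies and gives $d(U,\tilde U)(t)+d(V,\tilde V)(t)\to 0$. Discretizing $t$ and $T$ to integers shows that $\{U_i(n)U_j^\top(n)\}_{n\in\bbz_+}$ and $\{V_i(n)V_j^\top(n)\}_{n\in\bbz_+}$ are Cauchy in the complete spaces $\SOn$ and $\SOm$ respectively, giving the existence of the two limits. The main obstacle is purely notational bookkeeping — tracking all the cubic cross-terms in the velocity-difference computation — but thanks to the symmetry \eqref{Y-2} there are fewer residual terms than in the unitary case, so no dimensional constraint analogous to $m>4\sqrt n$ is needed; aside from that, every step is a direct transcription of the arguments in Section \ref{sec:4.2} and Appendix \ref{sec:app.B} with $^\dg$ replaced by $^\top$ and $\mi H_j$ replaced by $\Omega_j$.
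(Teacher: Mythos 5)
Your proposal is correct and follows essentially the same route as the paper: the paper proves assertion (i) by combining $\eqref{Y-15}_2$ of Lemma \ref{LC.1} with Proposition \ref{PC.1} to get $\dot{\mathcal F}\leq -\tfrac{\Lambda\kp}{2}\mathcal F$ for $t>T_*$, and then explicitly declines to rewrite assertions (ii)--(iv), stating that they follow by ``exactly the same proof'' as Theorem \ref{T4.2} with $^\dg$ replaced by $^\top$ and the Hamiltonian terms cancelling in the velocity differences. Your transcription of that argument (velocity bound $2\kp(m+n)\mathcal F$, integrability giving the common limit in the closed group $\SOn$, and the autonomy/time-shift Cauchy argument) is precisely what the paper intends, including the observation that the symmetry \eqref{Y-2} removes the $m>4\sqrt n$ restriction.
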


\end{document}